\documentclass[11]{article}

\pdfoutput=1
	\usepackage[colorlinks,citecolor=blue,urlcolor=blue,linkcolor = blue]{hyperref}
	%\usepackage[authoryear,round]{natbib}
	%\usepackage[numbers]{natbib}
	%\mathbb{R}enewcommand\citet[1]{%
	%	\citeauthor{#1}~[\citeyear{#1}]}
	\newcommand{\red}[1]{{\textcolor{black}{#1}}}	
	\usepackage[comma,sort&compress]{natbib}
	\usepackage{amssymb, amsmath,bm,graphicx}
	\usepackage{setspace}
	\usepackage{mathrsfs,amsthm}
	\usepackage[normalem]{ulem}
	\usepackage{subcaption}
	\usepackage{comment}
	\usepackage{times,bm}
	\usepackage{multirow}
	\usepackage{anyfontsize}
	\usepackage{graphicx,psfrag,epsf}
	\usepackage{float}
	\usepackage[font=small,labelfont=bf]{caption}
	\usepackage{tabularx}
	\usepackage{url}
	\usepackage{booktabs}
	\usepackage{hyperref}
	\usepackage{color}
	\usepackage{xcolor}
	\usepackage{caption}
	\usepackage{bm}
	\usepackage{bbm}
	\usepackage[mathscr]{euscript}
	\usepackage{bigstrut,enumerate}
	\usepackage[utf8]{inputenc}
	\usepackage[ruled,vlined]{algorithm2e}
	\usepackage{multirow,pkgfile}
	\renewcommand{\liminf}{\varliminf}
	\renewcommand{\limsup}{\varlimsup}
	
	\doublespacing
	%------------------------------------------------------------------------

	\newtheorem{theorem}{Theorem}
	
	\newtheorem{lemma}{Lemma}
	\newtheorem{proposition}{Proposition}	
	
	\newtheorem{remark}{Remark}
	\newtheorem{corollary}{Corollary}

	\newcommand{\mbb}[1]{\mathbb{#1}}
	
	\newcommand{\mc}[1]{\mathcal{#1}}

	\DeclareMathOperator*{\argmin}{arg\,min}

\usepackage{color,soul}	
		
	\newcommand{\blind}{1}
	
\addtolength{\textwidth}{1in} \addtolength{\oddsidemargin}{-0.5in}
\addtolength{\textheight}{1.2in} \addtolength{\topmargin}{-0.62in}

\theoremstyle{definition}
\newtheorem{defn}{Definition}

\newtheorem{assumption}{Assumption}

\SetKwInOut{Parameter}{Parameter}

\begin{document}
\if1\blind
{
\title{\vspace{-2.5cm}A Nearest-Neighbor Based Nonparametric Test for Viral Remodeling in Heterogeneous Single-Cell Proteomic Data}
\author{$\text{Trambak Banerjee}^{1}, \text{Bhaswar B. Bhattacharya}^{2} \text{ and Gourab Mukherjee}^3$\\
$^{1}$University of Kansas, $^{2}$University of Pennsylvania and $^{3}$University of Southern California}
\date{}
\footnotetext[3]{The research here was partially supported by NSF DMS-1811866.}
\footnotetext[3]{Corresponding author: gmukherj@marshall.usc.edu}
\maketitle
\vspace{-1.0cm}
} \fi

\if0\blind
{
\title{\vspace{-2.5cm}{Nonparametric Two Sample Testing Under Heterogeneity With Applications to Single-Cell Virology}}
\date{}
\maketitle
\vspace{-1.0cm}
%\medskip
} \fi
\begin{abstract}
An important problem in contemporary immunology studies based on single-cell protein expression data is to determine whether cellular expressions are remodeled post infection by a pathogen. 
One natural approach for detecting such changes is to use nonparametric two-sample statistical tests.  However, in single-cell studies, direct application of these tests is often inadequate,  because single-cell level expression data from processed uninfected population\red{s} often contain attributes of several latent subpopulations with highly heterogeneous characteristics. As a result, viruses often infect these different subpopulations at different rates in which case the traditional nonparametric two-sample tests for checking similarity in distributions are no longer conservative. In this paper, we propose a new nonparametric method for \textit{Testing Remodeling Under Heterogeneity} (TRUH) that can accurately detect changes in the infected samples compared to possibly heterogeneous uninfected samples.  
Our testing framework is based on composite nulls and is designed to allow the null model to encompass the possibility that the infected samples, though unaltered by the virus, might be dominantly arising from under-represented subpopulations in the baseline data. The {TRUH} statistic, which uses nearest neighbor projections of the infected samples into the baseline uninfected population, is calibrated using a novel bootstrap algorithm. %and obtain encouraging results in comparison to popular competing testing procedures. 
We demonstrate the non-asymptotic performance of the test via simulation experiments, and also derive the large sample limit of the test statistic, which provides theoretical support towards consistent asymptotic calibration of the test. We use the TRUH statistic for studying remodeling in tonsillar T cells under different types of HIV infection and find that unlike traditional tests which do not have any heterogeneity correction, TRUH based statistical inference conforms to the biologically validated immunological theories on HIV infection. %Moreover, using convergence properties of nearest neighbor graphs, we derive the asymptotic behavior of the test statistic, which provides theoretical support towards consistent asymptotic calibration of the test.
\end{abstract}

\noindent \textbf{Keywords:\/} single-cell virology, immunology, two-sample tests, viral remodeling, homogeneous Poisson process, nearest neighbors, HIV infection, mass cytometry. 

%\blfootnote{Supported in part by grants NSF DMS-1811866, NSF DMS-1811405, NSF DMS-1713120 and NIH R01 EB021707. }
\newpage
\section{Introduction}
In many contemporary scientific methodologies, it is extremely difficult, even in well-regulated laboratory experiments, to simultaneously control the multitude of factors that give rise to heterogeneity in the population (Chapter 3 of \citet{holmes2018modern}).
Nevertheless, these experiments are very powerful, and are often our only recourse to study several interesting biological phenomena. %Thus, they are widely used in modern scientific and medical research. 
For example, in single-cell  proteomic and %gemotic
genomic studies (\citet{jiang2018single,wang2018gene,jia2017accounting,shi2017identifying}),  it is now well understood that there is high heterogeneity in cellular responses from controlled cell population. Statistical tests are often used on these datasets to determine differences between the case and control samples. The presence of heterogeneity greatly complicates statistical inference, and direct application of existing two-sample testing methods, without modulating for the latent heterogeneity in the samples, may lead to erroneous statistical decisions and scientific consequences. The problem of testing similarity in the distributions of two samples under heterogeneity arises in a host of modern immunology research set-ups where heterogeneous  protein expression datasets collected at single-cell resolution are analyzed to detect viral perturbation. 
\red{To provide a rigorous statistical hypothesis testing framework for these immunology studies, we consider a composite null hypothesis that allows mixture expression distributions in cases and controls with the mixture having same components but potentially different mixing proportions; the alternative hypothesis contains scenarios where at least one of the mixture components is actually different between the cases and the controls.}
We develop a new nonparametric testing procedure based on nearest-neighbor distances, that can accurately detect if there are differences between the case and control samples in the presence of unknown heterogeneity in the data-generation process.  We next provide the background of the problem through an immunology study on human immunodeficiency virus (HIV) infection in tonsillar cells. 

\subsection{Phenotypic Profiling of T Cells Under HIV Infection}  
\label{sec:motivation}
In single-cell immunology, phenotypic profiling of immune cells under the influence of a target virus, such as the HIV \citep{cavrois2017mass}, the varicella zoster virus (VZV) \citep{sen2014single}, or the rotavirus (RV) \citep{sen2012innate}, is a critical research endeavor. It enhances understanding of which subsets of cells are most or least  susceptible to infection, leading to new insights regarding the magnitude of viral persistence, which is crucial in the development of life saving drugs \citep{sen2015single}.  Mass cytometry based techniques \citep{Bendall11,giesen2014highly} are popularly used for generating proteomic datasets for such phenotypic analysis. These techniques can simultaneously measure around fifty protein expressions on individual cells. In this paper we provide a rigorous statistical analysis for testing if there are any HIV induced changes in the proteomic expressions of tonsillar T cells, which are a type of lymphocyte that plays a central role in the immune response, based on the dataset generated in \citep{cavrois2017mass}.

Figure \ref{fig:cytofcartoon} presents a schematic representation of the experimental set-up used  for generating single-cell level proteomic expression data of HIV infected T cells using Cytometry by Time Of Flight (CyTOF) technique.
%\footnote{The CyTOF mass cytometer analyzes single cells labeled with stable heavy metal isotopes using state-of-the art mass spectrometry techniques and has the ability to simultaneously read expression levels across multiple markers per cell at high acquisition rates without the need for compensation, thereby maximizing the per-cell information obtained from a single sample (see for example \citet{cytof}).} 
Tonsillar T cells from 4 healthy donors were infected with two variants of a HIV viral strain: \texttt{Nef rich HIV} and \texttt{Nef deficient HIV}. \texttt{Nef} (Negative Regulatory Factor) is a protein encoded by HIV which enhances virus replication in the host cell by protecting infected cells from immune surveillance. 
We study the differential impact of these two variants on the immune cells. 
The healthy cells were cultured \red{and processed into three batches} %processed and batched into three \red{}identically distributed populations 
for each donor. For each patient, one among the three batches were randomly selected and phenotyped to generate the expression data of the uninfected population, while the other 
two batches  were contaminated with the \texttt{Nef rich HIV} and the \texttt{Nef deficient HIV}, respectively, and phenotyped after $4$ days of infection.
All the batches where phenotyped using multi-parameter CyTOF panel which contained $35$ surface markers and $3$ viral markers. These are special proteins attached to the cell membrane. After leaving out dead cells from each run of the CyTOF experiment we had $38$ protein expressions for approximately $25,000$ uninfected cells. Virus infected cell in the contaminated population were marked based on the expression of the viral markers and it was found that the number of virally infected cells in the batch subjected to HIV infection was around $250$. These cells constitute the infected cell population. 
%[CHECK!]
% 
%The resulting expression pattern of these antigens on the infected cell surface often help determine cellular subpopulations that support preferential viral infection. 
%The goal in \citet{cavrois2017mass} was to conduct a mass cytometric assessment of subsets of CD4 (which is a protein found on the surface of immune cells) postive T cells that support HIV entry and viral infection in humans.
\begin{figure}[!h]
	\centering
	%	\begin{subfigure}{1\linewidth}
	% 		\centering
	\includegraphics[width=0.95\linewidth]{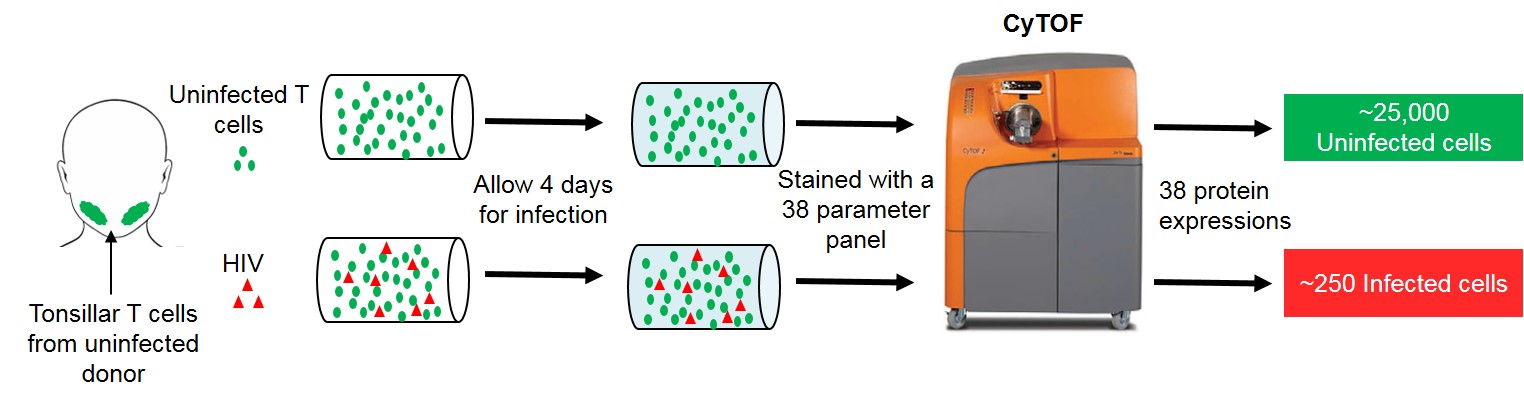}
	\caption{\small{Schematic representation of the experimental design associated with the phenotypic analysis of HIV infected CD4+ T cells using mass cytometry. Tonsillar T cells from a healthy donor (represented by green circles) are infected with the \texttt{Nef rich} or the \texttt{Nef deficient} HIV virus (represented by red triangles). These cells were then phenotyped in a $38$ parameter panel after allowing 4 days for infection. The resulting data has $38$ protein expressions for approximately $25,000$ uninfected cells and the number of virally infected cells was around $250$.} \\[-6ex]}
	\label{fig:cytofcartoon}
	% 	\end{subfigure}\\
\end{figure}
%  \begin{subfigure}{1\linewidth}
\subsection{Viral Remodeling} 

If the virus changes the expression of any of the surface markers, which are proteins attached to the cell membrane, then the cell is said to have undergone viral remodeling of its phenotypic characteristics \citep{sen2014single}.  A virally remodelled cell will have aberrant inter-cellular activities, therefore, detecting the presence of remodeling is a fundamental step towards understanding the mechanism of pathogenesis and disease progression. Detecting remodeling translates to testing if there is enough evidence in the data to reject the null hypothesis that the joint distribution of all the surface proteins is same between the uninfected and virus infected sample. A natural approach for this problem is to invoke nonparametric two-sample testing methods to see if there is enough evidence to support the alternative hypothesis that the virus has changed the distribution of least one of the subpopulations. However, for single-cell level expression data, the hypothesis test described above is particularly difficult because of the following two reasons: (a) the presence of \textit{heterogeneity} in the uninfected population, and (b) due to the phenomenon of \textit{preferential infection}. Single-cell resolution expression data from processed uninfected population often contains attributes from several latent subpopulations with highly heterogeneous  characteristics. This subpopulation level heterogeneity in the uninfected (also referred to as the control or baseline) samples can arise from varied attributes that cannot be controlled in experiments, such as differences in  the cell effector functions, trafficking and longevity \citep{cavrois2017mass}. Viruses often infect these different subpopulations at different rates. If a virus infects different subpopulation at different rates, but does not alter the marker expressions for any of the subpopulations, still the distribution of the overall viral sample will be different from the uninfected samples. In these situations, the difference in distribution between the infected and the uninfected samples is not due to \textit{viral remodeling} but due to \textit{preferential infection} (for a detailed biological explanation {see Figures 2A and 2B of \citep{cavrois2017mass}}) of the uninfected subpopulations by the virus. 
\par
Figure \ref{fig:remodelingcartoon} presents two scenarios that may arise when the cloud of infected and uninfected cells are analyzed with respect to a single marker $A$. In this toy example, Panel 1 in Figure \ref{fig:remodelingcartoon} shows that the uninfected T cells arise from three subpopulations with varying expression levels for marker $A$ which may reflect their inherent heterogeneity with respect to cell longevity. 
\begin{figure}[!t]
	\centering
	\includegraphics[width=0.85\linewidth]{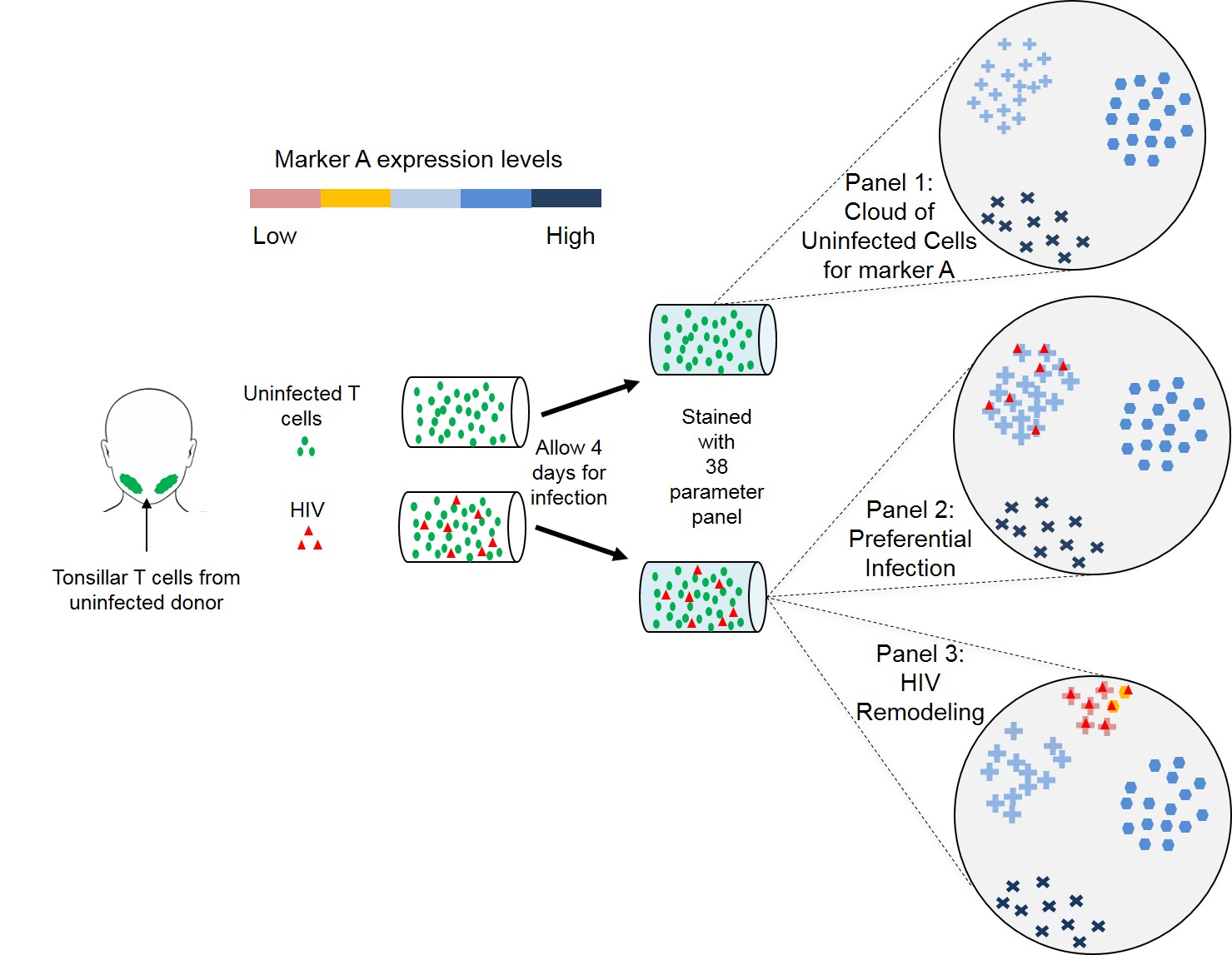}
	\caption{\small{Schematic representation of HIV remodeling of T cells with respect to a single marker $A$. Panel 1 shows that the uninfected T cells arise from three subpopulations with varying expression levels for marker $A$. Panel 2 depicts \textit{preferential infection} where the HIV preferentially infects the T cell subpopulation that has a lower expression level for marker $A$ amongst the uninfected cells and the infection does not alter the expression levels of the T cells when compared to Panel 1. Panel 3 represents \textit{HIV remodeling} where the HIV targets those uninfected cells that have low to medium expression for marker $A$ amongst the uninfected cells and alters their original expression levels upon infection, \red{which is represented by the distinct pink and yellowish shade of the infected cells.}}\\[-6ex]}
	\label{fig:remodelingcartoon}
	% \end{subfigure}
\end{figure} 
The scenario of \textit{preferential infection} is depicted in Panel 2 where the HIV preferentially infects the T cell subpopulation that has a lower expression level for marker $A$ amongst the uninfected cells. Moreover, the virus does not alter the expression levels of these infected cells when compared to Panel 1. In Panel 3, which represents \textit{HIV remodeling}, the virus targets those uninfected cells that have low to medium expression for marker $A$ amongst the uninfected cells and alters their original expression levels upon infection. The distinct pink and yellowish shade of the infected cells in panel 3 depicts their phenotypic change associated with infection. Here, we have described the phenomenon of viral remodeling only for the HIV. However, remodeling analysis is widely conducted across virology for understanding mechanism of other pathogens also. For correct scientific understanding of the viral mechanism, it is extremely important to accurately \red{distinguish} %detect 
the instances of viral remodeling from mere preferential infection. 
However, popular single-cell based segmentation and classification algorithms \citep{amir2013visne,bruggner2014automated,Linderman12,Qiu12} lack a rigorous statistical hypothesis testing framework for conducting two-sample inference and can greatly suffer in testing problems, particularly if there is high imbalance in the sizes of the uninfected (control) and infected (case) samples, which is often the situation in virology.

\subsection{Testing Procedures in Existing Literature and Statistical Challenges}
The statistical framework for testing remodeling falls under the realm of nonparametric two-sample testing. For univariate data, nonparametric two-sample tests like the Kolmogorov-Smirnov test, the Wilcoxon rank-sum test, and the Wald-Wolfowitz runs test are extremely popular and find a place in every practitioner's toolkit. Multidimensional versions of these widely used tests date back to the randomization tests of \citet{chung1958randomization} and to the generalized Kolmogorov-Smirnov test of \citet{bickel1969distribution}. Friedman and Rafsky \citep{friedman1979multivariate} proposed the first computationally efficient nonparametric two-sample test, which applies to high-dimensional data. The Friedman-Rafsky edgecount test, which can be viewed as a generalization of the univariate runs test,  computes the Euclidean minimal spanning tree (MST)\footnote{Given a finite set $S \subset \mbb{R}^d$, the {\it minimum spanning tree} (MST) of $S$ is a connected graph  with vertex-set $S$ and no cycles, which has the minimum weight, where the weight of a graph is the sum of the distances of its edges.}  of the pooled sample, and rejects the null if the number of edges with endpoints in different samples is small. Many variants of the edgecount test, based on nearest-neighbor distances and geometric graphs have been proposed over the years \cite{hall2002permutation,henze1984number,rosenbaum2005exact,schilling1986multivariate,weiss1960two}. Recently,  \citet{chen2017new} suggested novel modifications of the edge-count test for high-dimensional and object data, and  \citet{chen2018weighted} proposed new and powerful tests to deal with the issue of sample size imbalance. Asymptotic properties of two-sample tests based on geometric graphs can be studied in the general framework described in \citet{bhattacharya2019}. Other popular two-sample tests include the test of \citet{baringhaus2004new}, the energy distance test of \citet{aslan2005new}, and  the kernel based test using {maximum mean discrepancy} of  \citet{gretton2007kernel}. More recently, \citet{chen2013ensemble} address the problem of sample size imbalances in the two-sample problem by constructing an ensemble subsampling scheme for the nearest-neighbor tests \cite{henze1984number,schilling1986multivariate}. Very recently, \citet{deb_twosample_2019} and  \citet{ghosal2019multivariate} proposed distribution-free two-sample tests based on the concept of multivariate ranks, defined using optimal transport. 
\red{Methods based on nearest neighbor distances have been also used extensively in other non-parametric statistical problems, such as density estimation \cite{mack1983rate,mack1979multivariate}, nonparametric clustering \cite{heckel2015robust}, classification \cite{cannings2017local,cover1967nearest,gadat2016classification,samworth2012optimal}, entropy and other functional estimation \cite{berrett2019efficient_II,berrett2019efficient,kozachenko1987sample} and testing problems, such as testing for normality \cite{vasicek1976test}, testing for uniformity \cite{cressie1976logarithms}, and independence  testing \cite{berrett2019nonparametric,goria2005new}.}

One of the main challenges for devising a statistically correct test to detect viral remodeling from preferential infection is that the virus may infect different subpopulations at different rates. In Section \ref{sec:test}, we show that even in very large sample sizes direct application of existing nonparametric two-sample tests can lead to erroneous inference. We expound this phenomenon by exhibiting explicit scenarios of preferential infection and remodeling where traditional tests fail in a simple setting of $d=2$ markers. In Figure \ref{fig:motfig}, the green triangles correspond to a sample of uninfected (UI) cells that arise from three different subpopulations while the red dots reflect the infected (VI) cells. The leftmost panel presents a setting where the virus has infected all the three cellular subpopulations and the overlap of the UI and VI cells indicate no remodeling. The middle panel presents a scenario where the cells have undergone remodeling under the influence of the virus as is evident through a shift in the location of the VI cells. The rightmost panel reflect no remodeling but preferential infection.
The different \texttt{g-tests} \citep{chen2018weighted,chen2017new,friedman1979multivariate}, the cross-match test \citep{rosenbaum2005exact}, and the energy test \citep{aslan2005new} reject the null hypothesis of no remodeling in all the three cases, in each of the 100 simulation replications (see Table \ref{tab:exp0} in Section \ref{sim:exp1}). This is not surprising, because these tests are designed to test the simple null hypothesis of equality of the two distributions. 
\begin{figure}[!h]
	\centering
	\includegraphics[width=0.95\linewidth]{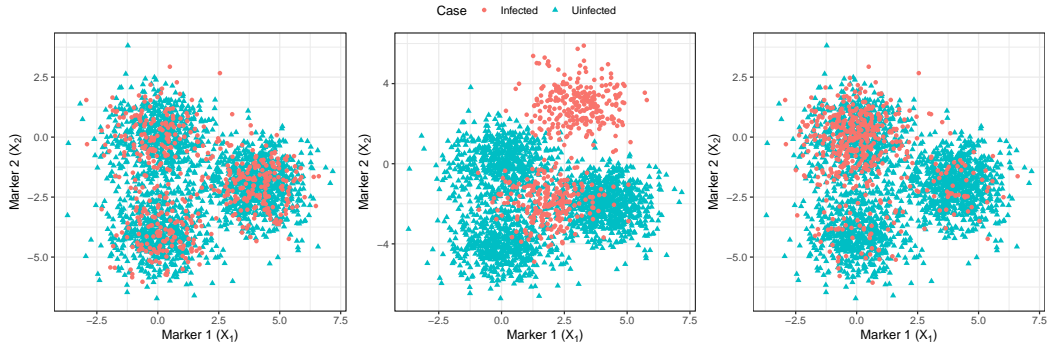}
	\caption{\small{Schematic representation of viral remodeling of infected cells versus preferential viral infection with respect to $d=2$ markers, $X_1$ and $X_2$. From left to right, we have (a) no remodeling, (b) remodeling, and  (c) no remodeling, but preferential infection. Uninfected cells are in green  whereas virus infected cells are in red.}\\[-6ex]}
	\label{fig:motfig}
\end{figure}

%This sub-population level heterogeneity in the uninfected (also referred to as the control or baseline) samples can arise from varied attributes that cannot be controlled in experiments, such as differences in  the cell effector functions, trafficking and longevity \citep{cavrois2017mass} and 
%as that might cause the two-samples to arise from the same mixture distribution but with different mixing weights. 
%This is the case for the right most sub-plot in Figure \ref{fig:motfig}. While this difference across the two-samples is natural, because virus may simply attack a particular subpopulation and not all of them,
%conventional two-sample tests are not conservative for detecting instances of no remodeling but preferential infection. For instance, for the corresponding case in Figure \ref{fig:motfig}, 

Due to the presence of subpopulation level heterogeneity the problem of testing for remodeling warrants testing a composite null hypothesis. To this end, note that under preferential infection, the two samples arise from the mixture distribution with identical component distributions but with different mixing weights. This is the case for the right most subplot in Figure \ref{fig:motfig}. In this paper, we formulate the problem of testing for preferential infection versus remodeling as a composite two-sample hypothesis with mixture distributions, and develop a new nearest-neighbor based test that can consistently and efficiently detect the differences between the two samples.

%While this difference across the two-samples is natural, because virus may simply attack a particular subpopulation and not all of them,
%often the crucial question is to determine if the two-samples differ with respect to the component mixture distributions, for in virology studies such a difference will constitute an evidence in favor of remodeling. 
%However, in the presence of heterogeneity 
%conventional two-sample tests are not conservative for detecting instances of no remodeling but preferential infection. 
%that the two-samples arise from the same data generating process but may differ with respect to the subpopulation sizes.
\subsection{The \texttt{TRUH} Testing Framework: Novel attributes and Our Contributions}

In this article, we propose a novel procedure for {\it Testing Remodeling Under Heterogeneity} (\texttt{TRUH}), that effectively incorporates the underlying heterogeneity and imbalance in the samples, and provides a conservative test for the composite null hypothesis that the two samples arise from the same mixture distribution but may differ with respect to the mixing weights. We summarize its key attributes below:

\begin{itemize}
	
	\item The \texttt{TRUH} statistic is based on a nearest neighbor approach \citep{cover1967nearest,devroye2013probabilistic} that relies on first identifying for every infected cell a predictive precursor cell which is the phenotypically closest cell in the uninfected population. It then measures the relative dissimilarities between the infected cells and their predictive precursors and the predictive precursors to their most phenotypically similar uninfected cells. A large relative dissimilarity between the infected cells and their predictive precursors indicates surface protein regulation or remodeling by the virus, while a small relative dissimilarity provides evidence for preferential infection or no remodeling. 
	
	%Note that existing nearest neighbor based tests of \citet{chen2013ensemble,henze1984number,schilling1986multivariate,weiss1960two} rely on computing proportions of the nearest neighbors within the pooled sample that belong to the same class and reject the null hypothesis of no difference between the underlying distributions when these proportions are large. The \texttt{TRUH} statistic, on the other hand, is fundamentally different from these tests as it is sensitive to the phenotypic differences between the infected cells and their predicted precursors which is key to identifying instances of remodeling. 
	
	\item We describe an efficient bootstrap based approach for calibrating the \texttt{TRUH} test statistic, and evaluate its performance in finite-sample simulations.  We then use this method to test for viral remodeling in tonsillar T cells under different types of HIV infection, corroborating the efficacy of our proposed procedure.
	
	\item  We provide an extensive theoretical understanding of the large sample characteristics of our proposed test statistics. We establish the $L_2$-limit of our proposed statistic using asymptotic properties of functionals of random geometric graphs \cite{py}. The limit can be expressed in terms of the densities of the uninfected and infected populations and dimension dependent constants obtained from nearest-neighbor distances defined on a homogeneous Poisson process. Using these properties, we can select a cut-off for the \texttt{TRUH} statistic that is asymptotically consistent against biologically-relevant location alternatives.  Traditional nonparametric tests enjoy these consistency properties in homogeneous populations but not under heterogeniety. We show that using a nearest neighbor based approach this inefficiency of existing nonparametric tests in heterogeneous data can be mitigated.

	%but are currently missing in heterogeneous populations. In particular, the slide test which has been used in \citep{cavrois2017mass,sen2014single} has no theoretical support. 

	%Using this we can select a cut-off for the \texttt{TRUH} statistic that is
	
\end{itemize}

The rest of the paper is organized as follows: In Section \ref{sec:test}, we formulate the problem of testing for remodeling in single-cell virology as a heterogeneous two-sample problem, describe the \texttt{TRUH} framework, and show how it can be calibrated using the bootstrap. {Numerical experiments demonstrating the non-asymptotic performance of our testing procedure are given in Section \ref{sec:sims}. In Section \ref{sec:realdata} we use \texttt{TRUH} for studying remodeling in tonsillar T cells under different types of HIV infection.}
%Numerical experiments and a real data example from single cell virology demonstrating the performance of our testing procedure are discussed in Section \ref{sec:sims} and Section \ref{sec:realdata}, respectively. 
The asymptotic properties of the test statistic are discussed in Section \ref{sec:Tnm_limit}. 
We conclude the paper in Section \ref{sec:discuss} with a discussion. The technical details and proofs of the theoretical results are given in the supplementary materials. 

\section{Statistical Framework and the Proposed \texttt{TRUH} Statistic}
\label{sec:test}
In this section we formulate the problem of testing for remodeling in single-cell virology as a heterogeneous two-sample problem (Section \ref{sec:H0H1}), introduce the \texttt{TRUH} statistic (Section \ref{sec:Tnm}), and discuss how to calibrate it using the bootstrap (Section \ref{sec:bootstrap}). 
%and derive its asymptotic properties (Section \ref{sec:Tnm_limit}).

\subsection{The Heterogeneous Two-Sample Problem}
\label{sec:H0H1}
In our virology example, the baseline constitutes the $m$ uninfected cells. For each cell, $i \in \{1,\ldots m\}$, we denote by  $U_i$ a $d$-dimensional vector of cellular characteristics typically measuring expressions corresponding to different genes or proteins. Denote the uninfected/baseline population by $\bm U_m=\{U_1,\ldots,U_m\}$. Let $F_0$ be the cumulative distribution function (cdf) of the baseline population with
the heterogeneity in the population being reflected by $K$ different subgroups each having unimodal distributions with distinct modes and cdfs $F_{1}, \ldots, F_K$,  and mixing proportions $w_{1}, \ldots, w_K$, such that 
\begin{align}\label{eq:setup-1}
{F}_0 = \sum_{a=1}^K w_{a} \, F_{a}, \quad \text{ where } \quad w_{a} \in (0,1) \text{ and } \sum_{a=1}^K w_{a}=1.
\end{align}
Note that, the number of components $K$, the mixing distributions $F_1,\ldots, F_K$, and the mixing weights $w_1,\ldots, w_K$ are fixed (non-random) attributes, which are unknown. Also, as $F_{1}, \ldots, F_K$ are cdfs from unimodal distributions with distinct modes, ${F}_0$ is  well-defined with a unique specification. 
%i.e,  ${F}_0 \neq \sum_{a=1}^K \tilde w_{a} \, F_{a}$  if $\tilde w_{a}\neq w_a$ for at least one $a \in \{1,\ldots,K\}$. 
In addition to the uninfected population, we observe $n$ i.i.d. infected observations $\bm V_n=\{V_1,\ldots, V_n\}$ from  a distribution function $G$ in $\mbb{R}^d$.  Note that, the infected and uninfected samples $\bm U_m$ and $\bm V_n$ are collected from separate experiments and are independent of each other. 

\paragraph{Simple versus Composite Null}
In single-cell virology when an uninfected population is exposed to a pathogen, the virus may infect the different subpopulations at different rates. Therefore, even if the virus does not cause any change in the cellular characteristics, the virus infected sample might have different representations of the uninfected subpopulations than the uninfected mixing proportions $\{w_1, \ldots, w_K\}$. As such, it is quite possible that a few of uninfected subpopulations are completely absent in the viral population, which biologically implies that the virus preferentially targets few cellular subpopulations. Thus, if the virus does not induce any change in the cellular characteristics, then the distribution of the infected population $G$ lies in a class of distributions  $\mathcal{F}(F_0)$ that contains any convex combination of $\{F_1, \ldots, F_K\}$ including the boundaries, that is, 
\begin{align}
\label{eq:setup-2}
\mathcal{F}(F_0) = \left\{Q = \sum_{a=1}^K \lambda_a \, F_a: \lambda_1, \lambda_2, \ldots, \lambda_K \in [0,1] \text{ and } \sum_{a=1}^K \lambda_a=1 \right\}.
\end{align} 
Note that the uninfected cdf $F_0$ is a particular member of the class $\mathcal{F}(F_0)$. If the virus induces changes in the cellular characteristics, then the viral population distribution would contain at least one nontrivial subpopulation with distribution substantially different from $\{F_1, F_2, \ldots, F_K\}$ or their linear combinations. Thus, the test for viral remodeling  tantamounts to testing the following composite null hypothesis: 
\begin{equation}\label{eq:setup-3}
H_0: G \in \mathcal{F}(F_0) \quad \text{versus} \quad H_A: G \notin \mathcal{F}(F_0).
\end{equation}
If the null hypothesis is accepted, we say the virus exhibits {\it preferential infection}, otherwise we say the virus exhibits {\it remodeling} (see Figure \ref{fig:truh_working} below), and the hypothesis testing problem  \eqref{eq:setup-3} will be referred to as the problem of {\it testing remodeling under heterogeneity} (\texttt{TRUH}). Later on, to facilitate proofs of the theoretical properties of our proposed method, we will assume that the baseline cdfs $F_1, \ldots, F_K$ have unimodal densities $f_{1}, \ldots, f_K$ (with respect to Lebesgue measure). In this case, the baseline uninfected population will have density $f_0=\sum_{a=1}^K w_a f_a$, and the set of distributions in \eqref{eq:setup-2} can be represented in terms of the densities $f_{1}, \ldots, f_K$, and will be denoted by $\mc{F}(f_0)$.

\paragraph{Inefficiency of Existing Tests}
Traditional nonparametric graph-based two-sample tests, 
such as the edgecount (EC) test of \citet{friedman1979multivariate} or the crossmatch (CM) test of \citet{rosenbaum2005exact}, are tailored for the null hypothesis $H_0: F_0=G$, that is, testing whether the distributions of the uninfected samples  $\bm U_m$ and the infected samples $\bm V_n$ are the same. However, not surprisingly, direct application of these tests to the composite hypothesis testing problem described in \eqref{eq:setup-3} above gives non-conservative procedures. To see this, consider the EC test. Recall that the EC test is based on the statistic $\mc{R}(\bm U_m,\bm V_n)$ which counts the number of edges in the minimal spanning tree (MST)
%\footnote{Given a finite set $S \subset \R^d$, a {\it spanning tree} of $S$ is a connected graph $\cT$ with vertex-set $S$ and no cycles. The {\it length} $w(\cT)$ of $\cT$ is the sum of the Euclidean lengths of the edges of $\cT$. A {\it minimum spanning tree} (MST) of $S$, denoted by $\cT(S)$, is a spanning tree with the smallest length, that is, $w(\cT(S))\leq w(\cT)$ for all spanning trees $\cT$ of $S$.} 
of the pooled sample $\{U_1,\ldots,U_m, V_1,\ldots,V_n\}$ that connect points from different samples. Then, the EC test rejects the null hypothesis of $F_0=G$ for small values of $\mc{R}(\bm U_m,\bm V_n)$. The cut-off for $\mc{R}(\bm U_m,\bm V_n)$ can be chosen based on the asymptotic distribution $\mc{R}(\bm U_m,\bm V_n)$ under $F_0=G$, which was derived by \citet{henze1999multivariate} in the usual limiting regime where $m, n \to \infty$ and $n/m \to \rho \in (0,\infty)$. In particular, it follows from Theorem 1 of \citet{henze1999multivariate} that 
%\begin{align}\label{eq:FG}
%	\lim_{m,n \to \infty} \mathbb{P}_{F_0=G}(R_{m,n}(\bm U_m,\bm V_n)< C_{m,n}(\alpha)) = \alpha,
%\end{align}
\begin{align}\label{eq:FG}
\lim_{m,n \to \infty} \mathbb{P}_{F_0=G}(\red{\mc{R}}(\bm U_m,\bm V_n)< C_{m,n}(\alpha)) = \alpha,
\end{align}
{with $C_{m,n}(\alpha)= \frac{2 mn}{m+n}-z_{1-\alpha} \sigma_{d} \sqrt{m+n}$, where $z_{1-\alpha}$ is the $\alpha$-th quantile of the standard normal distribution, $\sigma_d^2=\rho(4 \rho +(1-\rho)^2 \delta_d)/(1+\rho)^{4}$,  and $\delta_d$ is a constant depending only on dimension $d$.} %(\textcolor{red}{BBB: There seems to be a factor of $(1-\rho)^4$ missing in the denominator.}) 
More precisely, $\delta_d$ is the  variance of the degree of the origin $\bm 0\in \mbb{R}^d$ in the minimal spanning tree built on a homogeneous Poisson process of rate $1$ in $\mathbb{R}^d$ with the origin added to it. Note that \eqref{eq:FG} shows that the test with rejection region $\{\red{\mc R}(\bm U_m,\bm V_n)< C_{m,n}(\alpha)\}$ %$\{R_{m,n}(\bm U_m,\bm V_n)< C_{m,n}(\alpha)\}$ 
is asymptotically level $\alpha$ for the null hypothesis of $F_0=G$.

%Let $CM_{n,m}(\alpha; F, G)$ denote the decision from the level-$\alpha$ crossmatch (CM) test about the null hypothesis $H_0: F_0=G$  where $F$ is the uninfected cdf and $G$ is the infected cdf.  $\textrm{CM}_{n,m}(\alpha; F, G)=1$ if the null is rejected and $0$ otherwise. 

The following proposition shows that direct application of the EC test as described above,  will not be conservative for testing the hypothesis \eqref{eq:setup-3} of viral remodeling. 
In fact, for cases of preferential infection but no remodeling the EC test will produce undesired false discoveries. 

%\begin{proposition}\label{INCONSISTENCY} Fix $\alpha \in (0,1/2)$. Then for $F_0$ as in \eqref{eq:setup-1} and for any $G \in \mathcal{F}(F_0)\setminus \{F_0\}$  in the usual limiting regime, 
%	$$ \lim_{m,n \to \infty}\mathbb{P}\big(\mbb{R}(\bm U_m,\bm V_n)< C_{m,n}(\alpha)\big)=1,$$
%with $\bm U_m=\{U_1,\ldots,U_m\}$ i.i.d. from $f_0$  and $\bm V_n=\{V_1,\ldots,V_m\}$ i.i.d. from $g$, where $f_0$ and $g$ are the densities (with respect to the Lebesgue measure) of $F_0$ and $G$, respectively. 
%\end{proposition}
\begin{proposition}\label{INCONSISTENCY} Fix $\alpha \in (0,1/2)$. Then for $F_0$ as in \eqref{eq:setup-1} and for any $G \in \mathcal{F}(F_0)\setminus \{F_0\}$  in the usual limiting regime, 
	$$ \lim_{m,n \to \infty}\mathbb{P}\big(\red{\mc{R}}(\bm U_m,\bm V_n)< C_{m,n}(\alpha)\big)=1,$$
	with $\bm U_m=\{U_1,\ldots,U_m\}$ i.i.d. from $f_0$  and $\bm V_n=\{V_1,\ldots,V_m\}$ i.i.d. from $g$, where $f_0$ and $g$ are the densities (with respect to the Lebesgue measure) of $F_0$ and $G$, respectively. 
\end{proposition}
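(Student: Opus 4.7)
The plan is to compare the asymptotic size of $\mathcal{R}(\bm U_m,\bm V_n)$ with the cut-off $C_{m,n}(\alpha)$ on a scale of $m+n$. Observe first that in the limiting regime $n/m\to\rho\in(0,\infty)$, writing $p_{m,n}=m/(m+n)$ and $q_{m,n}=n/(m+n)$ with limits $p=1/(1+\rho)$ and $q=\rho/(1+\rho)$, the cut-off satisfies
\begin{equation*}
\frac{C_{m,n}(\alpha)}{m+n} \;=\; 2p_{m,n} q_{m,n} - \frac{z_{1-\alpha}\sigma_d}{\sqrt{m+n}} \;\longrightarrow\; 2pq.
\end{equation*}
So the whole content of the proposition is to show that $\mathcal{R}(\bm U_m,\bm V_n)/(m+n)$ has an a.s.\ (or $L^1$) limit strictly below $2pq$ whenever $G\neq F_0$.

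For the limit of the edge-count statistic I would invoke the Henze--Penrose law of large numbers for the MST edge-count (\citet{henze1999multivariate}): under the regime $n/m\to\rho$, if $\bm U_m\stackrel{\text{iid}}{\sim} f_0$ and $\bm V_n\stackrel{\text{iid}}{\sim} g$ (independent samples) then
\begin{equation*}
\frac{\mathcal{R}(\bm U_m,\bm V_n)}{m+n} \;\longrightarrow\; 2pq \int \frac{f_0(x)\,g(x)}{p f_0(x)+q g(x)}\, dx \;=:\; 2pq\cdot I(f_0,g),
\end{equation*}
a.s. (the integrand is defined as $0$ wherever $pf_0+qg=0$). Note that when $g=f_0$ the integral $I(f_0,g)$ equals $1$, which is consistent with the known mean $2mn/(m+n)$ in \eqref{eq:FG}.

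The key deterministic ingredient is then the strict inequality $I(f_0,g)<1$ whenever $g\neq f_0$ on a set of positive Lebesgue measure. This is a pointwise harmonic-mean/arithmetic-mean inequality: for $a,b>0$, $ab/(pa+qb)\le pb+qa$ with equality iff $a=b$, since $(pa+qb)(p/b+q/a)\ge (p+q)^2=1$ by Cauchy--Schwarz. Applying this pointwise with $a=f_0(x)$, $b=g(x)$ and integrating gives $I(f_0,g)\le p+q=1$, with the inequality strict because $\{x:f_0(x)\neq g(x)\}$ has positive Lebesgue measure. For any $G\in\mathcal{F}(F_0)\setminus\{F_0\}$, the densities $g$ and $f_0$ differ as elements of $L^1$ (by the Radon--Nikodym theorem, since $G\neq F_0$ as c.d.f.s), so this strict inequality applies. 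Let $c:=I(f_0,g)\in[0,1)$.

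Combining the pieces,
\begin{equation*}
\frac{\mathcal{R}(\bm U_m,\bm V_n)-C_{m,n}(\alpha)}{m+n} \;\longrightarrow\; 2pq(c-1) \;<\; 0 \quad \text{a.s.,}
\end{equation*}
which immediately yields $\mathbb{P}(\mathcal{R}(\bm U_m,\bm V_n)<C_{m,n}(\alpha))\to 1$, as claimed. The main technical obstacle is establishing the strong limit law for $\mathcal{R}/(m+n)$ under arbitrary mixture alternatives; this is where I rely on the Henze--Penrose limit theorem (which, in the formulation of \citet{henze1999multivariate}, is proved via a subadditive/Poissonisation argument for functionals of geometric graphs and is valid for any pair of Lebesgue densities, so the mixture structure \eqref{eq:setup-1}--\eqref{eq:setup-2} plays no special role beyond guaranteeing $g\neq f_0$). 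Everything else is either an explicit computation with $C_{m,n}(\alpha)$ or the elementary HM--AM inequality.
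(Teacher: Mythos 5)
Your proof follows essentially the same route as the paper: both invoke the Henze--Penrose almost-sure limit of $\mc{R}(\bm U_m,\bm V_n)/(m+n)$ (your form $2pq\,I(f_0,g)$ is algebraically identical to the paper's $1-\delta(f_0,g,\rho)$, since $\delta+2pq\,I=1$), compare it with the limit $2pq=2\rho/(1+\rho)^2$ of $C_{m,n}(\alpha)/(m+n)$, and conclude from the strict inequality $I(f_0,g)<1$ for $g\neq f_0$ --- the only difference being that you prove that strict inequality directly, whereas the paper cites Remark 1 of \citet{henze1999multivariate}. One small typo in your Cauchy--Schwarz step: it should read $(pa+qb)(p/a+q/b)\ge(p+q)^2$ rather than $(pa+qb)(p/b+q/a)$, which is what gives $ab/(pa+qb)\le pb+qa$ with equality iff $a=b$.
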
  

The proof of the above result is given in the supplementary materials (Section A).  This shows that for any level $\alpha$, the EC test will be %asymptotically 
\red{inconsistent} as it would reject with certainty all cases of preferential infection but no remodeling. This phenomenon is demonstrated  in Figure~\ref{fig:ecfig} through a simple univariate simulation experiment.  Here, we consider $m=1000$, $n=50$, and $d=1$. 
\begin{figure}[!h]
	\centering
	\includegraphics[width=1\linewidth]{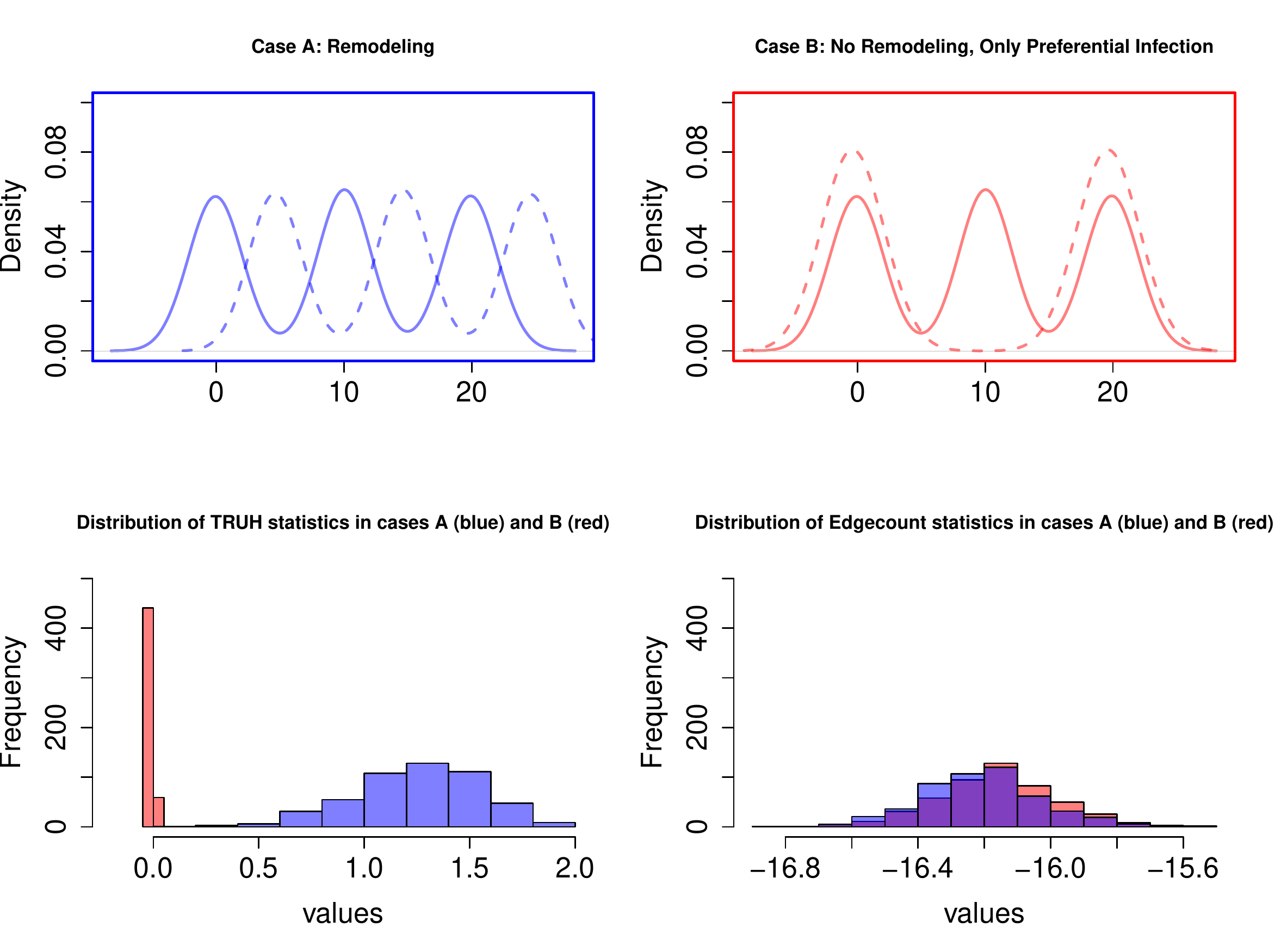}
	\caption{\small{Simulation example showing the performance of edgecount test statistic versus the \texttt{TRUH} statistic. In the top row, we describe the density of the true uninfected $F_0$ (in continuous line) and the density of the infected $G$ (in dotted line) for the two cases. In both cases, $F_0$ and $G$ are mixtures of normal distributions.
			%Simulation example showing the performance of edgecount test statistic versus \texttt{TRUH}. In the top row, we describe the true uninfected $F_0$ (in continuous line) and the infected density $G$ (in dotted line) for two cases. In both cases, $F_0$ and $G$ are mixture of normal densities.
			In the first case, all the three equiprobable subpopulations in $F_0$ have undergone a discernible location change in $G$. In case B, $F_0$ again has three equiprobable subgroups while $G$ has two of those three subgroups. Thus, while case A signifies viral remodeling, there is no remodeling but only preferential infection in Case B. In the bottom row, we have the histogram of the values of the \texttt{TRUH} statistic in the left (defined below in \eqref{eq:TNM}) and the edgecount statistic in the right, respectively, under the two cases.}}
	\label{fig:ecfig}
\end{figure} 
The true distribution of the uninfected and infected subpopulations are gaussian mixtures. 
We consider two cases: 

\begin{itemize}
	
	\item {\it Case A}:  Here, $F_0$ and $G$ are \red{equal-weighted} mixtures of 3 %equi-likely
	Gaussians, with each subpopulation in $G$ having a different mean from those in $F_0$, that is, $F_0(u)=\frac{1}{3}\sum_{a=0}^2 \Phi(u-4 a)$ and $G(u)= \frac{1}{3}\sum_{a=0}^2 \Phi(u-4a-2)$.\footnote{Throughout, $\Phi(\cdot)$ and $\phi(\cdot)$ will denote the standard normal distribution function and density function, respectively.} This is a clear case of viral remodeling. 
	
	\item {\it Case B}: Here, $F_0=\frac{1}{3}\sum_{a=0}^2 \Phi(u-10 a)$ and $G=\frac{1}{2}\sum_{a=0}^1 \Phi(u-20 a)$. In this case, there is preferential infection, but no remodeling, that is, $G\in \mathcal{F}(F_0)$ with the middle population in $F_0$ being resistant to viral infection. 
	
\end{itemize}
Any test for the hypothesis \eqref{eq:setup-3} should ideally reject Case A and fail to reject Case B. However, Figure \ref{fig:ecfig} shows that the histogram of EC test statistic values across $500$ replications under cases A and B have a significant overlap. {Table \ref{tab:edgecount_comparison} shows  the rejection rate (proportion of false discoveries) in Case B and power (proportion of true discoveries) in case A as the level of the test is varied.}  From the table it is evident that there does not exists any choice of a critical value such that the rejection rate %and power of the EC test is commendable in this example.
of the EC test in Case B is commendable as it rejects all cases of preferential infection presented under Case B. On the other hand, our proposed test statistic (\texttt{TRUH}), described in the following section, entertains possibilities where both the rejection rate and the power attain the desired limit. 

%\trf{MAKE TABLE }
%
%
%FDR (in \%)  1  5    10   50  
%
%Power EC (in \%) 0.5  2.5  4.7 39.6
%
%Power \texttt{TRUH} (in \%) 88.8 89.9 90.2 90.3 
%% Table generated by Excel2LaTeX from sheet 'Sheet1'
\begin{table}[!h]
	\centering
	\caption{ The rejection rate and the power of the edgecount and \texttt{TRUH} test statistics across $500$ repetitions of the simulation setting of Figure~\ref{fig:ecfig}.}
	\begin{tabular}{clrrrr}
		\hline
		& Level  & \multicolumn{1}{c}{0.01} & \multicolumn{1}{c}{0.05} & \multicolumn{1}{c}{0.10} & \multicolumn{1}{c}{0.20} \\
		\hline
		%\multirow{2}[0]{*}{Power in Case A} & edgecount &  0.008     &  0.024     &  0.070     &  0.120\\
		\multirow{2}[0]{*}{Power in Case A} & edgecount &  1.000     &  1.000     &  1.000     &  1.000\\
		& \texttt{TRUH}  &  1.000   &   1.000	&  1.000  & 1.000 \\
		\hline
		\multirow{2}[0]{*}{Rejection rate in Case B} & edgecount &   1.000    &  1.000     &  1.000  & 1.000 \\
		& \texttt{TRUH}  &    0.000   &   0.000	&  0.000  & 0.038 \\
		\hline
	\end{tabular}%
	\label{tab:edgecount_comparison}%
\end{table}%

\subsection{Proposed Test Statistic: \texttt{TRUH}}
\label{sec:Tnm}
In this section we describe a nearest-neighbor based statistic for testing the hypothesis of remodeling. To this end, recall that $\bm U_m=\{U_1, \ldots, U_m\}$ is the uninfected sample and $\bm V_n=\{V_1, \ldots, V_n\}$ is the infected sample.  Now, for each infected sample $V_i \in \bm V_n$, let 
\begin{align}\label{eq:D}
D_i=\min_{1 \leq j \leq m} ||V_i- U_j||,
\end{align}
the Euclidean distance of $V_i$ to its nearest point in the uninfected sample $\bm U_m$. The point in $\bm U_m$ which attains this minimum will be denoted by  $ N(V_i, \bm U_m)$\footnote{Given a finite set $S$ and any point $x \in \mbb{R}^d$, denote by  $ N(x, S)= \argmin_{y \in S} ||x-y||$,
	that is the nearest neighbor of $x$ in the set $S$. If there is a tie, that is, $N(x, S)$ has multiple elements, then we choose a random element from them and set that to $N(x, S)$.  However, if the underlying distribution of the data has a continuous density, then there are no ties with probability 1.} 
\begin{figure}[!t]
	\centering
	\includegraphics[width=4.85in]{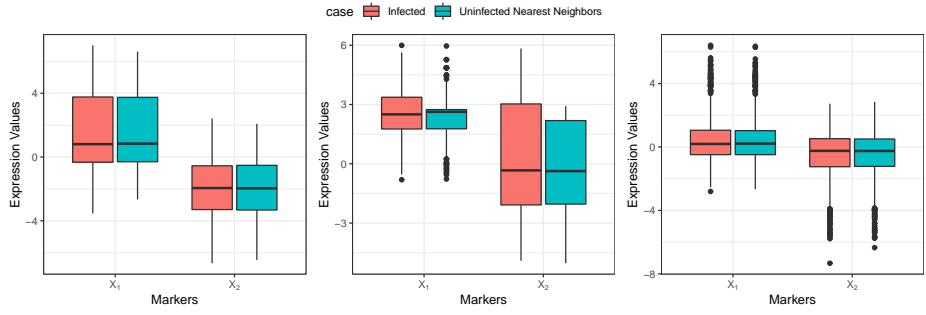}
	\caption{\small{Boxplots of the coordinates of $N_{\bm V_n,\bm U_m}=\{N(V_i, \bm U_m):1\le i\le n\}$ in green, adjacent to the boxplots of the coordinates of the corresponding infected cells $\bm V_n$ in red, for each of the scenarios discussed under Figure \ref{fig:motfig}. Recall that from left to right, we have (a) no remodeling, (b) remodeling, and  (c) no remodeling, but preferential infection.}}
	\label{fig:predprecursor}
\end{figure}
and constitutes a key \red{point in $\mbb{R}^d$ for}
%quantity in 
measuring the relative phenotypic difference between the infected cells and their closest uninfected counterparts. In Figure \ref{fig:predprecursor} we show the boxplots of the coordinates of $N_{\bm V_n,\bm U_m}=\{N(V_i, \bm U_m):1\le i\le n\}$ in green, for each of the scenarios discussed under Figure \ref{fig:motfig}. Recall from Figure \ref{fig:motfig} that we have from left to right, (a) no remodeling, (b) remodeling, and  (c) no remodeling, but preferential infection. We note that for scenarios (a) and (c), the distributions of $N_{\bm V_n,\bm U_m}$ and $\bm V_n$ appear to overlap. However, in the case of remodeling (scenario (b) in the center plot), there is a clear difference between the two distributions for both the markers. The \texttt{TRUH} statistic captures this phenomenon and deals with the presence of heterogeneous groups (which can make the density within the  uninfected sample $\bm U_m$ to vary greatly), by comparing $D_i$ with a feature of the local density of $\bm U_m$ at $N(V_i, \bm U_m)$. For that purpose, define, for each infected observation, 
\begin{align}\label{eq:C}
C_i=\min_{1 \leq j \leq m: U_j \neq  N(V_i, \bm U_m)} || N(V_i, \bm U_m)-U_j||,
\end{align}
which is the distance of $ N(V_i, \bm U_m)$ to its nearest neighbor in $\bm U_m$. Our proposed test statistic for testing \eqref{eq:setup-3}, hereafter referred to as the \texttt{TRUH} statistic, is 
\begin{align}\label{eq:TNM}
T_{m, n}=\frac{1}{n^{1-\frac{1}{d}}}\left|\sum_{i=1}^n ( D_i-C_i)\right|= n^{\frac{1}{d}} |\bar D_{m, n}-\bar C_{m, n}|,
\end{align}
where $\bar D_{m, n}=\frac{1}{n} \sum_{i=1}^n D_i$ and $\bar C_{m, n}=\frac{1}{n} \sum_{i=1}^n C_i$. 
Note that the \texttt{TRUH} statistic above is an aggregated measure of how far apart each viral cell is from the uninfected sample compared to the local distance between uninfected sample points in its vicinity. Consider, for example, panel A in Figure \ref{fig:truh_working} that represents a schematic for remodeling, while panel B depicts preferential infection. Here, the three infected cells (in red) in Panel A are phenotypically different than their uninfected counterparts and thus the average gap $|\bar{D}_{m, n}-\bar{C}_{m, n}|$ in Panel A, averaged over the three infected cells, is relatively larger than what is observed under preferential infection in Panel B.
\begin{figure}[!h]
	\centering
	\includegraphics[width=0.8\linewidth]{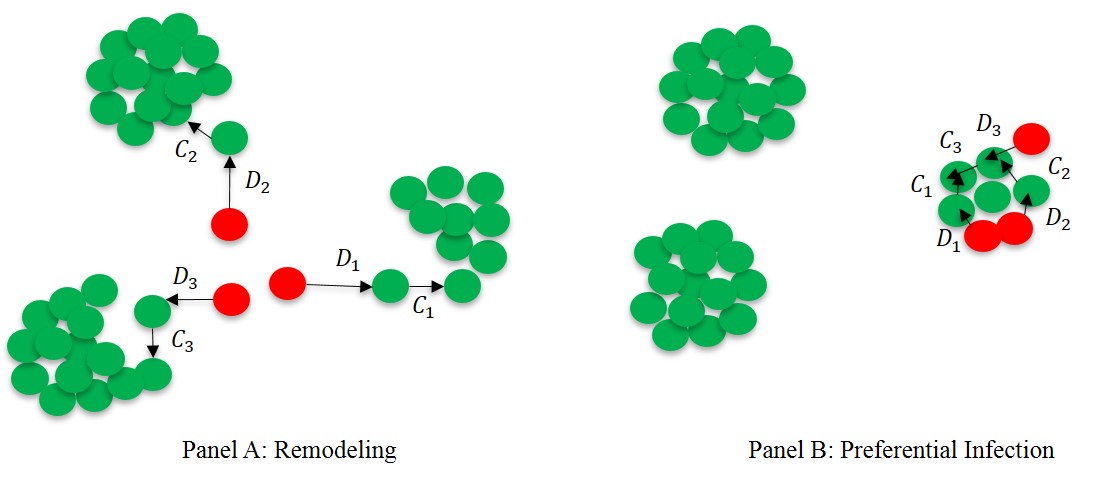}
	\caption{\small{Panel A represents the scenario of remodeling while Panel B exhibits Preferential Infection. Uninfected cells are in green while infected cells are in red. The gaps   are larger in case of remodeling as infected cells are phenotypically different than their uninfected counterparts.}}
	\label{fig:truh_working}
\end{figure}
Therefore, we develop a test to reject the null hypothesis of no remodeling for large values of $T_{m,n}$. The cut-off for $T_{m,n}$ can be chosen based on a bootstrap calibration (Section \ref{sec:bootstrap}) or using the asymptotic limit of $T_{m, n}$ (Section \ref{sec:Tnm_limit}). Note that, since the nearest neighbor of a point, in a cloud of $n$ random points in $\mbb{R}^d$, typical lies within a ball of radius $n^{-\frac{1}{d}}$ centered at that point, the \texttt{TRUH} statistic is scaled by $n^{1-\frac{1}{d}}$, which makes $T_{m,n}$ bounded in probability.

One of the interesting properties of the quantity $T_{m, n}$ is that it only involves enumeration of distance based features for the viral sample, unlike classical graph-based two-sample tests  \citep{rosenbaum2005exact,friedman1979multivariate} which are built using the inter-point distances of the pooled sample. As a consequence, the \texttt{TRUH} test statistic is not symmetric in its usage of the uninfected and infected samples, even  when the sample sizes are equal and the two samples were actually generated from the same population distribution. This asymmetric sample usage of \texttt{TRUH} helps in tackling possibly different heterogeneity levels in the two samples. Finally, note that even though the quantities $D_i$ and $C_i$ are defined above using the Euclidean distance, they can be easily generalized to any arbitrary distance function, and the statistic $T_{m, n}$ can potentially be used in non-Euclidean data spaces, such as graph data or functional data, as well. 
\subsection{Bootstrap Based Calibration for \texttt{TRUH}}
\label{sec:bootstrap}

In this section, we present a bootstrap based procedure to determine the cut-off %$\{t_{m,n, \alpha/2},~t_{m, n, 1-\alpha/2}\}$ 
$t_{m,n, \alpha}$ for a level $\alpha$ test using $T_{m,n}$. To this end, recall that $\mc{F}(F_0)$ contains any convex combination of the baseline distribution functions $\{F_1, \ldots, F_K\}$. Therefore, the proposed bootstrap procedure relies \red{on the following two steps: (i) random sampling of the mixing proportions a large number of times, and (ii) for each such sampled mixing proportion, surrogate samples from $\mc{F}(F_0)$ are constructed to generate a pseudo null distribution which is used to estimate the level $\alpha$ cut-off. The maximum of all the level $\alpha$ cut-offs so obtained, one for each sampled mixing proportion, is then used to calibrate the \texttt{TRUH} statistic.}
%relies on random sampling of the mixing proportions a large number of times \red{and for each such sampled mixing proportion, it constructs surrogate samples from $\mc{F}(F_0)$ which are used to computing the \texttt{TRUH} statistic in each case to generate a pseudo null distribution which is finally used to estimate the level $\alpha$ cut-offs. }
%to construct surrogate samples from $\mc{F}(F_0)$, and then computing the \texttt{TRUH} statistic in each case to generate a pseudo null distribution which is finally used to estimate the level $\alpha$ cut-offs. 

Our algorithm leverages the fact that in our virology example the number $m$ of uninfected samples is much larger than the size $n$ of the infected samples. Therefore,  
we can use the prediction strength approach of \citet{tibshirani2005cluster} on the uninfected samples to obtain an estimate $\hat K$ of the unknown number of heterogeneous subgroups $K$. We then use this value of $\hat K$ to estimate the class memberships of the baseline samples $\bm U_m$ using a $\hat K$-means algorithm. For $1 \leq a \leq \hat K$, denote by $\hat J_a \subseteq \{1, 2, \ldots, m\}$ the subset of indices which belong to class $a$ in the output of the $\hat K$-means algorithm. Let $\bm U_{\hat J_a}=\{U_i: i \in \hat J_a \}$ be the subset of the baseline samples estimated to be in the $a^{th}$ class by the $\hat K$-means algorithm. Note that $\bm{U}_m=\{\bm{U}_{\hat J_a}: a=1, 2, \ldots,\hat{K}\}$ and $\sum_{a=1}^{\hat{K}} m_a=m$, where $m_a=|\hat J_a|$.  

Now, \red{for each $b_1=1,\ldots,B_1$, denote by $(\lambda_1^{(b_1)}, \ldots, \lambda_{\hat{K}}^{(b_1)})$ a random sample from the $\hat K$-dimensional simplex $\mc{S}_{\hat K}=\{(z_1, \ldots, z_{\hat K}) \in \mbb{R}^{\hat K}: z_a \in [0, 1], \text{ for } 1 \leq a \leq \hat{K}, \text{ and } \sum_{a=1}^{\hat K} z_a=1\}$. Given the mixing weights $\{\lambda_1^{(b_1)},$ $\ldots, \lambda_{\hat K}^{(b_1)}\}$,  we construct $B_2$ surrogate infected samples from $\mc{F}(F_0)$ as follows: for each $b_2=1,\ldots,B_2$ and for $1 \leq a \leq \hat{K}$,  randomly sample $\lceil n\lambda_a^{(b_1)} \rceil$ elements without replacement from $\bm{U}_{\hat J_a}$. Denote the chosen elements by $$\mc{V}^{(b_2)}_a=\{U^{(b_2)}_1,\ldots, U^{(b_2)}_{\lceil n\lambda_a^{(b_1)} \rceil}\},$$ and set the remaining $m_a-\lceil n\lambda_a^{(b_1)} \rceil$ elements in $\bm{U}_{\hat J_a}$  as the residual baseline sample $\mc{U}^{(b_2)}_{a}$ in class $a$. Now, combining the samples over the $\hat K$ classes, we get the surrogate infected sample as $\bm{V}^{(b_2)}_{n}=\{\mc{V}^{(b_2)}_{a}: a=1,\ldots,\hat{K}\}$ and the corresponding baseline sample as $\bm{U}^{(b_2)}_{\tilde{m}}=\{\mc{U}^{(b_2)}_{a}: a=1,\ldots,\hat{K}\}$, where $$\tilde{m}=\sum_{a=1}^{\hat{K}}(m_a-\lceil n\lambda^{(b_1)}_a \rceil).$$ }
%Now, combining the samples over the $\hat K$ classes, we get the surrogate infected sample as $\bm{V}^{(b)}_{n}=\{\bm{V}^{(b)}_{a}: a=1,\ldots,\hat{K}\}$ and the corresponding baseline sample as $\bm{U}^{(b)}_{m^{(b)}}=\{\bm{U}^{(b)}_{a}: a=1,\ldots,\hat{K}\}$, where $$m^{(b)}=\sum_{a=1}^{\hat{K}}(m_a-\lceil n\lambda^{(b)}_a \rceil)=m-\sum_{a=1}^{\hat{K}} \lceil n\lambda^{(b)}_a \rceil.$$
Note that under the null hypothesis of no remodeling ($G \in \mc{F}(F_0)$),  the bootstrapped samples in the \red{$b_2^{th}$} round, $\bm{U}^{(b_2)}_{\tilde{m}}$ and $\bm{V}^{(b_2)}_{n}$ (which are surrogates for $\bm U_m$ and $\bm V_n$, respectively), can be used to compute the statistic 
\begin{align}\label{eq:TNM_II}
\red{T_{\tilde{m}, n}^{(b_2)}=n^{\frac{1}{d}} |\tau_{fc} \cdot \bar D_{\tilde{m}, n}-\bar C_{\tilde{m}, n}|.}
\end{align}
\red{For $b_1$ fixed, $T_{\tilde{m}, n}^{(b_2)}$ is the surrogate of the \texttt{TRUH} statistic in the \red{$b_2^{th}$} bootstrap round.}
%This quantity is the surrogate of the \texttt{TRUH} statistic in the \red{$b_2^{th}$} bootstrap round.  
Observe that compared to \eqref{eq:TNM}, we have introduced a tuning parameter $\tau_{fc}$ in \eqref{eq:TNM_II} above. 
We define it as the fold change (fc) hyper-parameter and will consider values of $\tau_{fc}\geq 1$. 
Biologically relevant remodeling corresponds to significant fold change increase or decrease in the magnitude of cellular expressions between the infected and the uninfected cells. 
As we test the global null hypothesis of no change in any of the concerned genes, alternative hypothesis of remodeling with meager fold changes, if accepted, will only lead to biologically uninteresting discoveries. For discovering virologically interesting alternatives,  it is natural to set $\tau_{fc}$ slightly larger than 1. (Note that $\tau_{fc}=1$ corresponds to the bootstrapped version of the \texttt{TRUH} statistic in \eqref{eq:TNM}.) In the simulation experiments presented later in Section~\ref{sec:sims} we set $\tau_{fc}=1$ whereas in Section \ref{sec:realdata} $\tau_{fc}$ is fixed at $1.1$ as we study a real-world virology dataset.

\begin{algorithm}[!h]
	\KwIn{The parameters $n$, $\tau_{fc}$, and $\alpha$. The baseline sample $\bm{U}_m$, and the estimates $\hat{K}$ and $\{\hat J_a : a=1,\ldots,\hat{K}\}$ from the $K$-means algorithm.}
	%\Parameter{$n$, $\tau_{fc}$, $\alpha$.}
	\KwOut{The bootstrapped level $\alpha$ cutoff $t_{m, n, \alpha}$.}
	%initialization\;
	\For{$b_1=1,\ldots,B_1$}{
		STEP 1: Random sample $\{\lambda_1^{(b_1)}, \ldots, \lambda_{\hat K}^{(b_1)}\}$ from the $\hat K$-dimensional simplex\;
		\For{$b_2=1,\ldots,B_2$}{
			\For{$a=1,\ldots,\hat{K}$}{
				
				\eIf{$\lceil n\lambda_a^{(b_1)} \rceil \le{m}_a$}{
					STEP 2: Draw a simple random sample 
					$\mc{V}_a^{(b_2)}=\{U^{(b_2)}_1, \ldots, U^{(b_2)}_{\lceil n\lambda_a^{(b_1)} \rceil}\}$ without replacement from $\bm{U}_{\hat J_a}$\;
					STEP 3: $\mc{U}_a^{(b_2)}=\bm{U}_{\hat J_a} \backslash \mc{V}_a^{(b_2)}$ the baseline residual sample in class $a$\;
					
				}{
					Stop: Go to STEP 1\;
					
				}
			}
			Surrogate Case sample: $\bm{V}^{(b_2)}_{n}=\{\mc{V}^{(b_2)}_{a}: a=1,\ldots,\hat{K}\}$\;
			Baseline sample: $\bm{U}^{(b_2)}_{\tilde{m}}=\{\mc{U}^{(b_2)}_{a}: a=1,\ldots,\hat{K}\}$\;
			STEP 4: Calculate $T_{\tilde{m}, n}^{(b_2)}=n^{\frac{1}{d}} |\tau_{fc}~\bar D_{\tilde{m}, n}-\bar C_{\tilde{m}, n}|$\;
		}
		STEP 5: Return $t_{m,n, \alpha}^{(b_1)}=\min\{T_{\tilde{m}, n}^{(b_2)}: \frac{1}{B_2} \sum_{r=1}^{B_2}\bm 1 \{T_{\tilde{m}, n}^{(r)}\ge T_{\tilde{m}, n}^{(b_2)}\} \le \alpha\}$.
	}
	STEP 6: Return $t_{m,n, \alpha}=\max\{t_{m,n, \alpha}^{(b_1)}: 1\le b_1\le B_1\}$.
	\caption{Bootstrap cut-off for a level $\alpha$ test using $T_{m, n}$}
	\label{algo:max}
\end{algorithm}

The bootstrap procedure described above is summarized in Algorithm \ref{algo:max}. \red{The computational complexity of Algorithm \ref{algo:max} is driven by the following two steps: (i) the computation of the estimated number of clusters $\hat{K}$, and (ii) the computation of the \texttt{TRUH} test statistic over the bootstrap samples.
	%$T_{m^{(b)}, n}^{(b)}$ over $B$ bootstrap samples. 
	While the calculations in step (ii) can be distributed across the $B_1B_2$ bootstrap samples and $n$ infected samples,} the computational cost of estimating $T_{\tilde{m}, n}^{(b)}$ for a fixed $b$ is \red{$O(md)$ which is the cost of running the 1-nearest neighbor algorithm twice for each of the $n$ infected samples.} 
To estimate $K$, we use prediction strength along with a $K$-means algorithm where the target number of clusters and the maximum number of iterations over which the $K$-means algorithm runs before stopping are both fixed and thus has $O(md)$ complexity. Therefore the overall computational complexity of Algorithm \ref{algo:max} is \red{$O(md)$}. \red{For the numerical experiments and real data analysis of Sections \ref{sec:sims} and \ref{sec:realdata}, we set $B_2=200$ and implement a version of Algorithm \ref{algo:max} which samples the mixing proportions $\{\lambda_1, \ldots, \lambda_{\hat K}\}$ only from the corners of the $\hat{K}$ dimensional simplex $\mc{S}_{\hat{K}}$ as follows: we set $B_1=\hat{K}$ and for $b_1=1,\ldots, B_1$, and $a=1,\ldots,\hat{K}$, we take $\lambda_a^{(b_1)}=1$ if $b_1=a$ and $0$ otherwise. This sampling scheme ensures that the mechanism for generating the mixing proportions places most weight on the corners of $\mc{S}_{\hat{K}}$.} 

\section{Numerical Experiments}
\label{sec:sims}
In this section we evaluate the numerical performance of the \texttt{TRUH} procedure across a wide range of simulation experiments. We consider the following six competing testing procedures that use different methodologies to conduct a nonparametric two-sample test: (i) Energy test (\texttt{Energy}) of \citet{aslan2005new} available from the R package \texttt{energy}, (ii) Cross-Match test (\texttt{Crossmatch}) of \citet{rosenbaum2005exact} available from the R package \texttt{crossmatch}, (iii) edgecount test (\texttt{E Count}) of \citet{friedman1979multivariate}, (iv) Generalized edgecount test (\texttt{GE Count}) of \citet{chen2017new}, (v) Weighted edgecount test (\texttt{WE Count}) of \citet{chen2018weighted}, and (vi) the Max Type edgecount test (\texttt{MTE Count}) of \citet{zhangmaxtypec}. The aforementioned four edge count based tests are available from the R package \texttt{gtests}. \red{We note that the preceding six testing procedures are not designed to test the composite null hypothesis of equation \eqref{eq:setup-3} and rely on a simple null hypothesis $H_0:F_0=G$ for inference. Nevertheless, the simulation experiments presented in this section highlight the incorrect inference that may result when traditional two-sample tests are used for testing the composite null hypothesis of equation \eqref{eq:setup-3}.}

To assess the performance of the competing testing procedures, we simulate $\bm{U}_m$ and $\bm V_n$ from $F_0$ and $G$, the cdf of the baseline and the infected population respectively, and for each testing procedure, we measure the proportion of rejections across $100$ repetitions of the composite null hypothesis test described in \eqref{eq:setup-3} at $5\%$ level of significance. For \texttt{TRUH}, we \red{use Algorithm \ref{algo:max} with fold change constant $\tau_{fc}=1$, $B_2=200$ and sample the mixing proportions only from the corners of $\hat{K}$ dimensional simplex $\mc{S}_{\hat{K}}$ as described in section \ref{sec:bootstrap}}. The R code that reproduces our simulation results can be downloaded from the following link: \href{https://github.com/trambakbanerjee/TRUH_paper}{https://github.com/trambakbanerjee/TRUH\_paper} and the associated R package is available at \href{https://github.com/trambakbanerjee/TRUH#truh}{https://github.com/trambakbanerjee/TRUH}.

\subsection{Experiment 1}
\label{sim:exp1}

In the setup of experiment 1, we consider testing $H_0: G \in \mathcal{F}(F_0) \quad \text{versus} \quad H_A: G \notin \mathcal{F}(F_0)$, when $F_0$ is the cdf of a $d$ dimensional Gaussian mixture distribution with three components:
$$F_0=0.3\mc{N}_d(\bm\mu_1,\bm \Sigma_1)+0.3\mc{N}_d(\bm\mu_2,\bm \Sigma_2)+0.4\mc{N}_d(\bm\mu_3,\bm \Sigma_3),
$$
where $\bm \mu_1=\bm 0_{d},~\bm \mu_2=-3\bm 1_{d},~\bm\mu_3=-\bm{\mu}_2$, and $\bm \Sigma_K$, for $K=1,2,3$, are $d$ dimensional positive definite matrices with eigenvalues randomly generated from the interval $[1,~10]$. To simulate $\bm V_n$ from $G$, we consider two scenarios as follows:
% Table generated by Excel2LaTeX from sheet 'Sheet1'
\begin{itemize}
	
	\item Scenario I: Here, $G=0.1\,\mc{N}_d(\bm\mu_1,\bm \Sigma_1)+0.1\,\mc{N}_d(\bm\mu_2,\bm \Sigma_2)+0.8\,\mc{N}_d(\bm\mu_3,\bm \Sigma_3)$. In this case, $G$ has all the subpopulations present in $F_0$ but at different proportions. Thus, $G\in\mc{F}(F_0)$, and the correct inference here is no remodeling. 
	
	\item Scenario II: This setting presents a scenario where $G\notin\mc{F}(F_0)$ and the composite null $H_0$ is not true. Here, we consider $G=0.5\mc{N}_d(\bm\mu_1,\bm \Sigma_1)+0.5\mc{N}_d(\bm\mu_4,\bm \Sigma_4)$, where $\bm \Sigma_4$ is a $d$ dimensional positive definite matrix generated independently of $\bm \Sigma_1,\bm \Sigma_2,\bm \Sigma_3$, and $\bm \mu_4=4\bm \epsilon_d$, where $\bm \epsilon_d$ \red{is a vector of $d$ independent} %are $d$ independently generated 
	Rademacher random variables. 
\end{itemize}
\begin{table}[!h]
	\centering
	\caption{Rejection rates at $5\%$ level of significance: Experiment 1 and Scenario I wherein $H_0: G \in \mathcal{F}(F_0)$ is true.}
	\begin{tabular}{lcccccc}
		\hline
		& \multicolumn{3}{c}{$m = 500,~n = 50$} & \multicolumn{3}{c}{$m = 2000,~n = 200$} \\
		\hline
		\multicolumn{1}{c}{Method} & $d=5$   & $d=15$  & $d=30$  & $d=5$   & $d=15$  & $d=30$ \\
		\hline
		\texttt{Energy} & 1.000  & 1.000  & 1.000  & 1.000    & 1.000     & 1.000 \\
		\texttt{Crossmatch} & 0.220  & 0.150  & 0.145  & 0.460  & 0.410  & 0.340 \\
		\texttt{E Count} & 0.185  & 0.115  & 0.055  & 0.400   & 0.335 & 0.195 \\
		\texttt{GE Count} & 0.170  & 0.185  & 0.225  & 0.510  & 0.540  & 0.605 \\
		\texttt{WE Count} & 0.300  & 0.295  & 0.360  & 0.655 & 0.745 & 0.735 \\
		\texttt{MTE Count} & 0.230  & 0.230  & 0.290  & 0.605 & 0.665 & 0.665 \\
		\hline
		\texttt{TRUH}  & 0.02  & 0.015  & 0.015  & 0.01 & 0.02 & 0.01 \\
		\hline
	\end{tabular}%
	\label{tab:exp1set1}%
\end{table}%
%% Table generated by Excel2LaTeX from sheet 'Sheet1'
%\begin{table}[htbp]
%	\centering
%	\caption{Rejection rates at $5\%$ level of significance: Experiment 1 and Scenario II wherein $H_0: G \in \mathcal{F}(F_0)$ is false}
%	\begin{tabular}{lcccccc}
%		\hline
%		& \multicolumn{3}{c}{$m = 500,~n = 50$} & \multicolumn{3}{c}{$m = 2000,~n = 200$} \\
%		\hline
%		\multicolumn{1}{c}{Method} & $d=5$   & $d=15$  & $d=30$  & $d=5$   & $d=15$  & $d=30$ \\
%		\hline
%		\texttt{Energy} & 1.000 & 1.000 & 1.000 & 1.000 & 1.000 & 1.000 \\
%		\texttt{Crossmatch} &1.000 & 1.000 & 1.000 & 1.000 & 1.000 & 1.000 \\
%		\texttt{E Count} & 1.000 & 1.000 & 1.000 & 1.000 & 1.000 & 1.000 \\
%		\texttt{GE Count} & 1.000 & 1.000 & 1.000 & 1.000 & 1.000 & 1.000 \\
%		\texttt{WE Count} & 1.000 & 1.000 & 1.000 & 1.000 & 1.000 & 1.000 \\
%		\texttt{MTE Count} & 1.000 & 1.000 & 1.000 & 1.000 & 1.000 & 1.000 \\
%		\hline
%		\texttt{TRUH}  & 1.000 & 1.000 & 1.000 &1.000 & 1.000 & 1.000 \\
%		\hline
%	\end{tabular}%
%	\label{tab:exp1set2}%
%\end{table}%
For scenario I, table \ref{tab:exp1set1} reports the rejection rates for $100$ repetitions of the test for varying $d,m,n$ when the parameters $\{\bm \mu_i,\bm \Sigma_i\,1\le i\le 4\}$ are held fixed across these repetitions. We see that \texttt{TRUH} returns the smallest rejection rate.  The other six tests all have very high rejection rates \red{as they fail to account for the composite nature of the null hypothesis.}  %. These tests fail to account for the composite nature of the null hypothesis and are not conservative. 
The rejection rate for \texttt{TRUH} is below the \red{prespecified} %prefixed 
$0.05$ level establishing that it is a conservative test across all the regimes considered in the table. In scenario II, however, we find that all the tests correctly identify $G\notin \mc{F}(F_0)$ in all the regimes and across all replications. This shows, all the tests exhibit perfect rejection rates in this scenario. These two scenarios under experiment 1 demonstrate that for testing the composite null hypothesis of equation \eqref{eq:setup-3}, direct application of traditional two-sample tests such as those considered here, is no longer conservative \red{as these tests rely on a simple null hypothesis for inference.} %while 
\texttt{TRUH}, \red{on the other hand, is adept at detecting $H_0:G\in\mc{F}(F_0)$ and powerful against departures from $H_0$.}
\begin{table}[!h]
	\small
	\centering
	\caption{Rejection rates at $5\%$ level of significance: Simulation experiment corresponding to Figure \ref{fig:motfig}.}
	\scalebox{0.75}{
		\begin{tabular}{lccc}
			\hline
			& \multicolumn{3}{c}{$m = 2000,~n = 500,~d=2$}\\
			\hline
			\multicolumn{1}{c}{Method} & Left panel:  & Center panel: & Right panel:\\
			\multicolumn{1}{c}{} & no remodeling $(G\in\mc{F}(F_0))$ & remodeling $(G\notin\mc{F}(F_0))$  & preferential infection $(G\in\mc{F}(F_0))$\\
			\hline
			\texttt{Energy} & 0.030 & 1.000 & 1.000 \\
			\texttt{Crossmatch} &0.030 & 1.000 & 1.000 \\
			\texttt{E Count} & 0.010 & 1.000 & 1.000 \\
			\texttt{GE Count} & 0.000 & 1.000 & 1.000\\
			\texttt{WE Count} & 0.060 & 1.000 & 1.000\\
			\texttt{MTE Count} & 0.030 & 1.000 & 1.000 \\
			\hline
			\texttt{TRUH}  & 0.000 & 0.980 & 0.000\\
			\hline
	\end{tabular}}%
	\label{tab:exp0}%
\end{table}% 

\normalsize

In Table \ref{tab:exp0} we present the results of the simulation exercise that correspond to the three scenarios described in figure \ref{fig:motfig}. The two dimensional uninfected marker expressions $(X_1,X_2)$ are randomly sampled from $F_0=w_1\mc{N}_2(\bm \mu_1,\bm I_2)+w_2\mc{N}_2(\bm \mu_2,\bm I_2)+w_3\mc{N}_2(\bm \mu_3,\bm I_2)$, where $\bm \mu_1=\bm 0,~\bm \mu_2=(0,-4),~\bm \mu_3=(4,-2)$ and the sample size is $m=2000$. The mixing weights are given by $(w_1,w_2,w_3)=(0.3,0.3,0.4)$. For the panel on the left of figure \ref{fig:motfig}, infected marker expressions arise from $F_0$ but with sample size $n=500$, while for the center panel the infected marker expressions represent a random sample of size $n$ from $G=0.5 \mc{N}_2(\bm \mu_4,\bm I_2)+0.5\mc{N}_2(\bm \mu_5,\bm I_2)$, where $\bm \mu_4=0.25 \bm \mu_2+0.5\bm \mu_3$ and $\bm \mu_5=(3/4) \bm \mu_2+(9/8)\bm \mu_3$. Clearly in this case $G\notin\mc{F}(F_0)$. For the right most panel, infected marker expressions are again a random sample of size $n$ from $\mc{F}(F_0)$ with mixing weights given by the vector $(w_1,w_2,w_3)=(0.8,0.1,0.1)$. Under this setting, the three \texttt{g-tests} \citep{chen2018weighted,chen2017new,friedman1979multivariate}, the cross-match test of \citet{rosenbaum2005exact}, and the energy test of \citet{aslan2005new} infer $G\notin\mc{F}(F_0)$ in all of the $100$ repetitions of the experiment thus suggesting their inability to tackle subpopulation level heterogeneity.
\subsection{Experiment 2}
\label{sim:exp2}
For experiment 2, we consider a more complex setup wherein $F_0$ is the cdf of a $d$ dimensional mixture distribution which is not necessarily Gaussian. Here, 
$$F_0=0.5~{\mathrm{Gam}}_d(\text{shape}=5\bm 1_d,\text{rate}=\bm 1_d,\bm \Sigma_1)+0.5~{\mathrm{Exp}}_d(\text{rate}=\bm 1_d,\bm \Sigma_2),$$ 
where ${\mathrm{Gam}}_d$ and ${\mathrm{Exp}}_d$ are $d$ dimensional Gamma and Exponential distributions. For generating correlated Gamma and Exponential variables, we use the Gaussian copula approach based function from the R-package \texttt{lcmix} \citep{lcmix,xue2000multivariate}. 
We consider tapering matrices with positive and negative autocorrelations: $(\bm \Sigma_1)_{ij}=0.7^{|i-j|}$ and $(\bm \Sigma_2)_{ij}=-0.9^{|i-j|}$ for $1\le i,j\le d$.
For simulating $\bm V_n$ from $G$, we consider the following two scenarios:
% Table generated by Excel2LaTeX from sheet 'Sheet1'
\begin{itemize}
	
	\item Scenario I: Here, $G={\mathrm{Exp}}_d(\text{rate}=\bm 1_d,\bm \Sigma_2)$. In this case, $G$ arises from only one of the components of $F_0$, that is, $G\in\mc{F}(F_0)$.
	
	\item Scenario II: Here, $G=0.1~{\mathrm{Gam}}_d(\text{shape}=10\bm 1_d,\text{rate}=0.5\bm 1_d,\bm \Sigma_1)+0.9~{\mathrm{Exp}}_d(\text{rate}=\bm 1_d,\bm \Sigma_2)$. In this setting, $G\notin\mc{F}(F_0)$ and the composite null $H_0$ is not true. When the ratio $n/m$ is small, this scenario presents a difficult setting for detecting departures from $H_0$ as majority of the case samples from $\bm V_n$ will arise from ${\mathrm{Exp}}_d(\text{rate}=\bm 1_d,\bm \Sigma_2)$ and the tests will rely on only a small fraction of samples from ${\mathrm{Gam}}_d(\text{shape}=10\bm 1_d,\text{rate}=0.5\bm 1_d,\bm \Sigma_1)$ to reject the null hypothesis. 
\end{itemize}
\begin{table}[!h]
	\centering
	\caption{Rejection rates at $5\%$ level of significance: Experiment 2 and Scenario I wherein $H_0: G \in \mathcal{F}(F_0)$ is true.}
	\begin{tabular}{lcccccc}
		\hline
		& \multicolumn{3}{c}{$m = 500,~n = 50$} & \multicolumn{3}{c}{$m = 2000,~n = 200$} \\
		\hline
		\multicolumn{1}{c}{Method} & $d=5$   & $d=15$  & $d=30$  & $d=5$   & $d=15$  & $d=30$ \\
		\hline
		\texttt{Energy} & 1.000 & 1.000 & 1.000 & 1.000 & 1.000 & 1.000 \\
		\texttt{Crossmatch} & 0.460 & 0.440 & 0.390 & 0.800 & 0.850 & 0.760 \\
		\texttt{E Count} & 0.290 & 0.190 & 0.280 & 0.720 & 0.690 & 0.560 \\
		\texttt{GE Count} & 0.400 & 0.430 & 0.390 & 0.900 & 0.920 & 0.900 \\
		\texttt{WE Count} & 0.560 & 0.590 & 0.600 & 0.970 & 0.960 & 0.940 \\
		\texttt{MTE Count} & 0.460 & 0.510 & 0.440 & 0.930 & 0.950 & 0.910 \\
		\hline
		\texttt{TRUH}  & 0.000 & 0.000 & 0.000 & 0.000 & 0.000 & 0.000 \\
		\hline
	\end{tabular}%
	\label{tab:exp2set1}%
\end{table}%
% Table generated by Excel2LaTeX from sheet 'Sheet1'
\begin{table}[htbp]
	\centering
	\caption{Rejection rates at $5\%$ level of significance: Experiment 2 and Scenario II wherein $H_0: G \in \mathcal{F}(F_0)$ is false.}
	\begin{tabular}{lcccccc}
		\hline
		& \multicolumn{3}{c}{$m = 500, n = 10$} & \multicolumn{3}{c}{$m = 2000, n = 40$} \\
		\hline
		\multicolumn{1}{c}{Method} & $d=5$   & $d=15$  & $d=30$  & $d=5$   & $d=15$  & $d=30$\\
		\hline
		\texttt{Energy} & 0.930 & 0.960 & 1.000 & 1.000 & 1.000 & 1.000 \\
		\texttt{Crossmatch} & 0.400 & 0.350 & 0.470 & 0.600 & 0.720 & 0.720 \\
		\texttt{E Count} & 0.180 & 0.120 & 0.130 & 0.340 & 0.310 & 0.200 \\
		\texttt{GE Count} & 0.310 & 0.230 & 0.160 & 0.800 & 0.790 & 0.770 \\
		\texttt{WE Count} & 0.510 & 0.490 & 0.460 & 0.800 & 0.790 & 0.790 \\
		\texttt{MTE Count} & 0.390 & 0.430 & 0.380 & 0.800 & 0.780 & 0.770 \\
		\hline
		\texttt{TRUH}  & 0.580 & 0.580 & 0.580 & 0.880 & 0.940 & 0.960 \\
		\hline
	\end{tabular}%
	\label{tab:exp2set2}%
\end{table}%
Table \ref{tab:exp2set1} reports the rejection rates, for $100$ repetitions, of the different tests in  scenario I. Note that \texttt{TRUH} correctly identifies that $G\in\mc{F}(F_0)$ while the remaining tests overwhelmingly support $G\notin\mc{F}(F_0)$, especially when $m$ is large, demonstrating their lack of conservatism in testing the composite null hypothesis of the form \eqref{eq:setup-3}. 
The results for scenario II (Table \ref{tab:exp2set2}) are reported for $n/m=0.02$, where, with the exception of \texttt{Energy} test, all the other competing tests demonstrate small rejection rates for $m=500$. Substantial improvement in the rejection rates is evident when $m=2000$. However, for both these cases, $m=500$ and $m=2000$, the \texttt{Energy} test followed by  \texttt{TRUH} exhibit the largest rejection rates. Although \texttt{Energy} test rejects $H_0$ in almost all of the testing instances in scenario II, its performance in scenario I (Table \ref{tab:exp2set1}) reveals that it can be severely non-conservative when \red{testing under a composite null hypothesis $H_0:G \in \mathcal{F}(F_0)$.} 

%its severe inability to distinguish between $G\in\mc{F}(F_0)$ and $G\notin\mc{F}(F_0)$, and thus is keen on rejecting $H_0$ across both the scenarios.
\subsection{Experiment 3}
\label{sim:exp3}
For experiment 3, we introduce zero inflation in both the baseline and case samples to mimic the scenario that is often encountered in virology studies wherein some of the markers exhibit only a small probability of expressing themselves. We let $\bm p=(p_1,\ldots,p_d)$ denote the $d$ dimensional vector of point masses at $0$ across dimensions, and consider $$F_0= 0.5 F_1 + 0.5 F_2,$$ where $F_1=\bm p\delta_{\{\bm 0\}}+(\bm 1_d-\bm p)~{\mathrm{Gam}}_d(\text{shape}=5\bm 1_d,\text{rate}=\bm 1_d,\bm \Sigma_1)$ and $F_2=\bm p\delta_{\{\bm 0\}}+(\bm 1_d-\bm p)~{\mathrm{Exp}}_d(\text{rate}=\bm 1_d,\bm \Sigma_2)$. In the above representation, $\bm p$ regulates the differential zero inflation across the $d$ dimensions. For the purposes of this experiment, we chose the first $0.8d$ coordinates of $\bm p$  independently from $\text{Unif}(0.5, 0.6)$, and the remaining $0.2d$ coordinates are set to $0$. Thus, the zero inflation is encountered only in the first $0.8d$ coordinates of $F_0$. To simulate the baseline sample $\bm{U}_m$ from $F_0$, we use the R-package \texttt{lcmix} with $\bm \Sigma_1, \bm \Sigma_2$ as described in experiment 2 (Section \ref{sim:exp2}). For simulating $\bm V_n$ from $G$, we consider the following two scenarios:
\begin{itemize}
	\item Scenario I: Let $G=\bm p\delta_{\{\bm 0\}}+(\bm 1_d-\bm p)~{\mathrm{Exp}}_d(\text{rate}=\bm 1_d,\bm \Sigma_2)$. Here, $G$ arises from only one of the components of $F_0$, that is, $G\in\mc{F}(F_0)$.
	\item Scenario II: Here, we let $G=0.5 G_1+0.5 G_2$, where $G_1=\bm q\delta_{\{\bm 0\}}+(\bm 1_d-\bm q)~{\mathrm{Gam}}_d(\text{shape}=5\bm 1_d,\text{rate}=0.5\bm 1_d,\bm \Sigma_1)$ 
	and $G_2=\bm q\delta_{\{\bm 0\}}+(\bm 1_d-\bm q)~{\mathrm{Exp}}_d(\text{rate}=\bm 1_d,\bm \Sigma_2)$, and we set the first $0.8d$ coordinates of $\bm q$ to $0.3$ and the remaining $0.2d$ coordinates to $0$. Note that in this setting, apart from the difference in the rate parameter of the Gamma distribution, we also have differential zero inflation across $G$ and $F_0$, as $\bm q\ne \bm p$. Thus, $G\notin\mc{F}(F_0)$ and the composite null $H_0$ is not true. Moreover, when $n$ is small, this scenario presents a challenging setting for detecting departures from $H_0$ as the tests will have to rely on both the differences in the rate parameter and differential zero expression between $\bm{U}_m, \bm V_n$ to reject the null hypothesis. 
\end{itemize}
\begin{table}[!h]
	\centering
	\caption{Rejection rates at $5\%$ level of significance: Experiment 3 and Scenario I wherein $H_0: G \in \mathcal{F}(F_0)$ is true.}
	\begin{tabular}{lcccccc}
		\hline
		& \multicolumn{3}{c}{$m = 500,~n = 50$} & \multicolumn{3}{c}{$m = 2000,~n = 200$} \\
		\hline
		\multicolumn{1}{c}{Method} & $d=5$   & $d=15$  & $d=30$  & $d=5$   & $d=15$  & $d=30$ \\
		\hline
		\texttt{Energy} & 1.000 & 1.000 & 1.000 & 1.000 & 1.000 & 1.000 \\
		\texttt{Crossmatch} & 0.340 & 0.330 & 0.240 & 0.730 & 0.800 & 0.670 \\
		\texttt{E Count} & 0.200 & 0.120 & 0.100 & 0.670 & 0.440 & 0.340 \\
		\texttt{GE Count} & 0.300 & 0.290 & 0.330 & 0.870 & 0.860 & 0.870 \\
		\texttt{WE Count} & 0.510 & 0.460 & 0.540 & 0.970 & 0.920 & 0.930 \\
		\texttt{MTE Count} & 0.400 & 0.350 & 0.460 & 0.890 & 0.910 & 0.920 \\
		\hline
		\texttt{TRUH}  & 0.000 & 0.000 & 0.000 & 0.000 & 0.000 & 0.000 \\
		\hline
	\end{tabular}%
	\label{tab:exp3set1}%
\end{table}%
\begin{table}[htbp]
	\centering
	\caption{Rejection rates at $5\%$ level of significance: Experiment 3 and Scenario II wherein $H_0: G \in \mathcal{F}(F_0)$ is false.}
	\begin{tabular}{lcccccc}
		\hline
		& \multicolumn{3}{c}{$m = 500,~n = 10$} & \multicolumn{3}{c}{$m = 2000,~n = 40$} \\
		\hline
		\multicolumn{1}{c}{Method} & $d=5$   & $d=15$  & $d=30$  & $d=5$   & $d=15$  & $d=30$ \\
		\hline
		\texttt{Energy} & 0.850 & 0.920 & 0.940 & 1.000 & 1.000 & 1.000 \\
		\texttt{Crossmatch} & 0.460 & 0.410 & 0.590 & 0.820 & 0.730 & 0.970 \\
		\texttt{E Count} & 0.410 & 0.520 & 0.730 & 0.890 & 0.990 & 1.000 \\
		\texttt{GE Count} & 0.410 & 0.480 & 0.730 & 0.920 & 0.960 & 1.000 \\
		\texttt{WE Count} & 0.550 & 0.580 & 0.780 & 0.920 & 0.920 & 1.000 \\
		\texttt{MTE Count} & 0.590 & 0.570 & 0.810 & 0.900 & 0.960 & 1.000 \\
		\hline
		\texttt{TRUH}  & 0.760 & 0.940 & 0.980 & 0.970 & 1.000 & 1.000 \\
		\hline
	\end{tabular}%
	\label{tab:exp3set2}%
\end{table}%
Tables \ref{tab:exp3set1} and \ref{tab:exp3set2} report the rejection rates for $100$ repetitions of the test when $\bm p$ is held fixed across these repetitions. For scenario I (Table \ref{tab:exp3set1}), we see that \texttt{TRUH}, unlike the other six tests, does not excessively reject the null hypothesis and is the only conservative test. 
%returns the smallest rejection rates while the remaining six tests continue to infer $G\notin \mc{F}(F_0)$ in many instances of the testing problem. 
In scenario II (Table \ref{tab:exp3set2}) when $n=10$, the \texttt{TRUH} and \texttt{Energy} tests dominate all the remaining tests and reject $H_0$ in more than $80\%$ of the testing instances. However when $n=40$, all tests are competitive, with the exception of the \texttt{Crossmatch} for $d<30$. Overall, across the above two zero-inflated scenarios, \texttt{TRUH} is both conservative and powerful against departures from the null hypothesis $H_0:G\in\mc{F}(F_0)$. 

\section{Remodeling Analysis of HIV-Infected T Cells}
\label{sec:realdata} 
In this section, we analyze the data collected in \citet{cavrois2017mass}. It contains protein expressions of uninfected and HIV infected CD4 (which is a protein found on the surface of immune cells) positive tonsillar T cells.  
We show that existing two-sample tests \red{that rely on a simple null hypothesis for inference, may lead to biologically incorrect inference when testing the composite null hypothesis of equation \eqref{eq:setup-3}.}
% might lead to incorrect conclusions, 
\red{Our proposed \texttt{TRUH} hypothesis testing framework, on the other hand, is proficient at detecting $H_0:G\in\mc{F}(F_0)$ and powerful against departures from $H_0$.}
%which can be corrected by using our 

As discussed in section \ref{sec:motivation}, the goal in \citet{cavrois2017mass} was to conduct a mass cytometric assessment of subsets of CD4+ T cells that support HIV entry and viral infection in humans using two variants of the HIV virus: \texttt{Nef rich HIV} and \texttt{Nef deficient HIV}. It is known in the immunology literature, that \texttt{Nef-rich} cells are more prone to viral remodeling \citep{basmaciogullari2014activity}. 
The data set we analyze here contains uninfected and infected data from two different sets of experiments. Both the experiments have four replications based on tonsillar T cells from 4 healthy donors. In Experiment I, the infection was done by \texttt{Nef-rich} HIV where as in Experiment II the infection was done by \texttt{Nef-deficient} HIV. We expect remodeling, if any, in the infected cells to be higher in experiment I than in experiment II compared to their respective baseline uninfected populations. 

%During this time the cell surface markers may undergo phenotypic alterations under the influence of the virus, thus complicating the determination of cellular subsets that support virus entry  and / or replication.  
The cells in the data were phenotyped in a 38 parameter CyToF \citep{bendall2014single} panel after allowing 4 days for infection. The panel used $3$ markers to classify the cells as uninfected or infected which leaves $d=35$ of the original $38$ markers for our analyses. For donor $r$, let  $\bm{U}_{m, r}=\{U_{1,r},\ldots,U_{m, r}\}$ denote the uninfected sample where each $U_{j, r}$ is a $d$ dimensional vector of arcsin transformed marker expression values with cdf $F_{0}$. We assume that the heterogeneity in the uninfected population is \red{captured} %reflected 
by $K$ heterogeneous cellular subgroups each having unimodal probability distribution functions with cdfs $F_1, F_2, \ldots, F_K$  and mixing proportions $w_1, w_2, \ldots, w_K$, such that $F_{0}$ is of the form represented in equation \eqref{eq:setup-1}.
We observe the virus infected sample  $\bm{V}_{n, r}=\{V_{1, r},\ldots, V_{n, r}\}$ consisting  of $n$ i.i.d. $d$-dimensional arcsin transformed observations from $G$ and the goal is to test $H_0: G \in \mathcal{F}(F_0)$ versus $H_A: G \notin \mathcal{F}(F_0)$, where $\mathcal{F}(F_0)$ is the convex hull of  $\{F_1, \ldots, F_K\}$ as defined in equation \eqref{eq:setup-2}. Note that rejection of the null hypothesis would indicate that the distribution of the marker expressions under infection is different from $F_0$ and any convex combination of its components, thus providing evidence in favor of remodeling. 
%{Changes in cellular expressions are biologically relevant only if there is a significant fold change in magnitude. Thus, to avoid discovering remodeling with meagre fold changes, through out this section we use a fold change detection threshold of $1/6$ or more of the doubling rate in the raw scale and use $\tau_{fc}=1.1$ to obtain the bootstrapped null distribution of the \texttt{TRUH} statistic. }
\red{Virologist study remodeling in virus infected cells in reference to the expressions of Bystander cells. 
	In panels of cells subjected to infection by the virus, not all the cells get infected. Bystanders are those cells which are not directly infected by the virus but are neighbors of virus infected cells. In these experiments, it was seen that when $\tau_{fc}$ is set to 1 then even Bystander cells exhibit remodeling in some experiments. However, when $\tau_{fc}$ is set to $1.1$, there is no remodeling in the Bystander population in any experiments.  Thus, to detect biologically relevant cases of remodeling and avoid discovering benign instances, we use  $\tau_{fc}=1.1$ through out this section to obtain the bootstrapped null distribution of the \texttt{TRUH} statistic.}

Among the 35 markers considered here, it is known that the expressions of the four markers \texttt{CD4, CCR5, CD28,} and \texttt{CD62L} are 
%mainly 
changed due to HIV infection \red{ and these four markers play a significant role in HIV induced remodeling \citep{garcia1991serine,matheson2015cell,michel2005nef,ross1999inhibition,swigut2001mechanism,vassena2015hiv}}. Consider two testing problems: (A) in which we test the hypothesis for all 35 markers, and (B) in which we test the hypothesis of viral remodeling on 31 markers leaving aside the four markers which are known to be remodeled by HIV. Thus, here we have four different cases on which we conduct the tests of viral remodeling, viz.,
\begin{itemize}
	\item CASE 1 corresponds to Experiment I A where we test viral remodeling on \texttt{Nef-rich} infected cells based on all 35 markers including the four which are known to be remodeled. 
	\item CASE 2 corresponds to Experiment I B where we test viral remodeling on \texttt{Nef-rich} infected cells based on 31 markers which are known to be mainly invariant under HIV infection.
	\item CASE 3 corresponds to Experiment II A where we test viral remodeling on \texttt{Nef-deficient} infected cells based on all 35 markers including the four which are known to be remodeled. 
	\item CASE 4 corresponds to Experiment II B where we test viral remodeling on \texttt{Nef-deficient} infected cells based on 31 markers which are known to be mainly invariant under HIV infection.
\end{itemize}
In all the four cases, we have four replications corresponding to four donors. It has been established through validation experiments in \citet{cavrois2017mass} that there is no remodeling but only preferential infection in cases 2 and 4 whereas cases 1 and 3 exhibits remodeling with the intensity of remodeling being much higher in the former than the later. Biologically, it corresponds to the fact that there is \texttt{Nef}-independent remodeling but the intensity of remodeling is higher in presence of \texttt{Nef}. Also, remodeling in cellular expressions is confined to the four markers \texttt{CD4, CCR5, CD28,} and \texttt{CD62L} in the set of markers considered in the study. 
%\begin{figure}[!t]
%	\centering
%	\includegraphics[width=1\linewidth]{tsne_plot}
%	\caption{\small{This is a t-SNE plot \citep{maaten2008visualizing} of the data where the $d$ dimensional uninfected and infected cellular expression levels are projected to a two dimensional space for each of the four donors across the four cases.}}
%	\label{fig:tsne_plot}
%\end{figure}
\begin{figure}[!b]
	\centering
	{\includegraphics[width=1\linewidth]{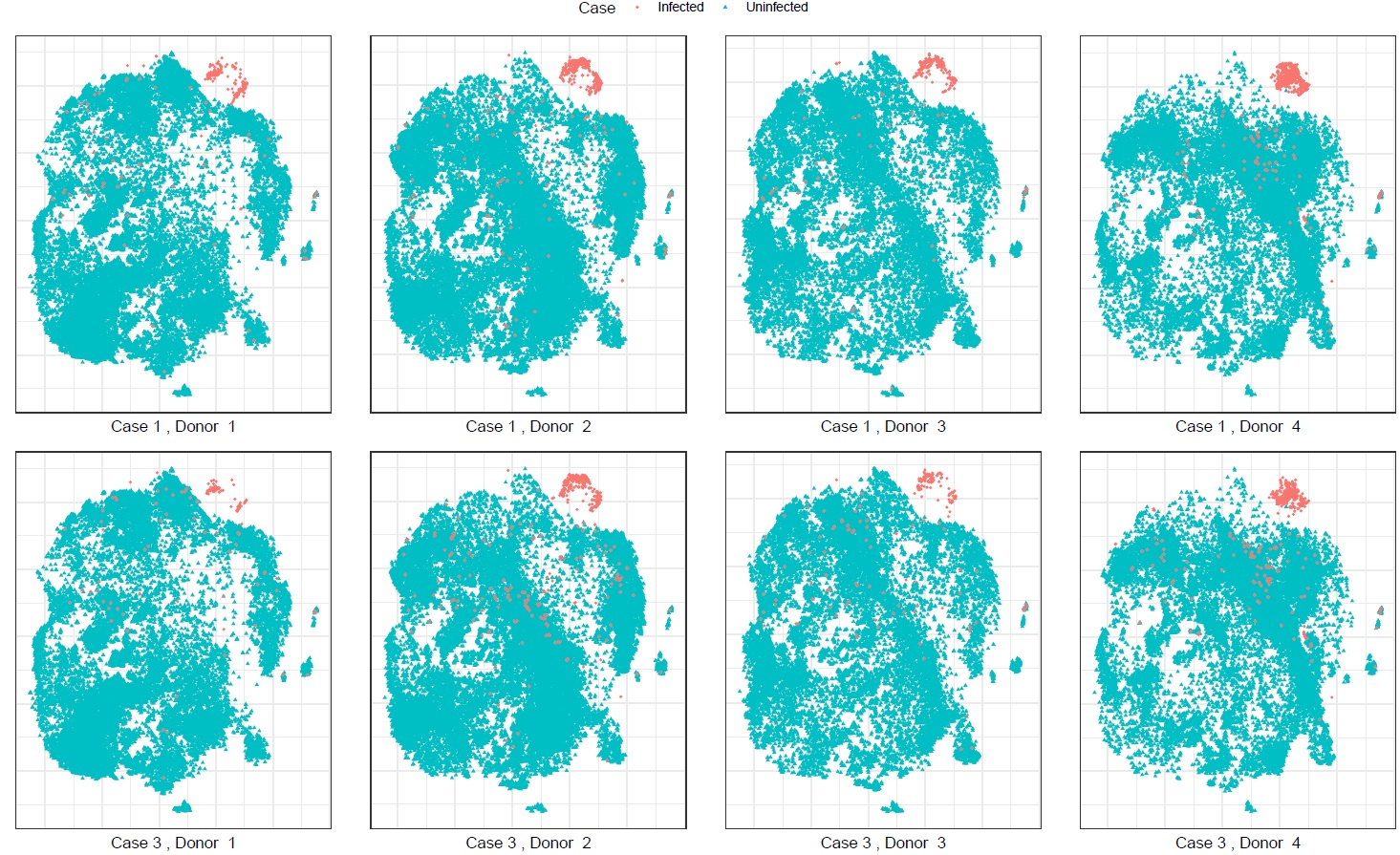}}
	\caption{This is a t-SNE plot \citep{maaten2008visualizing} of the data for Cases 1 and 3 where the $d=35$ dimensional uninfected and infected cellular expression levels are projected to a two dimensional space for each of the four donors.}
	\label{fig:tsne_plot1}
\end{figure}
\begin{figure}[!t]
	\centering
	{\includegraphics[width=1\linewidth]{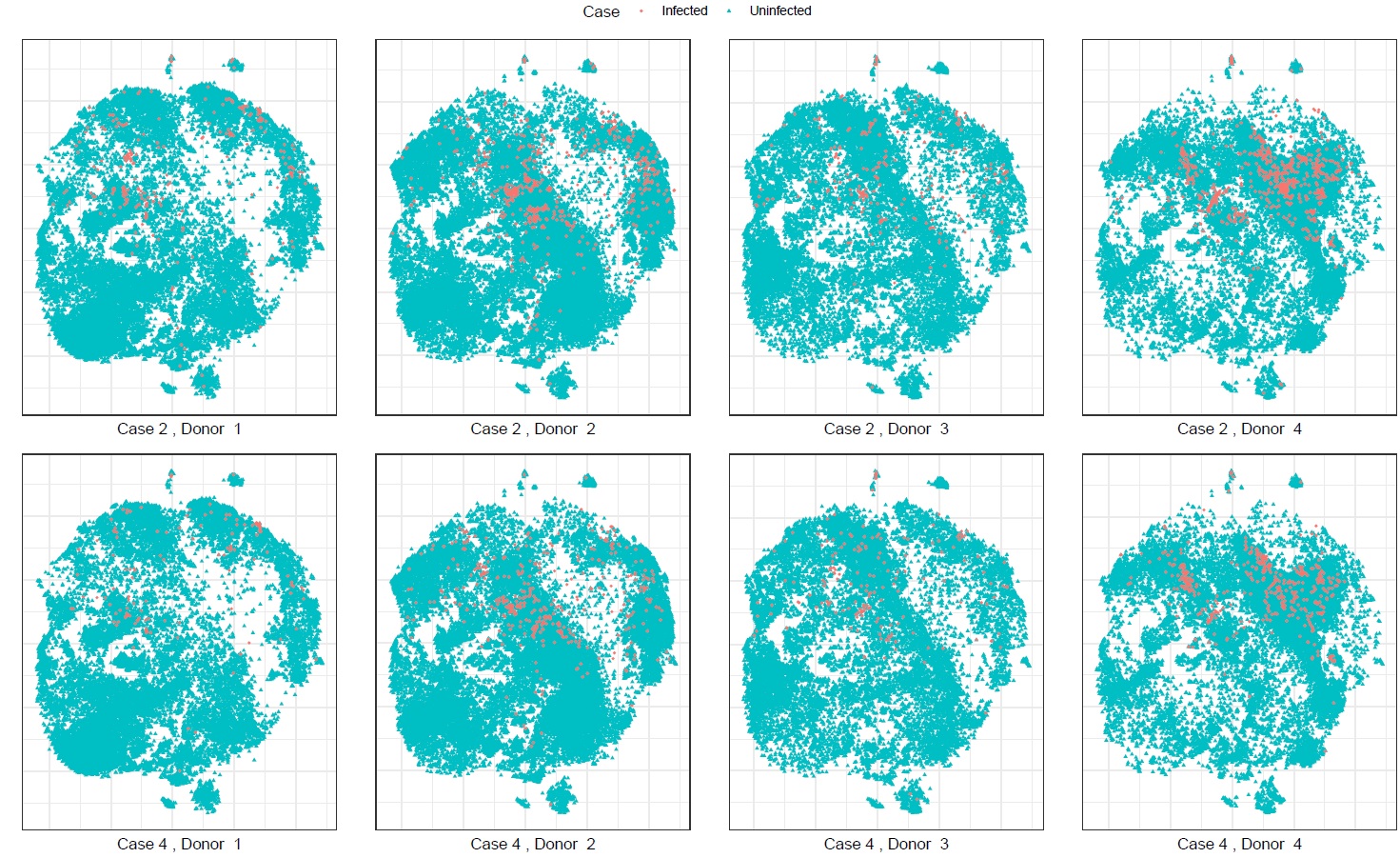}}
	\caption{This is a t-SNE plot of the data for Cases 2 and 4 where the $d=31$ dimensional uninfected and infected cellular expression levels are projected to a two dimensional space for each of the four donors.}
	\label{fig:tsne_plot2}
\end{figure}
\red{Figures \ref{fig:tsne_plot1} and \ref{fig:tsne_plot2}  present t-SNE plots \citep{maaten2008visualizing} of the data where the $d$ dimensional uninfected and infected cellular expression levels are projected to a two dimensional space for each of the four donors across the four cases. While these plots exhibit the underlying heterogeneity in the uninfected sample and the sample size imbalance, instances of remodeling are also visible in cases 1 and 3 (figure \ref{fig:tsne_plot1}) wherein a relatively large fraction of the infected cells in red occupy a distinct position in the two dimensional space with no overlap with their uninfected counterparts.} 

For conducting statistical hypothesis test for the above four cases, along with our proposed \texttt{TRUH} procedure, we also use the siz other competing tests statistics described in section \ref{sec:sims} which are the \texttt{Energy} test \citep{aslan2005new}, \texttt{CrossMatch} \citep{rosenbaum2005exact}, \texttt{E Count}\citep{friedman1979multivariate}, \texttt{GE Count} \citep{chen2017new}, \texttt{WE Count} \citep{chen2018weighted} and \texttt{MTE Count} \citep{zhangmaxtypec}. \red{As discussed in section \ref{sec:sims} these six testing procedures are not designed to test the composite null hypothesis of equation \eqref{eq:setup-3} and rely on a simple null hypothesis $H_0:F_0=G$ for inference. In this section we highlight the biologically incorrect inference that may result when these tests are used for testing the composite null hypothesis of no remodeling.}

Figure \ref{fig:realdata} presents the values of the \texttt{TRUH} statistic and the $2.5\text{-th}, 50\text{-th}, 97.5\text{-th}$ percentiles of the associated null distribution. From the plots, it is evident that at 5\% level our proposed procedure correctly captures the biological phenomena of remodeling or no remodeling across the four cases. The other six tests fail to correctly detect the phenomena in some of the four cases due to heterogeneity in the data. Next, we describe the results in further details. 
\begin{figure}[!b]
	\centering
	\includegraphics[width=0.95\linewidth]{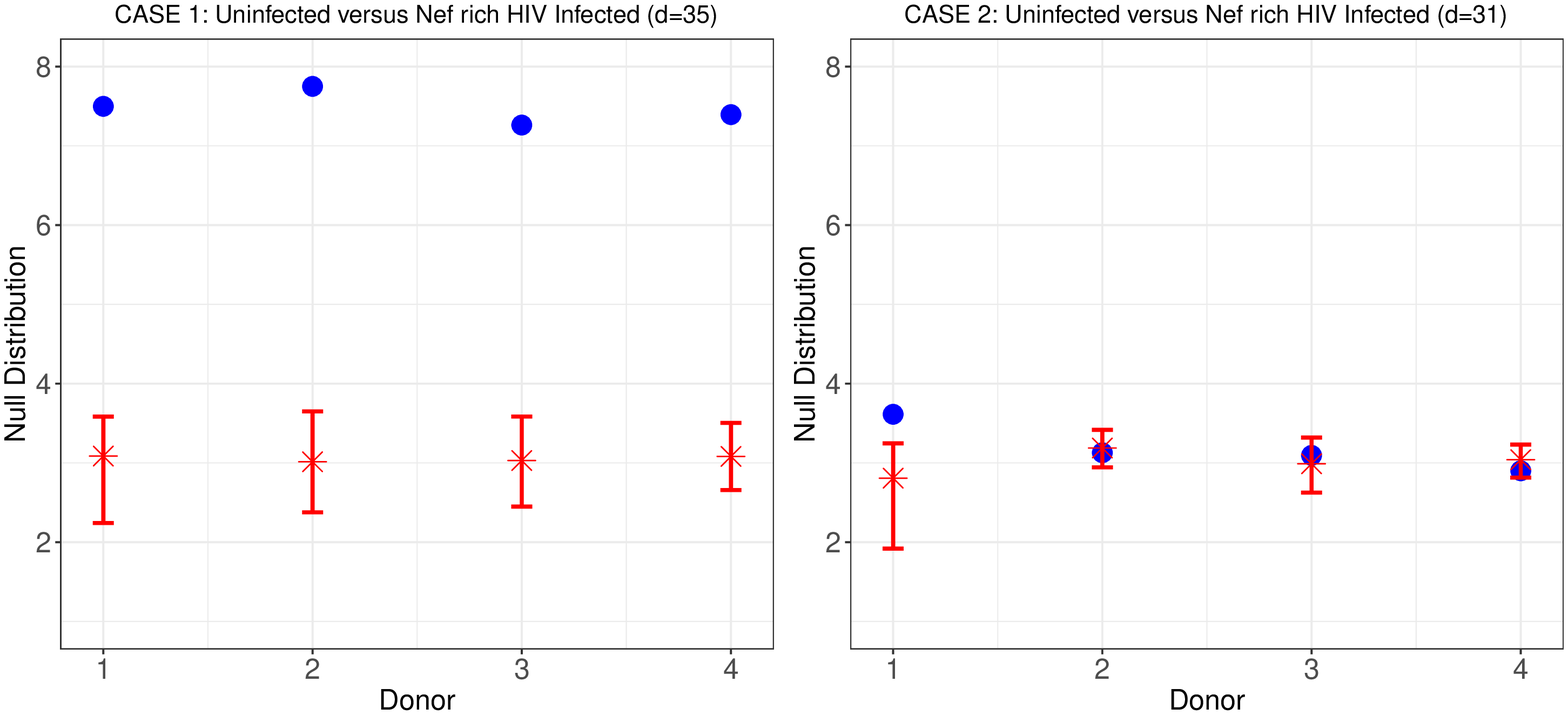}
	\includegraphics[width=0.95\linewidth]{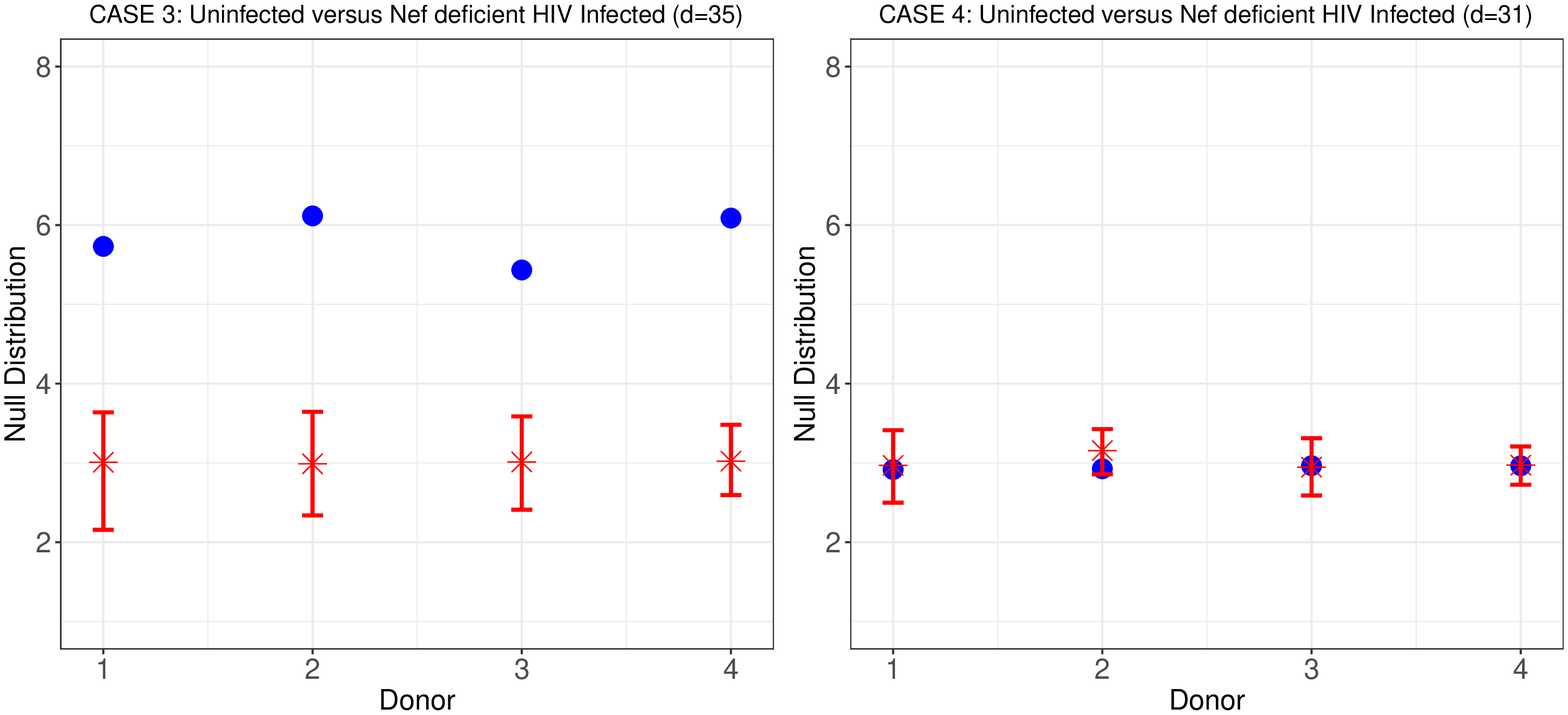}
	\caption{\small{Null distribution of the \texttt{TRUH} statistic under cases 1-4. The blue dots are magnitudes of \texttt{TRUH} statistic for each donor under the four cases while the red bars indicate the $2.5\text{-th}, 50\text{-th}$ and $97.5\text{-th}$ percentiles of the bootstrapped null distribution obtained from algorithm \ref{algo:max} with $\tau_{fc}=1.1$.}}
	\label{fig:realdata}
\end{figure} 
In Tables \ref{tab:test3}  and \ref{tab:test5} we report the p-values of the seven competing tests statistics for testing remodeling under HIV infection in \texttt{Nef-rich} environment. 
In Table \ref{tab:test3} all seven tests reject the null hypothesis of no remodeling, thus verifying that CD4+ T cells exhibit remodeling under the influence of \texttt{Nef} rich HIV infection. In Table \ref{tab:test5}, however, we present the p-values of the tests when the four cell surface markers, \texttt{CD4, CCR5, CD28,} and \texttt{CD62L}, known to be down regulated by \texttt{Nef}, were removed from our analysis $(d=31)$. Other than donor 1, \texttt{TRUH} indicates no remodeling in this scenario for the remaining three donors which is expected given the mechanism of remodeling that \texttt{Nef} pursues by down-regulating \texttt{CD4, CCR5, CD28,} and \texttt{CD62L} \citep{swigut2001mechanism}. The absence of these four cell markers from the uninfected and infected samples reduces the phenotypic gap between these samples as measured through their surface markers. The top row in Figure \ref{fig:realdata} shows that while the null distribution shifts down from CASE 1 (left plot) to CASE 2 (right plot) across all four donors, the drop in the magnitude of the \texttt{TRUH} statistic is far more \red{substantial} %significant 
when the four surface markers are excluded. The remaining six test statistics appear to be insensitive to these subtle changes in the uninfected and infected samples across the two scenarios and, continue to detect remodeling in Case 2 which is actually no remodeling but preferential infection. This demonstrates their inability to handle heterogeneity in the data that \texttt{TRUH} tackles via the composite null testing framework of equations \eqref{eq:setup-1}-\eqref{eq:setup-3}. 

In Tables \ref{tab:test4} and  \ref{tab:test6}, we present the p-values of the seven test statistics for testing the null hypothesis $H_0$ of no remodeling when the HIV infected sample lacks the critical \texttt{Nef} gene (see Construction and validation of reporter viruses in Supplemental Experimental Procedures of \citet{cavrois2017mass} for details around the generation of \texttt{Nef-deficient} HIV infected cells). %Table \ref{tab:test6}, on the other hand, presents the p-values under a similar setting as scenario 2 (table \ref{tab:test4}) wherein the four cell surface markers, \texttt{CD4, CCR5, CD28,} and \texttt{CD62L} are removed from our analysis $(d=31)$ while the HIV infected sample continues to lack the \texttt{Nef} gene. 
We see that \texttt{TRUH} rejects the null hypothesis of no remodeling in CASE 3 (Table \ref{tab:test4}) while fails to do so in CASE 4 (Table \ref{tab:test6}), thus corroborating the biological phenomena that (a) \texttt{Nef} independent remodeling is prevalent in HIV infected cells and, (b) even in the  absence of \texttt{Nef}, the down regulation of the four surface markers by other mechanisms contributes to remodeling. The bottom row in Figure \ref{fig:realdata} presents the values of the \texttt{TRUH} statistic and the $2.5\text{-th}, 50\text{-th}, 97.5\text{-th}$ percentiles of the associated null distribution. Similar observations from the top row continue to hold for cases 3 and 4 in the bottom row of Figure \ref{fig:realdata} wherein the drop in the magnitude of \texttt{TRUH} statistic is far more significant when the four surface markers are excluded. Moreover, from Figure \ref{fig:realdata}, we see that for every donor the \texttt{TRUH} statistic obeys a rank ordering across the scenarios which is of the form $\texttt{TRUH}_1>\texttt{TRUH}_3>\texttt{TRUH}_2>\texttt{TRUH}_4$ where $\texttt{TRUH}_s$ is the magnitude of the \texttt{TRUH} statistic under cases $s=1,\ldots,4$. This is not accidental for the relative strength of remodeling is known to be highest under the influence of \texttt{Nef-rich} HIV infection and more so when \texttt{Nef} down-regulates the four cell surface markers, \texttt{CD4, CCR5, CD28,} and \texttt{CD62L}. As was seen in cases 1 and 2, the remaining six tests continue to side in favor of remodeling in both cases 3 and 4, thus reflecting their relative lack of conservatism in detecting remodeling under our composite null testing framework.

\begin{table}[!t]
	\centering
	\caption{p-values in CASE 1: Uninfected versus \texttt{Nef-rich} HIV Infected for entire 35 markers.}
	\scalebox{0.8}{
		\begin{tabular}{lcccc}
			\hline
			& \multicolumn{1}{c}{Donor 1} & \multicolumn{1}{c}{Donor 2} & \multicolumn{1}{c}{Donor 3} & \multicolumn{1}{c}{Donor 4} \\
			Tests & \multicolumn{1}{l}{$m = 24,984, n = 245$} & \multicolumn{1}{l}{$m = 31,552, n = 521$} & \multicolumn{1}{l}{$m = 17,704, n = 211$} & \multicolumn{1}{l}{$m = 22,830, n = 660$} \\
			\hline
			\texttt{Energy} &  $<0.001$    & $<0.001$      &  $<0.001$     & $<0.001$ \\
			\texttt{CrossMatch} & $0.005$      &   $0.005$    &  $0.005$     & $0.005$ \\
			\texttt{E Count} &   $<0.001$    & $<0.001$      &  $<0.001$     & $<0.001$ \\
			\texttt{GE Count} &  $<0.001$    & $<0.001$      &  $<0.001$     & $<0.001$ \\
			\texttt{WE Count} &   $<0.001$    & $<0.001$      &  $<0.001$     & $<0.001$  \\
			\texttt{MTE Count} &   $<0.001$    & $<0.001$      &  $<0.001$     & $<0.001$ \\
			\texttt{TRUH}  &   $<0.001$    & $<0.001$      &  $<0.001$     & $<0.001$ \\
			\hline
	\end{tabular}}%
	\label{tab:test3}%
	\vspace{10pt}
	\caption{p-values in CASE 2: Uninfected versus \texttt{Nef-rich} HIV Infected for 31 invariant markers.}
	\scalebox{0.8}{
		\begin{tabular}{lcccc}
			\hline
			& \multicolumn{1}{c}{Donor 1} & \multicolumn{1}{c}{Donor 2} & \multicolumn{1}{c}{Donor 3} & \multicolumn{1}{c}{Donor 4} \\
			Tests & \multicolumn{1}{l}{$m = 24,984, n = 245$} & \multicolumn{1}{l}{$m = 31,552, n = 521$} & \multicolumn{1}{l}{$m = 17,704, n = 211$} & \multicolumn{1}{l}{$m = 22,830, n = 660$} \\
			\hline
			\texttt{Energy} &  $<0.001$    & $<0.001$      &  $<0.001$     & $<0.001$ \\
			\texttt{CrossMatch} & $0.005$      &   $0.005$    &  $0.005$     & $0.005$ \\
			\texttt{E Count} &   $<0.001$    & $<0.001$      &  $<0.001$     & $<0.001$ \\
			\texttt{GE Count} &  $<0.001$    & $<0.001$      &  $<0.001$     & $<0.001$ \\
			\texttt{WE Count} &   $<0.001$    & $<0.001$      &  $<0.001$     & $<0.001$  \\
			\texttt{MTE Count} &   $<0.001$    & $<0.001$      &  $<0.001$     & $<0.001$ \\
			\texttt{TRUH}  &  $<0.001$     &   $0.67$    &  $0.274$     & $0.914$ \\
			\hline
	\end{tabular}}%%
	\label{tab:test5}%
\end{table}%

\begin{table}[!h]
	\centering
	\caption{p-values in CASE 3: Uninfected versus \texttt{Nef-deficient} HIV Infected for the entire 35 markers.}
	\scalebox{0.8}{
		\begin{tabular}{lcccc}
			\hline
			& \multicolumn{1}{c}{Donor 1} & \multicolumn{1}{c}{Donor 2} & \multicolumn{1}{c}{Donor 3} & \multicolumn{1}{c}{Donor 4} \\
			Tests & \multicolumn{1}{l}{$m = 24,984, n = 129$} & \multicolumn{1}{l}{$m = 31,552, n = 382$} & \multicolumn{1}{l}{$m = 17,704, n = 174$} & \multicolumn{1}{l}{$m = 22,830, n = 440$} \\
			\hline
			\texttt{Energy} &  $<0.001$    & $<0.001$      &  $<0.001$     & $<0.001$ \\
			\texttt{CrossMatch} & $0.005$      &   $0.005$    &  $0.005$     & $0.005$ \\
			\texttt{E Count} &   $<0.001$    & $<0.001$      &  $<0.001$     & $<0.001$ \\
			\texttt{GE Count} &  $<0.001$    & $<0.001$      &  $<0.001$     & $<0.001$ \\
			\texttt{WE Count} &   $<0.001$    & $<0.001$      &  $<0.001$     & $<0.001$  \\
			\texttt{MTE Count} &   $<0.001$    & $<0.001$      &  $<0.001$     & $<0.001$ \\
			\texttt{TRUH}  &  $<0.001$    &  $<0.001$     &  $<0.001$     &  $<0.001$  \\
			\hline
	\end{tabular}}%
	\label{tab:test4}%
	\vspace{10pt}
	\caption{p-values in CASE 4: Uninfected versus \texttt{Nef-deficient} HIV Infected for 31 invariant markers.}
	\scalebox{0.8}{
		\begin{tabular}{lcccc}
			\hline
			& \multicolumn{1}{c}{Donor 1} & \multicolumn{1}{c}{Donor 2} & \multicolumn{1}{c}{Donor 3} & \multicolumn{1}{c}{Donor 4} \\
			Tests & \multicolumn{1}{l}{$m = 24,984, n = 129$} & \multicolumn{1}{l}{$m = 31,552, n = 382$} & \multicolumn{1}{l}{$m = 17,704, n = 174$} & \multicolumn{1}{l}{$m = 22,830, n = 440$} \\
			\hline
			\texttt{Energy} &  $<0.001$    & $<0.001$      &  $<0.001$     & $<0.001$ \\
			\texttt{CrossMatch} & $0.005$      &   $0.005$    &  $0.005$     & $0.005$ \\
			\texttt{E Count} &   $<0.001$    & $<0.001$      &  $<0.001$     & $<0.001$ \\
			\texttt{GE Count} &  $<0.001$    & $<0.001$      &  $<0.001$     & $<0.001$ \\
			\texttt{WE Count} &   $<0.001$    & $<0.001$      &  $<0.001$     & $<0.001$  \\
			\texttt{MTE Count} &   $<0.001$    & $<0.001$      &  $<0.001$     & $<0.001$ \\
			\texttt{TRUH}  &   $0.58$    &  $0.94$     &  $0.464$     & $0.524$ \\
			\hline
	\end{tabular}}%%
	\label{tab:test6}%
\end{table}%
{The remodeling analysis of the HIV-infected T Cells reveals that our proposed testing procedure, \texttt{TRUH}, conforms to the biologically validated phenomenon of remodeling of human tonsillar T cells under both \texttt{Nef-rich} (case 1) and \texttt{Nef-deficient} (case 3) HIV infection. However unlike traditional tests that continue to infer remodeling in cases 2 and 4, \texttt{TRUH} detects preferential infection and concludes that phenotypic differences between the HIV infected and uninfected T cells are primarily driven by variations in the expression levels of \texttt{CD4, CCR5, CD28,} and \texttt{CD62L} across the uninfected and infected cells. Moreover, through cases 1 and 2, \texttt{TRUH} corroborates the findings in \citet{chaudhuri2007downregulation,michel2005nef,swigut2001mechanism,vassena2015hiv} that HIV remodeling of the T cells is driven by \texttt{Nef} dependent down-regulation of \texttt{CD4, CCR5, CD28, CD62L} while through cases 3 and 4 \texttt{TRUH} reveals \texttt{Nef} independent remodeling of T cells as evidenced in \citet{cavrois2017mass}.}

\section{Optimality Properties of the \texttt{TRUH} Statistic} 
\label{sec:Tnm_limit}

In this section we derive the $L_2$-limit of the proposed test statistic $T_{m, n}$ in the usual limiting regime where the sample sizes  $m, n \to \infty$, such that $n/m \to \rho >0$. This can be used to choose a cut-off and construct a test based on  $T_{m, n}$, and  show asymptotic consistency for biologically relevant location alternatives.  %As mentioned earlier, in our single-cell application there is significant imbalance between the sample sizes of the uninfected and infected population. Typically, we are concerned with regimes where the number of uninfected samples $m$ varies between $5,000$ to $100,000$, and $n/m$ varies from $0.25$ and $0.5$~\citep{cavrois2017mass,sen2015single}.

Recall, that the uninfected and infected samples are denoted as
\begin{align}\label{2}
\bm U_m=\{U_1, \ldots, U_m\} \quad \text{ and } \quad \bm V_n =\{V_1, \ldots, V_n\}, 
\end{align} 
which are i.i.d. samples from two unknown densities $f_0$ and $g$ in $\mbb{R}^d$, respectively. %(Note that for the virology example $f \in \mc{F}(f_0)$ (recall \eqref{eq:setup-2}), where $f_0$ is the density of the baseline distribution \eqref{eq:setup-1}.) 
To derive the limit of $T_{m, n}$ we need certain integrability/moment assumptions on $f_0$ and $g$.  

\begin{assumption}\label{assumption:TMn} The densities $f_0$ and $g$ have a common support $S \subseteq \mbb{R}^d$ and satisfy either one of the following two assumptions depending on the dimension:  
	\begin{enumerate}
		\item For $d \leq 2$, the support $S$ is compact (with a non-empty interior) and $f_0$ and $g$ are bounded away from zero on $S$. 
		
		\item For $d \geq 3$,  $f\red{_0}$ and $g$  satisfy the following conditions: $\int_S f_0(y)^{1-\frac{1}{d}} \mathrm d y < \infty$, $\int_S f_0(y)^{-\frac{1}{d}} g(y) \mathrm d y < \infty$, and $\int_S |y|^r f_0(y) \mathrm d y < \infty$, $\int_S |y|^r g(y) \mathrm d y < \infty$, for some $r > {d}/{(d-2)}$.  
	\end{enumerate}
\end{assumption}

To describe the limit of $T_{m, n}$ we need a few definitions: For $\lambda >0$, denote by $\mc{P}_\lambda$ the homogeneous Poisson process of intensity \red{$\lambda$} %$\lambda\geq 0$ 
in $\mbb{R}^d$, and $\mc{P}_{\lambda}^x = \mc{P}_{\lambda} \cup \{x\}$, for $x\in \mbb{R}^d$. Now, define the following two quantities: 
\begin{align}\label{eq:z12}
\zeta_1(\pmb 0, \mathcal P_1) = \inf_{b \in \mc{P}_1}||b|| \quad \text{and} \quad \zeta_2(\pmb 0, \mathcal P_1) = \inf_{b \in \mc{P}_1\backslash N(\pmb 0, \mc{P}_1)} || N(\pmb 0, \mc{P}_1)-b||,
\end{align}
that is, the distance from the origin $\pmb 0$ in $\mbb{R}^d$ to its nearest neighbor in the Poisson process $\mc{P}_1$, and the distance of this point to its neighbor in $\mc{P}_1$, respectively. 

\begin{theorem}\label{THM:EXPTNM} Let $T_{m, n}$ be as in \eqref{eq:TNM}. Then, for $f_0$ and $g$ as in Assumption $\ref{assumption:TMn}$ above, as $m, n \rightarrow \infty$ such that $n/m \rightarrow \rho$, 
	\begin{align}\label{eq:TNM_limit}
	T_{m, n}  \stackrel{L_2} \rightarrow  \varphi(f_0, g, \rho)= \rho^{\frac{1}{d}} \Delta_d \int \frac{g(y)}{f_0(y)^{\frac{1}{d}} } \mathrm d y,
	\end{align}
	with $\Delta_d= (\zeta_2-\zeta_1 )$, where
	\begin{itemize}
		\item[--] $\zeta_1= \mbb{E} \zeta_1(\pmb 0, \mathcal P_1)$, the expected distance from the origin in $\bm 0 \in \mbb{R}^d$ to its nearest neighbor in $\mc{P}_1$, and 
		\item[--]   $\zeta_2= \mbb{E} \zeta_2(\pmb 0, \mathcal P_1)$, the excepted distance between the nearest neighbor of the origin in $\mc{P}_1$ to its nearest neighbor in $\mc{P}_1$. 
	\end{itemize}
\end{theorem}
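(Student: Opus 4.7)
The plan is to prove $L_2$ convergence of $T_{m,n} = n^{1/d}|\bar D_{m,n} - \bar C_{m,n}|$ by establishing two complementary facts: the expected value of the signed quantity $n^{1/d}(\bar C_{m,n} - \bar D_{m,n})$ tends to $\rho^{1/d}(\zeta_2 - \zeta_1)\int g(y)/f_0(y)^{1/d}\,\mathrm dy$, and its variance tends to zero. Since the target limit is nonnegative, $L_2$ convergence of the signed quantity transfers immediately to the absolute value $T_{m,n}$. The key probabilistic input is a local Poisson approximation for nearest-neighbor distances, combined with uniform integrability arguments tailored to Assumption~\ref{assumption:TMn}.

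First I would compute the conditional expectations $\mbb E[D_1\mid V_1 = y]$ and $\mbb E[C_1\mid V_1 = y]$ in the limit. Conditioning on $V_1 = y$ with $f_0(y) > 0$, the rescaled point cloud $\{m^{1/d}(U_j - y): 1\le j\le m\}$ converges weakly on any compact window to a homogeneous Poisson process of intensity $f_0(y)$, and a further scaling by $f_0(y)^{-1/d}$ reduces this to the unit-rate process $\mc P_1$. Consequently, $m^{1/d}D_1$ converges in distribution (given $V_1 = y$) to $f_0(y)^{-1/d}\zeta_1(\bm 0,\mc P_1)$, and, since $N(V_1,\bm U_m)$ lies within $O_p(m^{-1/d})$ of $y$ with the remaining $m-1$ i.i.d.\ points forming the same local Poisson structure, $m^{1/d}C_1$ converges in distribution to $f_0(y)^{-1/d}\zeta_2(\bm 0,\mc P_1)$.

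Next I would upgrade these distributional convergences to convergences of expectations. For $d\le 2$, the compactness of $S$ and the lower bound on $f_0$ in Assumption~\ref{assumption:TMn} provide a deterministic envelope and the upgrade is immediate from dominated convergence. For $d\ge 3$, the conditions $\int f_0(y)^{1-1/d}\,\mathrm dy < \infty$ and $\int f_0(y)^{-1/d}g(y)\,\mathrm dy < \infty$ furnish integrable envelopes for the pointwise limits $f_0(y)^{-1/d}\zeta_1$ and $f_0(y)^{-1/d}\zeta_2$, while the tail moments $\int |y|^r f_0,\int|y|^r g < \infty$ for $r > d/(d-2)$ control contributions from points with large norm where the local Poisson approximation is harder to control. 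Integrating against $g$ and using $n/m\to\rho$ then yields the claimed limit of $n^{1/d}\mbb E[\bar C_{m,n} - \bar D_{m,n}]$.

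For the variance, I would exploit the fact that conditional on $\bm U_m$ the pairs $(D_i, C_i)_{i=1}^n$ are i.i.d.\ and apply the total-variance decomposition
\begin{align*}
\mathrm{Var}(\bar D_{m,n} - \bar C_{m,n}) = \tfrac{1}{n}\,\mbb E\!\left[\mathrm{Var}(D_1 - C_1 \mid \bm U_m)\right] + \mathrm{Var}\!\left(\mbb E[D_1 - C_1 \mid \bm U_m]\right).
\end{align*}
The first term is $O(n^{-1}m^{-2/d})$ from the typical scale of individual nearest-neighbor gaps, and the second term is $O(m^{-1-2/d})$ via an Efron--Stein estimate, since replacing a single $U_j$ perturbs $\mbb E[D_1 - C_1\mid\bm U_m]$ only on a Voronoi cell of measure $O(m^{-1})$, by an amount $O(m^{-1/d})$. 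Multiplying by $n^{2/d}$ and using $n/m\to\rho$ shows $\mathrm{Var}(T_{m,n}) = O(n^{-1}) \to 0$. The main obstacle will be producing the tight uniform-integrability envelopes for $m^{1/d}D_1$ and $m^{1/d}C_1$ that are valid uniformly in $m$, especially near regions where $f_0$ is small and near boundaries of $S$ where the local Poisson approximation degrades; the stabilization framework of Penrose and Yukich~\cite{py} supplies the appropriate $L^p$ machinery for this step and is what makes the moment conditions in Assumption~\ref{assumption:TMn} exactly sufficient.
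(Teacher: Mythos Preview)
Your overall architecture matches the paper's: separate the statistic into $n^{1/d}\bar D_{m,n}$ and $n^{1/d}\bar C_{m,n}$, show convergence of the mean via local Poissonization / stabilization of the nearest-neighbor functionals, and then show the variance vanishes. The paper also treats $\bar D$ and $\bar C$ individually (its Lemmas~\ref{lm:D} and~\ref{lm:C}) and then combines, so your decision to work with the signed difference is cosmetic. Where you genuinely diverge is in two places.

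\emph{Variance.} The paper does not use an Efron--Stein bound. Instead it exploits exchangeability of the $D_i$ to write
\[
\mbb E\Bigl(n^{-1+1/d}\textstyle\sum_i D_i\Bigr)^2
= n^{-1+2/d}\,\mbb E D_1^2 + (1+o(1))\,n^{2/d}\,\mbb E(D_1 D_2),
\]
shows the diagonal term is $o(1)$ by the second-moment bound, and proves $m^{2/d}\mbb E(D_1 D_2)\to \bigl(\mbb E\zeta_1(\bm 0,\mc P_{f_0(V)})\bigr)^2$ via two-point stabilization (asymptotic independence of the local pictures around $V_1$ and $V_2$). This stays entirely inside the Penrose--Yukich framework you already invoke for the mean, and avoids having to control Voronoi-cell sizes. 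Your Efron--Stein route is plausible, but the claim ``Voronoi cell of measure $O(m^{-1})$, perturbation $O(m^{-1/d})$'' is only a heuristic scale; for $d\ge 3$ with unbounded support and $f_0$ not bounded below, cells around low-density $U_j$'s are large in both volume and diameter, and the $C$-term involves not just the cell of $U_j$ but the cells of all reverse nearest neighbors of $U_j$. Turning your $O(m^{-1-2/d})$ into a rigorous bound would again require the moment conditions in Assumption~\ref{assumption:TMn}, and the bookkeeping is at least as heavy as the paper's approach.

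\emph{Uniform integrability.} You correctly flag this as the main obstacle, but the paper does not obtain it from generic $L^p$ stabilization machinery. It proves directly (Lemma~\ref{lm:moment2}) that $\sup_m \mbb E\bigl[\zeta_1(\bm 0, m^{1/d}(\bm U_m - V_1))\bigr]^2<\infty$ by observing that the squared-distance nearest-neighbor functional $\phi(A,B)=\sum_{a\in A}\sum_{b\in B}\|a-b\|^2\bm 1\{b=N(a,B)\}$ is subadditive under spatial partitions, applying a growth bound $\phi(A,B)\le \mathrm{diam}(A\cup B)^2|A\cup B|^{(d-2)/d}$, and summing over dyadic annuli; this is precisely where the condition $r>d/(d-2)$ enters. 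Your proposal would benefit from naming this mechanism rather than deferring to \cite{py}.
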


The above theorem gives the $L_2$-limit of the test statistic for general distributions $f_0$ and $g$. The proof of the theorem, which is given in the supplementary materials (Section B), uses the machinery of  geometric stabilization, introduced by \citet{py}, which obtains the asymptotics of nearest neighbor based functionals in terms of functionals defined on a homogeneous Poisson process. Before we discuss how the result in Theorem \ref{THM:EXPTNM} can be used to   construct a test based on $T_{m, n}$ for the hypothesis \eqref{eq:setup-2}, we discuss some properties and the consequences of the limit in \eqref{eq:TNM_limit}: 

\begin{itemize}
	
	\item Note that the finiteness of the limit in \eqref{eq:TNM_limit} is ensured by Assumption \ref{assumption:TMn}.  For $d \geq 3$, the moment conditions in Assumption \ref{assumption:TMn} are required to establish the $L_2$ convergence in \eqref{eq:TNM_limit}.  This assumption can be relaxed to $\int_S |y|^r f_0(y) \mathrm d y < \infty$ and $\int_S |y|^r g(y) \mathrm d y < \infty$, for some $r > {d}/{(d-1)}$, if we are only interested in $L_1$ convergence (by combining the proof of Theorem \ref{THM:EXPTNM} with that of \cite[Proposition 3.2]{py}). However, this still does not apply for $d=1$, where it is necessary to assume the compactness of the support, in order to ensure that the limit in \eqref{eq:TNM_limit} is finite.  This is a well-known constraint which arises in a large family of random geometric graphs, while dealing with the asymptotics of edge-lengths (see, for example, \cite[Theorem 1.1]{py} and the references therein). Even though the compactness assumption technically rules out some natural distributions, from a practical standpoint, there is no real concern because one can approximate the univariate density by truncating it to a large interval, on which the above result applies. \red{Incidentally, there has been recent work on relaxing the compactness and density bounded below  assumptions in the related problems of nearest-neighbor classification \cite{cannings2017local,gadat2016classification} and entropy estimation \cite{berrett2019efficient}, which could provide useful insights on how to relax these assumptions from Theorem \ref{THM:EXPTNM}, and what are the effects of tail behavior on the heterogeneity testing problem.}

	\item  Note that $\zeta_1$ and $\zeta_2$ are both constants, which depend only on the dimension $d$. In fact, $\zeta_1$ has a closed form expression which can be easily derived. To this end, denote by $V_d$ and $S_d$ the volume and the surface area of the unit ball in $\mbb{R}^d$, respectively. It is easy to verify that $S_d=d V_d$. Moreover, for $r > 0$ and $x \in \mbb{R}^d$, denote by $B(x, r)$ the ball of radius $r$ centered at $x \in \mbb{R}^d$.  Then, using the observation that a point $b$ is the nearest neighbor of the origin, if are there no points of the Poisson process $\mc{P}_1$ in the ball $B(0, ||b||)$, it follows that 
	\begin{align*}
	\zeta_1=\mbb{E} (\zeta_1(\pmb 0, \mathcal P_1)   = \int ||b|| \P(b=N(\bm 0, \mc{P}_1^{0, b})) \mathrm db  
	%= \int_{\mbb{R}^d} ||b|| e^{-V_d ||b||^d} \mathrm d b 
	= S_d \int_{0}^\infty t^d e^{-V_d t^d } \mathrm d t~, 
	%\tag*{(by first integrating over the surface of $B(0, t)$ and then over $t \geq 0$)} 
	\end{align*} 
	which, by the change of variable $x=V_dt^d$ equals
	\begin{align}   \label{eq:z1_limit}
	\left(\frac{1}{V_d}\right)^{\frac{1}{d}} \int_0^\infty x^{\frac{1}{d}} e^{-x} \mathrm dx  
	= \left(\frac{1}{V_d}\right)^{\frac{1}{d}}\Gamma\left(\frac{d+1}{d}\right),
	\end{align} 
	where $\Gamma(\cdot)$ denotes the Gamma function. 
	%On the other hand, $\zeta_2$ does not have a closed form expression, in general. However, any application of the well-known FKG inequality for Poisson processes \citet{janson_fkg,poisson_process_book} shows that $\zeta_2 \geq \zeta_1$ (see Remark \ref{remark:12} in Appendix \ref{sec:pfthm_CD} for details). 

\end{itemize}

%\begin{remark}\label{rem:exp2z} Note that $\Delta_d$ defined above is constant depending only on $d$. In fact, 
%\begin{align}\label{eq:2zintegral}
%\mbb{E} (\zeta_2(\pmb 0, \mathcal P_1)) = \int_{\mbb{R}^d} \int_{z_2 \in \bar B(\pmb 0, ||z_1||)} ||z_1-z_2||  \P( z_1 =  N(\pmb 0, \mc{P}_1) \text{ and } z_2 =  N(z_1, \mc{P}_1\backslash\{\pmb 0\}) \mathrm d z_2 \mathrm d z_1, 
%\end{align} 
%where $\bar B(\pmb 0, ||z_1||)$ is the complement of the ball of radius $||z_1||$ centered at $\pmb 0$. In general, this probably does not have a closed form expression. \blue{(This calculation is not correct. Need to do this more carefully.)} \red{However, in dimension $d=1$ \eqref{eq:2zintegral} simplifies to 
%\begin{align*} 
%\mbb{E} (\zeta_2(\pmb 0, \mathcal P_1)) = \int_{-\infty}^\infty \int_{z_2 > |z_1|} |z_1-z_2|  e^{- (|z_1|+|z_2|)} \mathrm d z_2 \mathrm d z_1=3. 
%\end{align*}  
%This implies, in dimension 1, $\Delta_1=\frac{1}{2}-3=-\frac{5}{2}$, using $V_1=2$, and $T_{m, n} \rightarrow - \frac{5 \rho}{2} \int \frac{g(y)}{f(y)} \mathrm d y$.}
%\end{remark}

Theorem \ref{THM:EXPTNM} shows that for $K$ fixed densities $f_1, \ldots, f_K$, and $f_0=\sum_{a=1}^K w_a f_a$, 
\begin{align}\label{eq:fg}
\sup_{g \in \mc{F}(f_0)} \varphi(f_0, g, \rho)&=\rho^{\frac{1}{d}} \Delta_d \sup_{\lambda_1, \lambda_2, \ldots, \lambda_K}\sum_{a=1}^K \lambda_a \int \frac{f_a(y)}{\left(\sum_{b=1}^K w_b f_b(y)\right)^{\frac{1}{d}}} \mathrm d y \nonumber \\ 
&=\rho^{\frac{1}{d}} \Delta_d \max_{1 \leq a \leq K} \left\{ \int \frac{\lambda_a  f_a(y)}{\left(\sum_{b=1}^K w_b f_b(y) \right)^{\frac{1}{d}}} \mathrm d y \right\},
\end{align}
where the last step uses the fact that $\lambda_a \in [0, 1]$, for $1 \leq a \leq K$, and $\sum_{a=1}^K \lambda_K=1$. Note that the RHS above is unknown, because the densities  $f_1, \ldots, f_K$, the weights  $w_1, \ldots, w_K$, as well as the number $K$ of mixture components, are all unknown. However, if we can consistently estimate the RHS of \eqref{eq:fg}, then the test which rejects $H_0$ in \eqref{eq:setup-3} when $T_{m, n}$ is greater than the estimated value of \eqref{eq:fg}, would have zero asymptotic Type I error and would be powerful whenever $g$ has some separation from the set $\mc{F}(f_0)$ (recall definition in \eqref{eq:setup-2}).

The approach described above is, in general, infeasible because nonparametric estimation of mixture parameters in multivariate problems, especially when the number $K$ is unknown, can often be difficult. In the following, we show how in location families, one can obtain a slightly weaker upper bound on $\varphi(f_0, g, \rho)$, which is free of the unknown parameters, that can  be used to construct a valid and powerful test for the remodeling hypothesis \eqref{eq:setup-3}. To this end, consider $\{p(y|\theta)=p(y-\theta):  \theta \in \Theta\}$ a family of densities indexed by the parameter space $\Theta \subseteq \mbb{R}^d$, where $p: \mbb{R}^d \rightarrow \mbb{R}_{\geq 0}$ such that $\int_{\mbb{R}^d} p(y) \mathrm dy=1$.  Throughout we assume that the densities in the family satisfy Assumption \ref{assumption:TMn}. Suppose the baseline samples $U_1, U_2, \ldots, U_m$ are i.i.d. from the density $f_0(\cdot)=\sum_{a=1}^K w_a \, p(\cdot|\theta_a)$, where  
$\theta_1, \ldots, \theta_K \in \Theta$ are fixed (but unknown), and there exists a known constant $L >0$ such that $w_a \geq L$, for all $1 \leq a \leq K$. 
If the infected samples $V_1, V_2, \ldots, V_n$ are i.i.d. from a density $g$ in $\mbb{R}^d$, then the hypothesis of remodeling \eqref{eq:setup-2}, in this parametric setting, becomes, 
\begin{align}\label{eq:thetaH0H1}
H_0: g \in \mathcal{F}(\bm \theta) \quad \text{versus} \quad H_A: g \notin \mathcal{F}(\bm \theta),
\end{align}
where $\bm \theta=(\theta_1, \ldots, \theta_K)$ and $\mathcal{F}(\bm \theta)$ is defined as follows: 
$$\mathcal{F}(\bm \theta) =\left\{ q(\cdot)= \sum_{a=1}^K \lambda_a p(\cdot|\theta_a) :  \lambda_a \in [0,1], \text{ for } 1\leq a \leq K, \text{ and } \sum_{a=1}^K \lambda_a=1 \right\},$$
is the collection of $K$-mixtures of $p(\cdot|\theta_1), p(\cdot|\theta_2), \ldots, p(\cdot|\theta_K)$. Note that under the null $H_0$, $g(\cdot) =\sum_{a=1}^K \lambda_a  p(\cdot|\theta_a)$, for some $\lambda_1, \lambda_2, \ldots, \lambda_K \in [0,1]$, such that $\sum_{a=1}^K \lambda_a=1$. Then using $\sum_{a=1}^K w_a p(y|\theta_a) > w_b p(y|\theta_b) \geq L p(y|\theta_b)$, for all $b \in \{1,2,\ldots,K\}$, 
\begin{align}
\varphi(f_0, g, \rho) & = \rho^{\frac{1}{d}} \Delta_d \sum_{a=1}^K \lambda_a \int \frac{p(y|\theta_a)}{\left(\sum_{b=1}^K w_b p(y|\theta_b)\right)^{\frac{1}{d}}} \mathrm d y \nonumber \\ 
%	& < \frac{\rho^{\frac{1}{d}} \Delta_d}{L^{\frac{1}{d}}} \sum_{a=1}^K  \lambda_a \int p(y|\theta_a)^{1-\frac{1}{d}} \mathrm d y \nonumber \\ 
%	& = \frac{\rho^{\frac{1}{d}} \Delta_d}{L^{\frac{1}{d}}} \sum_{a=1}^K  \lambda_a \int p(y-\theta_a)^{1-\frac{1}{d}} \mathrm d y \tag*{(using $p(y|\theta)=p(y-\theta)$)} \nonumber \\ 
\label{eq:location_H0} & = \frac{\rho^{\frac{1}{d}} \Delta_d}{L^{\frac{1}{d}}} \int p(z)^{1-\frac{1}{d}} \mathrm d z=\gamma,
\end{align} 
where the last step follows by the change of variable $z=y-\theta_a$. Note that the constant $\gamma$ depends on $L$ (the lower bound on the mixing weights of the baseline population), the dimension $d$, and the base function $p$ defining the location family (which is assumed to be known); but not on the unknown means ($\theta_1, \theta_2, \ldots, \theta_K$), the unknown weights ($w_1, w_2, \ldots, w_K$), or the number of components, and hence can be directly calculated.  \red{This implies that the test which rejects when $T_{m, n} > \gamma$, would have zero asymptotic Type I error, and would also be powerful whenever $g$ has some separation from the set of possible null distributions $\cF(f_0)$, as explained below.}

The corollary below shows how the bound in \eqref{eq:location_H0} can be used to construct a test based on $T_{m, n}$ which is powerful for \red{mixtures of radially symmetric distributions,} %radially symmetric mixtures, 
such as Gaussian mixtures and $t$-mixtures, among others. Hereafter, we assume $p(y)=r(||y||)$ is radially symmetric, where $r: \mbb{R}_{\geq 0} \rightarrow  \mbb{R}_{\geq 0}$ is a uniformly continuous function, such that $\int_{\mbb{R}^d} r(||y||) \mathrm d y = 1$. (Recall, $||y||$ denotes the Euclidean norm of $y \in \R^d$.)

\begin{corollary}\label{COR:H0H1}
	For the testing problem \eqref{eq:thetaH0H1} in the family $\{p(y|\theta) =r(||y-\theta||): \theta \in \Theta\}$, the following hold: 
	\begin{itemize}
		
		\item  For any $g \in \mc{F}(\bm \theta)$, with $\gamma$ as defined in \eqref{eq:location_H0}, we have
		\begin{align}\label{eq:H0}
		\lim_{m, n \rightarrow \infty}\P_{f_0, g}(T_{m, n} > \gamma) = 0~.
		\end{align}
		
		\item There exists $\varepsilon(\gamma)>0$ such that 
		\begin{align}\label{eq:H1}
		\lim_{m, n \rightarrow \infty}\P_{f_0, g}(T_{m, n} > \gamma) =1,
		\end{align}
		for any $g(y) = \sum_{a=1}^K \bar \lambda_a p(y|\theta_a')$ with $\min_{1 \leq a, b \leq K}||\theta_a'-\theta_b|| \, \bm 1\{\bar \lambda_a >0\} \geq \varepsilon(\gamma)$. 
	\end{itemize}
\end{corollary}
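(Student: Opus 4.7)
Both parts will be deduced from the $L_2$-limit in Theorem~\ref{THM:EXPTNM}, which (via Chebyshev's inequality) gives $T_{m,n} \to \varphi(f_0, g, \rho)$ in probability as $m, n \to \infty$ with $n/m \to \rho$. Hence the task reduces to verifying that $\varphi(f_0, g, \rho) < \gamma$ under $H_0$ and $\varphi(f_0, g, \rho) > \gamma$ under the separated alternatives.

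For the null assertion~\eqref{eq:H0}, the plan is simply to reproduce the chain of inequalities that already led to~\eqref{eq:location_H0}. Namely, for $g = \sum_{a=1}^K \lambda_a p(\cdot|\theta_a) \in \mc{F}(\bm\theta)$, I substitute into $\varphi(f_0,g,\rho)$ and bound the denominator of the $a$-th summand by $\sum_{b=1}^K w_b p(y|\theta_b) \geq L\, p(y|\theta_a)$; the change of variable $z = y - \theta_a$ collapses each integral to $\int p(z)^{1-1/d} dz$, and summing $\lambda_a$ to $1$ gives precisely $\gamma$. Since $K \ge 2$ (the case $K=1$ renders $H_0$ simple and the weight bound is strict on at least one component), the inequality is strict, and convergence of $T_{m,n}$ in probability to something strictly below $\gamma$ yields $\mathbb{P}_{f_0,g}(T_{m,n} > \gamma) \to 0$.

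For the power assertion~\eqref{eq:H1}, I would show that $\varphi(f_0, g, \rho) \to \infty$ as the separation $\varepsilon := \min_{a, b:\, \bar\lambda_a > 0} \|\theta_a' - \theta_b\|$ grows, from which the existence of $\varepsilon(\gamma)$ is immediate. Fix any index $a^*$ with $\bar\lambda_{a^*} > 0$, and restrict the integral defining $\varphi$ to a small ball $B(\theta_{a^*}', \delta)$. On this ball, uniform continuity of $r$ at the origin gives $r(\|y - \theta_{a^*}'\|) \geq c_\delta > 0$ (with $c_\delta$ depending only on $r$ and $\delta$), while for every $b$, $\|y - \theta_b\| \geq \varepsilon - \delta$, so the denominator $\sum_{b=1}^K w_b r(\|y - \theta_b\|)$ is dominated by $\sup_{t \geq \varepsilon - \delta} r(t)$.

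The main obstacle is turning this last domination into an effective tail bound: the hypotheses only give uniform continuity and integrability of $r$, not monotonicity or an explicit decay rate. The cleanest route is to first establish $\lim_{t \to \infty} r(t) = 0$ from uniform continuity plus $\int_{\mbb{R}^d} r(\|y\|) dy < \infty$, via the standard argument that if $r(t_n) \geq \eta > 0$ along some $t_n \to \infty$, uniform continuity extends $r \geq \eta/2$ to intervals of fixed positive width around each $t_n$, and the radial volume factor $t^{d-1}$ forces the integral to diverge. Once $\sup_{t \geq \varepsilon - \delta} r(t) \to 0$ as $\varepsilon \to \infty$, the ratio inside the restricted integral diverges uniformly on $B(\theta_{a^*}', \delta)$, so $\varphi(f_0, g, \rho)$ exceeds $\gamma$ for $\varepsilon$ large enough; pick this threshold to be $\varepsilon(\gamma)$, and convergence in probability from Theorem~\ref{THM:EXPTNM} gives $\mathbb{P}_{f_0,g}(T_{m,n} > \gamma) \to 1$.
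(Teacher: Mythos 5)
Your proposal follows the same route as the paper's proof for both bullets: deduce convergence in probability of $T_{m,n}$ to $\varphi(f_0,g,\rho)$ from the $L_2$-limit of Theorem~\ref{THM:EXPTNM}, verify $\varphi(f_0,g,\rho)<\gamma$ under the null by the bound $\sum_b w_b p(y|\theta_b)\geq L\,p(y|\theta_a)$ and the change of variables $z=y-\theta_a$, and, for the alternative, restrict the integral to a ball around a separated mean $\theta_{a^*}'$, use uniform continuity plus integrability of $r$ to get $r(t)\to 0$ as $t\to\infty$, and let the resulting smallness of the denominator force $\varphi(f_0,g,\rho)$ above $\gamma$. Your added remark that the null inequality must be \emph{strict} for $\mathbb{P}(T_{m,n}>\gamma)\to 0$ to follow from convergence in probability is a point the paper glosses over, and is welcome. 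The one step that does not survive scrutiny is the claim that uniform continuity of $r$ at the origin yields a pointwise lower bound $r(\|y-\theta_{a^*}'\|)\geq c_\delta>0$ on $B(\theta_{a^*}',\delta)$: nothing prevents $r(0)=0$ (a radially symmetric density vanishing at its center), so continuity alone gives no positive constant. The paper sidesteps this by never lower-bounding the numerator pointwise; it keeps the factor $\int_{B(\theta_j',1)}p(y|\theta_j')\,\mathrm{d}y=\int_{B(0,1)}p(y)\,\mathrm{d}y$ as a fixed positive constant in the lower bound and folds it into the definition of $M$ and hence of $\varepsilon(\gamma)$. Your argument is repaired the same way: drop $c_\delta$, choose $\delta$ so that $\int_{B(0,\delta)}p(y)\,\mathrm{d}y>0$ (possible since $p$ integrates to one), and carry that integral as the constant. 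With that replacement your proof is correct and essentially identical to the paper's.
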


The proof of the corollary is given in the supplementary materials (Section C). Note that the condition on $g(y)$ in \eqref{eq:H1} quantifies a natural notion of separation between $g$ and the set $\mc{F}(\bm \theta)$, by assuming that at least one of the mixture means of $g$ is $\varepsilon$-far (in $L_2$-distance) from all the unknown null means of the baseline density. 
Explicit bounds on the separation $\varepsilon(\gamma)$ can be obtained from the proof of Corollary \ref{COR:H0H1}, based on the tail decay of the base density $p$ (details given in supplementary materials, Section C). 
%For example, in the normal location family, choosing $\varepsilon(\gamma)$ proportional to $\sqrt{\log \gamma}$ suffices. 

%To facililate cleaner statements and intuitive proofs of our results, we impose a few innocuous assumptions.  
%\textbf{[Put in two sentences about the proof techniques: Finding expectation and variances and showing that variances converge to $0$. Why are the calculations difficult?]} 

%\textcolor{purple}{
%\textit{Sample-sizes for well-separated alternatives.} In context of biological significance, discoveries of viral remodeling is mainly interesting in cases where significant fold changes is caused by the pathogen. Thus, we are essentially interested in fairly separated alternatives $\mathcal{F}_A$ where for a fixed divergence measure $\textsf{div}$ between distributions,
%$\inf_{F \in \mathcal{F}(F_0), G \in \mathcal{F}_A } \textsf{div}(F,G)$ is significantly larger than 0. 
%Let us consider  a modification of \eqref{eq:setup-3} where we test $H_0: G \in \mathcal{F}(F_0)$ vs $H_A: G \in \mathcal{F}_A$ and the Kullback-Leibler divergence
%$\kappa=\inf_{F \in \mathcal{F}(F_0), G \in \mathcal{F}_A } \textsf{div}_{KL}(F,G)$. Now, let $\kappa \to \infty$, then the analogue of theorem 2 will hold for Gaussians if $n=e^{k}$ [exponential] or $n=k$ [linear sample size].}

\section{Discussion}
\label{sec:discuss}
We propose a novel nearest neighbor based two-sample test for detecting  changes between the baseline and the case samples, in the presence of heterogeneity, as is often the case in single-cell virology. \red{For integrative analysis involving datasets collected from differerent experiments with varying external conditions, batch-effect corrections are needed before applying our methodology. }
Our testing procedure is specially designed for mass cytometry based techniques  \citep{Bendall11,giesen2014highly} which produces moderate dimensional ($d \sim 50$) cellular characteristics. In the future, it will be interesting to extend our methodology for dealing with single-cell RNA-seq based techniques \citep{huang2018saver,hwang2018single,jaitin2014massively,schiffman2017sideseq}, which can produce highly multivariate phenotypes ($d\sim10^4$). A possible approach can be \red{based on random projections of the $d$ dimensional cellular characteristics to a lower dimensional space and then using our testing procedure on the reduced data.}
%using the random projection wrapper on our testing procedure. 
Also, it will be interesting to develop efficient testing procedures where the underlying population contains heterogeneous subpopulations with highly varying sizes including some very rare subpopulations. \red{Finally, extending our hypothesis testing framework to distinguish between depletion and enrichment in remodeled cells will be important.}

\section*{Acknowledgements}
{We are grateful to  Ann Arvin, Nadia Roan,  Adrish Sen, Nandini Sen and Nancy Zhang for numerous stimulating discussions. We thank the Editor, the Associate Editor and three referees for constructive suggestions that greatly improved the paper.}

\appendix

%\appendix

\section{Proof of Proposition~1}
\label{sec:pfinconsistency}
Recall that for $U_1,\ldots,U_m$ are i.i.d. $f_0$ and $V_1,\ldots,V_n$ are i.i.d. $g$. Then, in the usual asymptotic regime, by Theorem 2 of  \citet{henze1999multivariate}, almost surely,  
\begin{align}\label{eq:FG_I}
\frac{\mc{R}(\bm U_m, \bm V_n)}{m+n} \stackrel{a.s.}\rightarrow 1- \delta(f_0, g,\rho)
\end{align}
where $\delta(f_0, g,\rho)=\int \frac{f^2(x) + \rho^2  g^2(x)}{(1+\rho)(f_0(x) + \rho g(x))}\mathrm dx$. 

Now, by Remark 1 of  \citet{henze1999multivariate} for any fixed $g \in \mathcal{F}(f_0) \setminus \{f_0\}$, 
$$1-\delta(f_0, g,\rho) < 1- \delta(f_0, f_0,\rho) = \frac{2 \rho}{(1+\rho)^2}.$$  
Note that for any fixed $\alpha \in (0,1/2)$, $\frac{C_{m,n}(\alpha)}{m+n} \to \frac{2 \rho} {(1+\rho)2}$ almost surely. Therefore, by \eqref{eq:FG_I}, for any fixed $g \in \mathcal{F}(f_0) \setminus \{f_0\}$, $\mc{R}(\bm U_m, \bm V_n) < C_{m,n}(\alpha)$ almost surely, and the result follows.

\section{Proof of Theorem 1}
\label{sec:pfthm_CD}

The proof of Theorem 1 is an immediate consequence of the following two lemmas. The first lemma computes the limit of $n^{\frac{1}{d}} \bar D_{m, n}$. 

\begin{lemma}\label{lm:D} Let $D_1, D_2, \ldots, D_n$ be as defined in equation (2.5). Then, under Assumption 1, as $m, n \rightarrow \infty$, 
	\begin{align}\label{eq:lm_D}
	\frac{1}{n^{1-\frac{1}{d}}} \sum_{i=1}^n D_i  \stackrel{L_2} \rightarrow  \rho^{\frac{1}{d}} \zeta_1 \int \frac{g(y)}{f_0(y)^{\frac{1}{d}}} \mathrm d y,
	\end{align}
	where $\zeta_1$ is as defined in the statement of Theorem 1. 
\end{lemma}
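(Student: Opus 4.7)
The plan is to rescale nearest-neighbor distances by $m^{1/d}$ so that the rescaled quantities stabilize to a functional of a homogeneous Poisson process, and then average over the infected sample via a conditioning argument. First I would rewrite
\begin{align*}
\frac{1}{n^{1-1/d}}\sum_{i=1}^n D_i = (n/m)^{1/d}\cdot\frac{1}{n}\sum_{i=1}^n m^{1/d}D_i,
\end{align*}
and, since $(n/m)^{1/d}\to\rho^{1/d}$, reduce the proof to showing
\begin{align*}
\frac{1}{n}\sum_{i=1}^n m^{1/d}D_i \;\stackrel{L_2}{\longrightarrow}\; \zeta_1\int\frac{g(y)}{f_0(y)^{1/d}}\,dy.
\end{align*}

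The pointwise step is a local Poisson approximation. Fixing $V_i=v$ with $f_0(v)>0$ and rescaling the centered configuration $m^{1/d}(\bm U_m-v)$, standard thinning/local Poisson arguments show that this point process converges vaguely to $\mc P_{f_0(v)}$, the homogeneous Poisson process of intensity $f_0(v)$. Since $m^{1/d}D_i$ is the distance from the origin to the nearest point of the rescaled configuration, the continuous mapping theorem yields $m^{1/d}D_i\mid V_i=v \stackrel{d}{\longrightarrow}\zeta_1(\bm 0,\mc P_{f_0(v)})$, and the Poisson scaling identity $\mc P_\lambda\stackrel{d}{=}\lambda^{-1/d}\mc P_1$ gives $\mbb{E}[\zeta_1(\bm 0,\mc P_{f_0(v)})]=\zeta_1/f_0(v)^{1/d}$. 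Integrating against $g$ then delivers the limit of $\mbb{E}[m^{1/d}D_1]$, namely $\zeta_1\int g(y)/f_0(y)^{1/d}\,dy$.

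For the $L_2$ conclusion I would use the variance decomposition
\begin{align*}
\var\!\left(\frac{1}{n}\sum_{i=1}^n m^{1/d}D_i\right)
= \mbb{E}\,\var\!\left(\frac{1}{n}\sum_i m^{1/d}D_i\,\Big|\,\bm U_m\right)
+ \var\,\mbb{E}\!\left[\frac{1}{n}\sum_i m^{1/d}D_i\,\Big|\,\bm U_m\right].
\end{align*}
Conditional on $\bm U_m$ the $D_i$ are i.i.d.\ (as functions of the i.i.d.\ $V_i$), so the first term is at most $\tfrac{1}{n}\mbb{E}(m^{1/d}D_1)^2=O(1/n)$ once a uniform second-moment bound is in place. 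For the second term, $\mbb{E}[m^{1/d}D_1\mid\bm U_m]=\int h_m(v,\bm U_m)\,g(v)\,dv$ with $h_m(v,\bm U_m)=m^{1/d}\min_j\|v-U_j\|$; because $h_m(v,\bm U_m)$ depends on $\bm U_m$ only through points within an $O(m^{-1/d})$ neighborhood of $v$, this is an exponentially stabilizing functional in the sense of \cite{py}, whose variance vanishes by the usual self-averaging argument for stabilizing integrals over Poisson/binomial input.

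The main obstacle I foresee is the uniform-integrability and tail control needed to upgrade distributional convergence to $L_2$ convergence, since $m^{1/d}D_i$ can be very large when $V_i$ falls in a low-density region of $f_0$. This is exactly the role of Assumption \ref{assumption:TMn}: when $d\le 2$ the compact support and $f_0$ bounded below make the required second-moment bounds immediate, while for $d\ge 3$ the tail moment conditions $\int|y|^r f_0,\int|y|^r g<\infty$ with $r>d/(d-2)$ are precisely what allows a truncation argument (splitting the rescaled distance at $\|V_i\|^\alpha$ for suitable $\alpha$) to control the $L_2$ contribution of the un-stabilized tail, following the template of \cite{py}.
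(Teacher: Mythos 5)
Your proposal is correct and follows essentially the same route as the paper: rescale by $m^{1/d}$, use Penrose--Yukich stabilization to identify the pointwise limit as a functional of the Cox process $\mathcal{P}_{f_0(V)}$, invoke the scaling identity $\mathcal{P}_\lambda \stackrel{d}{=} \lambda^{-1/d}\mathcal{P}_1$ for the mean, and use the moment conditions of Assumption 1 for uniform integrability and the second-moment bound. Your variance step via the law of total variance conditioning on $\bm U_m$ is algebraically the same as the paper's direct expansion, since $\mbb{E}\bigl[(\mbb{E}[m^{1/d}D_1\mid\bm U_m])^2\bigr]=m^{2/d}\,\mbb{E}(D_1D_2)$, which is exactly the cross-term the paper shows converges to $\bigl(\mbb{E}\,\zeta_1(\pmb 0,\mathcal{P}_{f_0(V)})\bigr)^2$.
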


The next lemma computes the limit of $n^{\frac{1}{d}} \bar C_{m, n}$, which combined with Lemma \ref{lm:D} completes the proof of Theorem 1. 

\begin{lemma}\label{lm:C} Let $C_1, C_2, \ldots, C_n$ be as defined in equation (2.6). Then, under Assumption 1, as $m, n \rightarrow \infty$, 
	\begin{align}\label{eq:lm_C}
	\frac{1}{n^{1-\frac{1}{d}}}  \sum_{i=1}^n C_i  \stackrel{L_2} \rightarrow  \rho^{\frac{1}{d}} \zeta_2  \int \frac{g(y)}{f_0(y)^{\frac{1}{d}}} \mathrm d y,
	\end{align}
	where $\zeta_2$ is as defined in the statement of Theorem 1. 
\end{lemma}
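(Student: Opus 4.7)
The plan is to adapt the argument used for Lemma~\ref{lm:D}, extending the stabilization framework of Penrose and Yukich \cite{py} from the first-order nearest-neighbor distance $D_i$ to the \emph{second-order} functional $C_i$. The key observation is that, conditional on $V_i = v$, the quantity $C_i$ depends on $\bm U_m$ only through points in a small neighborhood of $v$ of radius $O((m f_0(v))^{-1/d})$: once $N(V_i, \bm U_m)$ is located near $v$, its nearest neighbor in $\bm U_m$ also lies within such a neighborhood with high probability. Hence $C_i$ inherits the locality (``stabilization'') property required by the Penrose--Yukich machinery, with the local limit now governed by $\zeta_2$ rather than $\zeta_1$.

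\textbf{Rescaling and distributional limit.} For $v$ in the common support of $f_0$ and $g$, define the rescaled point process $\widetilde{\bm U}_m(v) = \{(m f_0(v))^{1/d}(U_j - v) : 1 \le j \le m\}$. By local Poisson approximation, $\widetilde{\bm U}_m(v)$ converges in distribution on bounded windows to the homogeneous Poisson process $\mc{P}_1$ of intensity $1$. Because the construction of $C_i$ commutes with translation and isotropic scaling, I would then conclude
\[
(m f_0(v))^{1/d}\, C_i \mid V_i = v \ \stackrel{d}{\longrightarrow}\ \zeta_2(\bm 0, \mc{P}_1).
\]

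\textbf{From weak to $L_2$ convergence.} Using the tail/moment assumptions in Assumption~\ref{assumption:TMn} to establish uniform integrability of $\{(m f_0(v))^{1/d} C_i\}$ (mirroring the domination step of \cite[Proposition 3.2]{py}), one upgrades convergence in distribution to convergence of conditional means: $\mbb{E}\bigl[(m f_0(v))^{1/d} C_i \mid V_i = v\bigr] \to \zeta_2$. Integrating over $V_i \sim g$ and using $n/m \to \rho$ gives
\begin{align*}
\mbb{E}\!\left[\frac{1}{n^{1 - 1/d}} \sum_{i=1}^n C_i\right]
= n^{1/d}\!\int \mbb{E}[C_1 \mid V_1 = v]\, g(v)\, \mathrm d v
\ \longrightarrow\ \rho^{1/d}\, \zeta_2 \!\int \frac{g(v)}{f_0(v)^{1/d}}\, \mathrm d v.
\end{align*}
A stabilization-based covariance bound---the pair $(C_i, C_j)$ is close to independent when the stabilization balls of radii $O((m f_0(\cdot))^{-1/d})$ around $V_i$ and $V_j$ are disjoint---then yields $\var\bigl(\sum_i C_i\bigr) = O(n\, m^{-2/d})$, so that $\var\bigl(n^{-(1 - 1/d)} \sum_i C_i\bigr) = O\bigl(n^{-1}(n/m)^{2/d}\bigr) \to 0$, which upgrades convergence of the mean to $L_2$ convergence.

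\textbf{Main obstacle.} The principal technical hurdle is establishing uniform integrability of $\bigl((m f_0(v))^{1/d} C_i\bigr)^2$ under Assumption~\ref{assumption:TMn}. For $D_i$, this is controlled by a single void-ball argument; for $C_i$, one must simultaneously control both the location of $N(V_i, \bm U_m)$ and the nearest-neighbor gap emanating from it, leading to a two-level void-ball computation with slightly weaker tail decay. The moment conditions in Assumption~\ref{assumption:TMn} for $d \ge 3$ (and the compactness assumption for $d \le 2$) are precisely calibrated to absorb this extra growth; verifying this carefully is the one place where the proof genuinely departs from that of Lemma~\ref{lm:D}.
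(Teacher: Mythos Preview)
Your proposal is correct and follows essentially the same route as the paper: both arguments rest on the Penrose--Yukich stabilization machinery, verifying that the second-order functional $\zeta_2(\cdot,\cdot)$ stabilizes on homogeneous Poisson processes, establishing convergence of the mean via local Poisson approximation plus uniform integrability, and then showing the variance of the normalized sum vanishes. The only cosmetic differences are that the paper rescales by $m^{1/d}$ and phrases the limit through a Cox process $\mc{P}_{f_0(V)}$ (rather than conditioning on $V_i=v$ and rescaling by $(m f_0(v))^{1/d}$), and handles uniform integrability via a subadditivity/growth-bound argument for the functional $\phi(A,B)=\sum_{a\in A}\|a-N(a,B)\|^2$ rather than a void-ball computation; your variance rate $O(n\,m^{-2/d})$ is slightly sharper than what the paper actually proves, but the weaker $o(n^2 m^{-2/d})$ covariance bound already suffices for $L_2$ convergence.
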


The proofs of Lemma \ref{lm:D} and Lemma \ref{lm:C} are given below in Section \ref{sec:pf_D} and Section \ref{sec:pf_C}, respectively. We begin with some preliminaries about Poisson processes and stabilization of geometric functionals, introduced by \cite{py}, in Section \ref{sec:preliminaries} below. 

\subsection{Preliminaries}
\label{sec:preliminaries}

%A  {\it graph functional} $\sG$ in $\R^d$ defines a graph for all finite subsets of $\R^d$, that is, given  $S\subset \R^d$ finite, $\sG(S)$ is a graph with vertex set $S$.\footnote{Given $S\subset \R^d$ and $y\in \R^d$ and $a\in \R$, denote by $y+S=\{y+z: z\in S\}$ and $aS:=\{az: z\in S\}$.}  A graph functional $\sG$ is said to be {\it translation invariant} if the graphs $\sG(x+S)$ and $\sG(S)$ are isomorphic for all points $x\in \R^d$ and all locally finite $S\subset \R^d$. A graph functional $\sG$ is {\it scale invariant} if $\sG(aS)$ and $\sG(S)$ are isomorphic for all points $a\in \R$ and  and all locally finite $S\subset \R^d$. 
%
%Penrose and Yukich \cite{py} defined stabilization of graph functionals over homogeneous Poisson processes as follows:
%
%\begin{defn}[Penrose and Yukich \cite{py}] \label{stabilization} A translation and scale invariant graph functional $\sG$ {\it stabilizes} $\mc{P}_\lambda$ if there exists a random but almost surely finite variable $R$ such that
%\begin{equation}
%E(0, \sG(\mc{P}_{\lambda}^0))=E(0,  \sG(\mc{P}_{\lambda}^0 \cap B(0, R)\cup \sA)),
%\end{equation}
%for all finite $\sA\subset \R^d \setminus B(0, R)$, where $B(0, R)$ is the (Euclidean) ball of radius $R$ with center at the point $0\in \R^d$.
%\end{defn}

Given $z \in \mbb{R}^d$, denote by $\varphi(z, \mc{Z})$ a measurable $\mbb{R}^+$ valued function defined for all locally finite set $\mc{Z}\subset \mbb{R}^d$ and $z\in \mc{Z}$. If $z\notin \mc{Z}$, then  $\varphi(z, \mc{Z}):=\varphi(z, \mc{Z}\cup\{z\})$. The function $\varphi$ is said to be {\it translation invariant} if $\varphi(y+z, y+\mc{Z})=\varphi(z, \mc{Z})$. \cite{py} defined stabilizing functions as follows:

\begin{defn}(\cite{py}) \label{defn:stabilize}
	For any locally finite point set $\mc{Z}\subset \mbb{R}^d$ and any integer $M \in \mbb{N}$, 
	$$\overline \varphi(\mc{Z}, M):=\sup_{N\in \mbb{N}}\left(\mathrm{esssup}_{\substack{\mc{A}\subset \mbb{R}^d\setminus B(0, M)\\|\mc{A}|=N}}\left\{\varphi(0, \mc{Z}\cap B(0, M) \cup \mc{A})\right\}\right)$$
	and
	$$\underline \varphi(\mc{Z}, M):=\inf_{N\in \mbb{N}}\left(\mathrm{essinf}_{\substack{\mc{A}\subset \mbb{R}^d\setminus B(0, M)\\|\mc{A}|=N}}\left\{\varphi(0, \mc{Z}\cap B(0, M) \cup \mc{A}) \right\}\right),$$
	where the essential supremum/infimum is taken with respect to the Lebesgue measure on $\mbb{R}^{dN}$. The functional $\varphi$ is said to {\it stabilize} $\mc{Z}$ if
	\begin{equation}\label{eq:stabilize}
	\liminf_{M\rightarrow \infty}\underline \varphi(\mc{Z}, M)=\limsup_{M\rightarrow \infty}\overline\varphi(\mc{Z}, M)=\varphi (0, \mc{Z}).
	\end{equation}
\end{defn}

We will be interested in functionals that stabilize almost surely on $\mc{P}_\lambda$, the homogeneous Poisson process with rate $\lambda$ in $\mbb{R}^d$. Note that with probability 1, $\overline \varphi(\mc{P}_\lambda, M)$ is nonincreasing in $M$ and  $\underline \varphi(\mc{P}_\lambda, M)$ is nondecreasing in $M$, therefore, they both converge. The definition of stabilization in \eqref{eq:stabilize} means they converge to the same limit almost surely.  Note that any functional $\varphi(z, \mc{Z})$ which depends only on the points of $\mc{Z}$ within a fixed distance of $z$ is stabilizing on $\mc{P}_\lambda$. In our proofs, we will consider the following two functionals: 

\begin{itemize}
	
	\item For $y \in \mbb{R}^d$, and $\mc{Z} \subset \mbb{R}^d$ finite, define 
	\begin{align}\label{eq:z1}
	\zeta_1(y, \mc{Z}):=\sum_{z \in \mc{Z}}||y-z||  \bm 1\{z=N(y, \mc{Z}) \},\
	\end{align}
	which is the distance from $y$ to its nearest neighbor in $\mc{Z}$. 
	
	\item For $y \in \mbb{R}^d$, and $\mc{Z} \subset \mbb{R}^d$ finite, define 
	\begin{align}\label{eq:z2}
	\zeta_2(y, \mc{Z}):=\sum_{z_1 \in \mc{Z}} \sum_{z_2 \in \mc{Z}\backslash\{z_1\}} ||z_1-z_2||  \bm 1\{z_1=N(y, \mc{Z}) \text{ and } z_2=N(z_1, \mc{Z}\backslash\{y\}) \}, 
	\end{align}
	which is the distance between the nearest neighbor of $y$ in $\mc{Z}$ and its nearest neighbor in $\mc{Z}$. 
	
\end{itemize}
It is easy to verify that both the functionals $\zeta_1(\cdot, \cdot)$ and $\zeta_2(\cdot, \cdot)$ stabilize $\mc{P}_\lambda$, for all $\lambda > 0$. This is because the set of edges incident to
the origin in the directed 1-nearest neighbor (NN) graph\footnote{
	Given a finite set $S \subset \mbb{R}^d$, the directed $1$-nearest neighbor  graph ($1$-NN)  is a graph with vertex set $S$ with a directed edge $(a, b)$, for $a, b \in S$, if $b$ is the nearest neighbor of $a$ in $S$. %Denote the undirected $K$-NN of $S$ by $\cN_1(S)$. 
} is unaffected by the addition or removal of points outside a ball of almost surely finite radius \cite[Theorem 2.4]{py}. 

%\begin{remark}
%It is important to distinguish the difference between translation/scale invariance of graph functional $\sG$ and the translation/scale invariance a functional $\varphi(\cdot, \sG(S))$ defined on the graph, and how it fits into the notation defined above. Throughout the paper, all graphs considered will be translation and scale invariant. However, at times we will consider functionals on these graphs which might not be scale invariant, in which case $\varphi(x, \sG(a S)) \ne \varphi(x, \sG(S))$ (even though $\sG(aS)=\sG(S)$).
%For example, we will often consider the functional
%$$\varphi(x, \mc{P}_\lambda)=\sum_{y \in \mc{P}_\lambda} ||y|| \bm 1\{(x, y) \in E(\sG(\mc{P}_\lambda)\},$$
%the sum of length of (outward) edges incident on $x\in \mbb{R}^d$ in the graph $\sG(\mc{P}_\lambda^x)$. This functional is clearly not scale invariant (even if the graph $\sG$ is scale invariant). In particular, using $\mc{P}_\lambda =\lambda^{-\frac{1}{d}}  \mc{P}_1$, gives
%\begin{align}
%\varphi(x, \mc{P}_\lambda) &= \lambda^{-\frac{1}{d}} \sum_{y \in \mc{P}_1} ||y||^1 \bm 1\{(0, y) \in E(\sG(\mc{P}_1)\}  = \lambda^{-\frac{1}{d}} \varphi(0, \mc{P}_1),
%\end{align} 
%\end{remark}

\subsection{Proof of Lemma \ref{lm:D}}
\label{sec:pf_D}

We now proceed to prove Lemma \ref{lm:D}. We begin by noting that $\mbb{E}(\bar D_{m, n})=\mbb{E}(D_1)$ and 
\begin{align}\label{eq:exp_D_I}
\mbb{E}(D_1)=\sum_{j=1}^m \mbb{E} ||V_1-U_j||  \bm 1\{U_j= N(V_1, \bm U_m) \}  & = \mbb{E} \zeta_1(V_1, \bm U_m),
%&=m \int ||x-y|| \left(1-\int_{B(y, ||x-y||)} f_0(z) \mathrm d z \right)^{m-1} f_0(x)g(y) \mathrm dx \mathrm d y
\end{align}
where $\zeta_1(\cdot, \cdot)$ is as defined above in \eqref{eq:z1} and $\bm U_m=\{U_1, U_2, \ldots, U_m\}$ are i.i.d. points from the density $f_0$. Note that, by translation invariance, 
\begin{align}\label{eq:scaled_m}
\zeta_1(\pmb 0, m^{\frac{1}{d}}(\bm U_m-V_1))&:=m^{\frac{1}{d}}\sum_{j=1}^m ||V_1-U_j||  \bm 1\{m^{\frac{1}{d}}(U_j-V_1)= N(\pmb 0, m^{\frac{1}{d}}(\bm U_m-V_1)) \} \nonumber \\ 
&= m^{\frac{1}{d}}\sum_{j=1}^m ||V_1-U_j|| \bm 1\{U_j= N(V_1, \bm U_m) \} \nonumber \\  
& = m^{\frac{1}{d}}  \zeta_1(V_1, \bm U_m).
\end{align}

%Now, fix $y \in \mbb{R}^d$ a Lebesgue point\footnote{A point $x\in \mbb{R}^d$ is said to be a {\it Lebesgue point} of a function $\phi$ if $\lim_{\varepsilon\rightarrow 0}\frac{1}{|B(x, \varepsilon)|}\int_{B(x, \varepsilon)}|\phi(z)-\phi(x)|\mathrm dz = 0$, where $B(x, \varepsilon)$ is the Euclidean ball in $\mbb{R}^d$ with center at $x$ and radius $\varepsilon$. If $\int \phi(z) \mathrm d z < \infty$, almost every point $x\in \mbb{R}^d$ is a Lebesgue point of $\phi$ \cite[Theorem 7.7]{rudin}.} of the densities $f$ and $g$. 
The following lemma shows that the second moment of $\zeta_1(\pmb 0, m^{\frac{1}{d}}(\bm U_m-V_1))$ is bounded, under  Assumption 1. 

\begin{lemma}\label{lm:moment2} For densities $f_0$ and $g$ as in Assumption 1,
	$$\sup_{m \in \mbb{N}}\mbb{E}\zeta_1(\pmb 0, m^{\frac{1}{d}}(\bm U_m-V_1))^2 \lesssim_d 1.$$  
\end{lemma}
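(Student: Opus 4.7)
The plan is as follows. By the translation-invariance identity \eqref{eq:scaled_m}, $\zeta_1(\mathbf{0}, m^{1/d}(\bm U_m - V_1)) = m^{1/d}\min_{1\le j\le m}\|V_1 - U_j\|$, so the statement of the lemma reduces to showing $\sup_m m^{2/d}\,\mathbb{E}[\min_j\|V_1-U_j\|^2] \lesssim_d 1$. Conditioning on $V_1 = v$ and setting $F_v(r) := \int_{B(v,r)} f_0(y)\,dy$, the explicit survival function $\Pr(\min_j \|v - U_j\| > r \mid V_1=v) = (1-F_v(r))^m$ combined with the elementary bound $(1-x)^m \le e^{-mx}$ gives
\[
I_m(v) \;:=\; m^{2/d}\,\mathbb{E}\bigl[\min_j\|v-U_j\|^2 \bigm| V_1 = v\bigr] \;=\; \int_0^\infty 2t\,(1 - F_v(t/m^{1/d}))^m\, dt \;\le\; \int_0^\infty 2t\, e^{-mF_v(t/m^{1/d})}\,dt.
\]
The remainder of the argument consists in dominating $I_m(v)$ by a function $\Psi(v)$ independent of $m$ with $\int \Psi(v) g(v)\,dv < \infty$; Fubini then delivers the claim.

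In the compact-support regime $d \le 2$, the lower bound $f_0 \ge c_0 > 0$ on $S$ together with the regularity of $S$ (which gives $|B(v,r) \cap S| \gtrsim r^d$ uniformly over $v \in S$ and $r \le \operatorname{diam}(S)$) implies $m F_v(t/m^{1/d}) \gtrsim t^d$ on the range $t \le m^{1/d}\operatorname{diam}(S)$, while the deterministic bound $\min_j\|v-U_j\| \le \operatorname{diam}(S)$ forces the integrand to vanish beyond that range. Hence $I_m(v)$ is dominated by $\int_0^\infty 2t e^{-c t^d}\,dt$ uniformly in $m$ and $v \in S$, and integration against the compactly supported $g$ is immediate.

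For $d \ge 3$ the argument is more delicate because $f_0$ is not assumed bounded below. I plan to split the $v$-integration into a bulk region $\{f_0(v) \ge \delta\}$ and its complement. In the bulk, a Lebesgue-differentiation-type local estimate $F_v(r) \gtrsim V_d r^d f_0(v)$, valid on a scale depending on $v$, yields $I_m(v) \lesssim f_0(v)^{-2/d}$, which is controlled against $g$ via H\"older's inequality and the hypothesis $\int g\, f_0^{-1/d} < \infty$. In the tail region where no useful local density bound is available, the exponential estimate degenerates and must be replaced by the deterministic inequality $\min_j\|v-U_j\| \le \|v\| + \min_j\|U_j\|$ combined with Markov-type tail bounds using the moment conditions $\int |y|^r f_0 < \infty$ and $\int |y|^r g < \infty$ for some $r > d/(d-2)$. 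The main obstacle is exactly this tail estimate: the exponent $d/(d-2)$ from Assumption \ref{assumption:TMn} is precisely what is required to balance the polynomial decay of $g$ against the weak exponential decay of the integrand in the low-density region, mirroring the role this exponent plays in Proposition 3.2 of \cite{py}. Once the pointwise bound $I_m(v) \le \Psi(v)$ with $\int g\Psi < \infty$ is established on both regions, Fubini and taking $\sup_m$ close the proof.
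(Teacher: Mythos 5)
Your overall route is genuinely different from the paper's: you condition on $V_1=v$ and compute $m^{2/d}\,\mathbb{E}\bigl[\min_j\|v-U_j\|^2\mid V_1=v\bigr]$ directly from the survival function via the layer-cake formula, whereas the paper symmetrizes over the $n$ infected points, recognizes the resulting sum as the functional $\phi(A,B)=\sum_{a\in A}\sum_{b\in B}\|a-b\|^2\,\bm 1\{b=N(a,B)\}$ evaluated at $(\bm V_n,\bm U_m)$, and controls $\mathbb{E}\phi$ globally by subadditivity over dyadic annuli together with the growth bound $\phi(A,B)\le\operatorname{diam}(A\cup B)^2|A\cup B|^{(d-2)/d}$ and Jensen's inequality, reducing everything to the tail probabilities $\mathbb{P}(U_1\in S_K)$ and $\mathbb{P}(V_1\in S_K)$, which the moment hypotheses make summable against $2^{2K}$. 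Your $d\le 2$ argument is fine, and in fact more careful than the paper's one-line dismissal: you correctly use both the compactness of $S$ and the lower bound on $f_0$, both of which are needed since $m^{2/d}\operatorname{diam}(S)^2$ alone is unbounded in $m$.

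For $d\ge 3$, however, there is a genuine gap, in two places. First, the displayed bound $I_m(v)\le\int_0^\infty 2t\,e^{-mF_v(t/m^{1/d})}\,dt$ is vacuous when the support is unbounded: since $F_v\le 1$, the integrand is at least $2t\,e^{-m}$, so the right-hand side is identically $+\infty$. The inequality $(1-x)^m\le e^{-mx}$ throws away too much on the large-$t$ range; there you must retain at least two factors of $1-F_v(t/m^{1/d})$ and bound them by Markov's inequality, $1-F_v(s)\le s^{-r}\,\mathbb{E}\|U_1-v\|^r$, which is exactly where $r>d/(d-2)$ (hence $2r>2$) enters. Second, your treatment of the low-density region via $\min_j\|v-U_j\|\le\|v\|+\min_j\|U_j\|$ cannot work as stated: it yields a bound of order $m^{2/d}\bigl(\|v\|^2+\cdots\bigr)$, and nothing absorbs the factor $m^{2/d}$, so the resulting estimate is not uniform in $m$. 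Controlling the contribution of $v$ in regions where $f_0$ is small, while retaining the $m^{2/d}$ normalization, is precisely the hard part of the lemma; it is what the paper's annulus decomposition combined with the exponent $(d-2)/d$ in the growth bound (which cancels the $n^{1-2/d}$ normalization after Jensen) is engineered to accomplish, and your sketch defers this step rather than resolving it.
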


\begin{proof} Note for $d \leq 2$, the result holds trivially, by the boundedness of the support.  Hence, assuming, $d \geq 3$, and taking squares in \eqref{eq:scaled_m} gives, 
	\begin{align}\label{eq:moment2}
	\zeta_1(\pmb 0, & ~m^{\frac{1}{d}}(\bm U_m-V_1))^2 \nonumber \\ 
	&:=m^{\frac{2}{d}}\sum_{1 \leq j_1, j_2 \leq m} ||V_1-U_{j_1}|| \cdot ||V_1-U_{j_2}||  \bm 1\{U_{j_1}= N(V_1, \bm U_m), U_{j_2}= N(V_1, \bm U_m) \} \nonumber \\ 
	& \lesssim m^{\frac{2}{d}}\sum_{j=1}^m ||V_1-U_{j}||^2  \bm 1\{U_{j}= N(V_1, \bm U_m) \},  
	\end{align}
	using the inequality $ab \leq \frac{a^2+b^2}{2}$ and the fact $\sum_{j=1}^m \bm 1\{U_{j}= N(V_1, \bm U_m) \}=1$. Now, for $n$ large enough,  
	\begin{align}\label{eq:phi_m}
	m^{\frac{2}{d}}\mbb{E} \sum_{j=1}^m ||V_1-U_{j}||^2  \bm 1\{U_{j}= N(V_1, \bm U_m) \} & = \frac{m^{\frac{2}{d}}}{n} \mbb{E} \sum_{i=1}^n \sum_{j=1}^m ||V_i-U_{j}||^2  \bm 1\{U_{j}= N(V_i, \bm U_m) \} \nonumber \\  
	& \lesssim_d \frac{1}{n^{1-\frac{2}{d}}} \mbb{E} \phi(\bm V_n, \bm U_m), 
	\end{align}
	where the functional $\phi(A, B):=\sum_{a\in A} \sum_{b \in B} ||a-b||^2 \bm 1\{b=N(a, B)\}$, where $A, B \subset \mbb{R}^d$ are finite and disjoint. Note that for any partition $\{S_0, S_1, \ldots, \}$ of $\mbb{R}^d$, 
	\begin{align}\label{eq:phiab}
	\phi(A, B) \leq \sum_{K=0}^\infty \phi(A \cap S_K, B\cap S_K),
	\end{align}
	that is, the functional $\phi$ is subadditive. (Note that the sum above is, in fact, finite because the sets $A$ and $B$ are finite.) Then by a modification of \cite[Lemma 3.3]{monograph_yukich}, one can obtain the growth bound $\phi(A, B) \leq \mathrm{diam}(A \cup B)^2 |A\cup B|^{\frac{d-2}{d}}$. 
	Now, choosing $S_0$ to be the ball of radius 2 centered at the origin, and $S_K$ to be the annulus centered at the origin with inner radius $2^K$ and outer radius $2^{K+1}$, for $K \geq 1$, it follows from \eqref{eq:phiab} that 
	$$\phi(\bm V_n, \bm U_m) \leq \sum_{K=0}^\infty 2^{2K} \left|\sum_{i=1}^n\bm 1\{V_i \in S_K\} + \sum_{j=1}^m\bm 1\{U_i \in S_K\} \right|^{\frac{d-2}{d}}.$$
	Now, taking expectations above and the Jensen's inequality gives, for $n$ large enough, 
	$$\frac{1}{n^{1-\frac{2}{d}}} \mbb{E} \phi(\bm V_n, \bm U_m) \lesssim_d  \sum_{K=0}^\infty 2^{2K}\P(V_1\in S_K)^{\frac{d-2}{d}} + \sum_{K=0}^\infty 2^{2K}\P(U_1\in S_K)^{\frac{d-2}{d}},$$ 
	both of which are finite by the integrality assumptions on $f_0$ and $g$ (using arguments in \cite[Page 85]{monograph_yukich}). The result now follows by combining the bound above with \eqref{eq:moment2} and \eqref{eq:phi_m}. 
\end{proof}

The lemma above shows that the sequence $\{\zeta_1(\pmb 0, m^{\frac{1}{d}}(\bm U_m-V_1))\}_{M \geq 1}$ is uniformly integrable.  Now, since the functional $\zeta_1(\cdot, \cdot)$ stabilizes on homogeneous Poisson processes, by arguments similar to the proof of \cite[Lemma 8.1]{yukichclt}, it follows that 
\begin{align}\label{eq:Mlimit_I}
\lim_{M \rightarrow \infty}\mbb{E} \zeta_1(\pmb 0, m^{\frac{1}{d}}(\bm U_m-V_1)) = \mbb{E} \zeta_1(\pmb 0, \mathcal P_{f_0(V)})),
\end{align} 
where $\zeta_1(\pmb 0, \mathcal P_1)$ is as defined in equation (5.2), $V$ is a random variable distributed according to the density $g$, and $\mathcal P_{f_0(V)}$ is a Cox process with intensity measure $f_0(V)$, which is a Poison process with a random intensity measure $f_0(V)$.  Conditioning on $V$ gives, 
$$\mbb{E} \zeta_1(\pmb 0, \mathcal P_{f_0(V)})) =\int \mbb{E} \zeta_1(\pmb 0, \mathcal P_{f_0(y)})) g(y) \mathrm dy= \mbb{E} \zeta_1(\pmb 0, \mathcal P_1)) \int \frac{g(y)}{f_0(y)^{\frac{1}{d}}} \mathrm dy,$$
where the last step uses $\mc{P}_\lambda \stackrel{D}=\lambda^{-\frac{1}{d}}  \mc{P}_1$, for any $\lambda >0$.  This implies, by \eqref{eq:exp_D_I}, \eqref{eq:scaled_m}, and \eqref{eq:Mlimit_I}, that  
$$m^{\frac{1}{d}} \mbb{E}(\bar D_{m, n}) = \mbb{E} \zeta_1(\pmb 0, m^{\frac{1}{d}}(\bm U_m-V_1)) \rightarrow \mbb{E} (\zeta_1(\pmb 0, \mathcal P_1)) \int  \frac{g(y)}{f_0(y)^{\frac{1}{d}}} \mathrm d y. $$
Then, recalling $n/m \rightarrow \rho$ gives, 
\begin{align}\label{eq:expDn}
\mbb{E}\left(\frac{1}{n^{1-\frac{1}{d}}} \sum_{i=1}^n D_i \right)   \rightarrow  \rho^{\frac{1}{d}} \mbb{E} (\zeta_1(\pmb 0, \mathcal P_1))  \int \frac{g(y)}{f_0(y)^{\frac{1}{d}}} \mathrm d y,
\end{align}
which establishes the limit in \eqref{eq:lm_D} in expectation.

To complete the proof of the lemma we need to show that the variance of the LHS in \eqref{eq:lm_D} goes to zero. To this end, note that 
\begin{align}\label{eq:variance_D}
\mbb{E}\left(\frac{1}{n^{1-\frac{1}{d}}} \sum_{i=1}^n D_i \right)^2 &= \frac{1}{n^{1-\frac{2}{d}}} \mbb{E} D_1^2  + \frac{n(n-1)}{n^2} n^{\frac{2}{d}} \mbb{E} D_1 D_2 = (1+o(1)) n^{\frac{2}{d}} \mbb{E} D_1 D_2 +o(1),
\end{align}
since $n^{\frac{2}{d}}\mbb{E} D_1^2 \lesssim _d 1$, by   Lemma \ref{lm:moment2}. Next, note that 
\begin{align}%\label{eq:exp_D12}
m^{\frac{2}{d}} \mbb{E}(D_1D_2)&=m^{\frac{2}{d}} \sum_{j_1=1}^m \sum_{j_2=1}^m \mbb{E} ||V_1-U_{j_1}|| ||V_2-U_{j_2}||   \bm 1\{U_{j_1}= N(V_1, \bm U_m) \}   \bm 1\{U_{j_2}= N(V_2, \bm U_m) \}   \nonumber \\ 
& =  m^{\frac{2}{d}} \sum_{j_1=1}^m  ||V_1-U_{j_1}||   \bm 1\{U_{j_1}= N(V_1, \bm U_m) \} \sum_{j_2=1}^m ||V_2-U_{j_2}||  \bm 1\{U_{j_2}= N(V_2, \bm U_m) \}  \nonumber \\ 
& =  \mbb{E} \zeta_1(\pmb 0, m^{\frac{1}{d}}(\bm U_m-V_1))  \zeta_1(\pmb 0, m^{\frac{1}{d}}(\bm U_m-V_2)).  \nonumber 
\end{align}
Now, by arguments similar to the proof of \cite[Proposition 3.1]{yukichclt}, it follows that 
\begin{align*}
\lim_{M \rightarrow \infty}  m^{\frac{2}{d}} \mbb{E}(D_1D_2)=\lim_{M \rightarrow \infty}  \mbb{E} \zeta_1(\pmb 0, m^{\frac{1}{d}}(\bm U_m-V_1))  \zeta_1(\pmb 0, m^{\frac{1}{d}}(\bm U_m-V_2))  = \mbb{E} \zeta_1(\pmb 0, \mathcal P_{f_0(V)}))^2,
\end{align*} 
where, as before, $V$ is a random variable distributed according to the density $g$, and  $\mathcal P_{f_0(V)}$ is a Cox process with intensity measure $f_0(V)$. This combined with \eqref{eq:variance_D} and \eqref{eq:expDn}, shows that 
$$\Var\left(\frac{1}{n^{1-\frac{1}{d}}} \sum_{i=1}^n D_i \right)  \rightarrow 0.$$
This completes the proof of Lemma \ref{lm:D}. \hfill $\Box$ 

\subsection{Proof of Lemma \ref{lm:C}}
\label{sec:pf_C}

Denote $[m]:=\{1, 2, \ldots m\}$. To begin with note that $\mbb{E}(\bar C_{m, n})=\mbb{E}(C_1)$ and 
\begin{align}\label{eq:expC_I}
\mbb{E}(C_1)&=\sum_{j \in [m]} \sum_{s \in [m]\backslash\{j\}} \mbb{E} ||U_j-U_s||  \bm 1\{U_s= N(V_1, \bm U_m) \text{ and } U_j= N(U_s, \bm U_m) \} \nonumber \\
& = \zeta_2(V_1, \bm U_m). 
\end{align}
As in \eqref{eq:scaled_m}, by translation invariance, 
\begin{align}\label{eq:2zscaled_II}
\zeta_2(\pmb 0, m^{\frac{1}{d}}(\bm U_m-V_1)) %&:= m^{\frac{1}{d}} \sum_{s=1}^m \sum_{j=1, j \ne s}^m ||U_j-U_s||  \bm 1\{m^{\frac{1}{d}}(U_s-y)= N(0) \text{ and } m^{\frac{1}{d}}(U_j-y)= N(m^{\frac{1}{d}}(U_s-y)) \} \nonumber \\ 
&=m^{\frac{1}{d}} \sum_{j \in [m]} \sum_{s \in [m]\backslash\{j\}} ||U_j-U_s||  \bm 1\{U_s= N(V_1, \bm U_m) \text{ and } U_j= N(U_s, \bm U_m) \} \nonumber \\ 
&= m^{\frac{1}{d}} \zeta_2(V_1, \bm U_m). 
\end{align}
Now, as in Lemma \ref{lm:moment2},  it can be shown that $\sup_{m \in \mbb{N}}\mbb{E}\zeta_2(\pmb 0, m^{\frac{1}{d}}(\bm U_m-V_1))^2 \lesssim_d 1$. Therefore,  since the functional $\zeta_2(\cdot, \cdot)$ stabilizes on homogeneous Poisson processes, by arguments similar to the proof of \cite[Lemma 8.1]{yukichclt}, it follows that 
\begin{align}\label{eq:Mlimit_II}
\lim_{M \rightarrow \infty}\mbb{E} \zeta_2(\pmb 0, m^{\frac{1}{d}}(\bm U_m-V_1)) = \mbb{E} \zeta_2(\pmb 0, \mathcal P_{f_0(V)})) = f_0(y)^{-\frac{1}{d}}  \mbb{E} \zeta_2(\pmb 0, \mathcal P_1))
\end{align} 
where $\zeta_2(\pmb 0, \mathcal P_1)$ is as defined in equation (5.2), $V$ is a random variable distributed according to the density $g$, and $\mathcal P_{f_0(V)}$ is a Cox process with intensity measure $f_0(V)$. 
Then, recalling $n/m \rightarrow \rho$, and combining \eqref{eq:expC_I}, \eqref{eq:2zscaled_II}, and \eqref{eq:Mlimit_II} gives, 
$$\mbb{E}\left(\frac{1}{n^{1-\frac{1}{d}}} \sum_{i=1}^n C_i \right)   \rightarrow  \rho^{\frac{1}{d}}  \mbb{E} \zeta_2(\pmb 0, \mathcal P_1))  \int \frac{g(y)}{f_0(y)^{\frac{1}{d}} } \mathrm d y. $$
which establishes the limit in \eqref{eq:lm_C} in expectation. 

Finally, similar to the proof of Lemma \ref{lm:D}, it can be shown that the variance of the LHS in \eqref{eq:lm_C} goes to zero, completing  the proof. 
\hfill $\Box$ \\

As mentioned earlier, there does not appear to be  a closed form expression for $\zeta_2:=\mbb{E} \zeta_2(\pmb 0, \mathcal P_1))$. However, by an application of the FKG inequality  for Poisson processes \citep{janson_fkg,poisson_process_book}, it can be shown that $\zeta_2 \geq \zeta_1$. This is described in the following remark.

\begin{remark}\label{remark:12} From the definition of $\zeta_2$, we get 
	\begin{align}\label{eq:z2_calculation}
	\zeta_2:=\mbb{E} (\zeta_2(\pmb 0, \mathcal P_1)) & = \int  \int ||w'-b||  \P(b= N(0, \mc{P}_{1}^{0, b}) \text{ and } w' = N(b, \mc{P}_{1}^{0, b}\backslash\{0\}) ) \mathrm db \mathrm d w'. 
	\end{align}
	For $b, w' \in \mbb{R}^d$ fixed, consider the functions $\bm 1\{b= N(0, \mc{P}_{1}^{0, b})  \}$ and $\bm 1\{ w' = N(b, \mc{P}_{1}^{0, b}\backslash\{0\}) \}$, defined on the Poisson point process $\mc{P}_1^0$. Now, let $\Gamma$ and $\Gamma'$ be two realizations of the point process $\mc{P}_{1}^{0}$. Note that by if $\Gamma \subset \Gamma'$, then $\bm 1\{b= N(0, \Gamma')  \}  \leq \bm 1\{b= N(0, \Gamma)  \} $, because if $b$ is a nearest neighbor of the origin in $\Gamma'$, it will be also be nearest neighbor of the origin in the smaller set $\Gamma$. Similarly, for $\Gamma \subset \Gamma'$, $\bm 1\{ w' = N(b, \Gamma'\backslash\{0\}) \} \leq \bm 1\{ w' = N(b, \Gamma\backslash\{0\}) \}$. Therefore, both the functions $\bm 1\{b= N(0, \mc{P}_{1}^{0, b})  \}$ and $\bm 1\{ w' = N(b, \mc{P}_{1}^{0, b}\backslash\{0\}) \}$ are nonincreasing, and by an application of the FKG inequality for functions on Poisson processes \citep[Lemma 2.1]{janson_fkg}, it follows that 
	$$\P(b= N(0, \mc{P}_{1}^{0, b}) \text{ and } w' = N(b, \mc{P}_{1}^{0, b}\backslash\{0\}) )  \geq \P(b= N(0, \mc{P}_{1}^{0})) \P(w'= N(b, \mc{P}_{1}^{b}))$$
	This combined with \eqref{eq:z2_calculation} gives,  
	\begin{align*}
	\zeta_2& \geq \int \int ||w'-b|| \bm \P(b= N(0, \mc{P}_{1}^{0})) \P(w'= N(b, \mc{P}_{1}^{b})) \mathrm d b \mathrm d w' \nonumber \\
	&=  \int \int ||w'-b||  e^{-V_d ||b||}  e^{-V_d ||w'-b||}  \mathrm d b \mathrm d w'  \nonumber \\ 
	&= \left(\int  e^{-V_d ||b||} \mathrm d b \right) \left( \int ||v||   e^{-V_d ||v||}   \mathrm d v \right) = \zeta_1,
	\end{align*}
	where the last step uses the definition of $\zeta_1$ from equation (5.4), and $\int  e^{-V_d ||b||} \mathrm d b= S_d \int_0^\infty r^{d-1} e^{-V_d r^d} \mathrm d r = V_d \int_0^\infty  e^{-V_d y} \mathrm d y=1$. 
\end{remark}

\begin{table}[h]
	\centering
	\small 
	\begin{tabular}{|c||c|c|c|}
		\hline 
		$d$ & $\zeta_1$ & $\zeta_2$ & $\Delta_d=\zeta_2-\zeta_1$ \\
		\hline
		1 & 0.5006 & 0.7493 & 0.2487 \\
		\hline
		2 & 0.5008 & 0.5969  & 0.0961 \\
		\hline
		3 & 0.5580 & 0.6155 & 0.0574 \\
		\hline
		4 & 0.6187 & 0.6572 & 0.0385 \\
		\hline
		5 & 0.6782 & 0.7054 & 0.0271 \\ 
		\hline
		6 & 0.7361 & 0.7548 & 0.0187 \\
		\hline
	\end{tabular}
	\caption{\small{Numerical estimates of $\zeta_1$ and $\zeta_2$.}}
	\label{table:values_12}
\end{table}\vspace{-0.1in}
\normalsize

Numerical estimates of the constants $\zeta_1$ and $\zeta_2$ for small dimensions are given in Table \ref{table:values_12}. This is computed using the average (over 20 iterations) of the values of $n^{\frac{1}{d}} \bar D_{m, n}$ and  $n^{\frac{1}{d}} \bar C_{m, n}$ (recall equation (2.7)) with $m=n=100000$ i.i.d uniform points in the $d$-dimensional unit cube $[0, 1]^d$. 

\section{Proof of Corollary 1}
\label{sec:pfcorH0H1}

Note that, by equation (5.7), for  $g \in \mc{F}(\bm \theta)$, $T_{m, n}  \pto \varphi(f_0, g, \rho) < \gamma$. This implies, $\lim_{m, n \rightarrow \infty}\P_{f_0, g}(T_{m, n} > \gamma) = 0$, which proves (5.8).

Under the alternative, suppose $g(y) = \sum_{a=1}^K \bar \lambda_a p(y|\theta_a')$, such that, for some $1 \leq j \leq K$ with $\bar \lambda_j >0$,  $\min_{1 \leq a \leq K}||\theta_j'-\theta_a|| \geq \varepsilon(\gamma)$, where $\varepsilon(\gamma)$ will be chosen later. 
%Now, define $S_a:=\{y: p(y|\theta_a) > \max_{b \in [K]\backslash\{a\}}\{\phi(y-\theta_b) \} \}$. For $y \in S_a$, $\sum_{i=1}^K  w_i \phi(y-\theta_i)  < p(y|\theta_a)$. 
Then 
\begin{align}\label{eq:alt_bound}
\varphi(f_0, g, \rho)= \rho^{\frac{1}{d}} \Delta_d \int \frac{g(y)}{f_0(y)^{\frac{1}{d}}} \mathrm d y  & =  \rho^{\frac{1}{d}} \Delta_d \sum_{a=1}^K \int \frac{\bar \lambda_a p(y|\theta_a') }{  \left(\sum_{b=1}^K  w_b p(y|\theta_b)\right)^{\frac{1}{d}}} \mathrm d y  \nonumber \\ 
& \geq  \rho^{\frac{1}{d}} \Delta_d \int_{B(\theta_j', 1)} \frac{\bar \lambda_j p(y|\theta_j') }{ \left(\sum_{b=1}^K   w_b p(y|\theta_b) \right)^{\frac{1}{d}}} \mathrm d y. 
%& \geq  K_d \sum_{i=1}^K \bar w_i     \sum_{a=1}^K  e^{\frac{1}{2}\left(\frac{1}{d-1} \right)||\theta_i' - \theta_a||^2} \int_{S_a} e^{-\frac{1}{2} \left(\frac{d-1}{d}\right)\left(z- \left(\frac{d}{d-1} \right) (\theta_i' - \theta_a) \right)' \left(z- \left(\frac{d}{d-1} \right) (\theta_i' - \theta_a)\right)}    \mathrm d y \nonumber \\ 
%& \geq  \tilde K_d \sum_{i=1}^K \bar w_i     \sum_{a=1}^K  e^{\frac{1}{2}\left(\frac{1}{d-1} \right)||\theta_i' - \theta_a||^2} \P\left(N\left(\tfrac{d}{d-1}(\theta_i'-\theta_a), \tfrac{d}{d-1} \right) \in S_a\right) \nonumber \\ 
%& \geq  \tilde K_d \bar w_j    \sum_{a=1}^K  e^{\frac{1}{2}\left(\frac{1}{d-1} \right)||\theta_j' - \theta_a||^2} \P\left(N\left(\tfrac{d}{d-1}(\theta_j'-\theta_a), \tfrac{d}{d-1} \right) \in S_a\right) \nonumber \\ 
%& \geq  \tilde K_d \bar w_j  e^{\frac{1}{2}\left(\frac{1}{d-1} \right)\varepsilon^2}  \sum_{a=1}^K   \P\left(N\left(\tfrac{d}{d-1}(\theta_j'-\theta_a), \tfrac{d}{d-1} \right) \in S_a\right) \nonumber \\ 
\end{align}  

Now, since the function $r(\cdot)$ is uniformly continuous and $\int_0^\infty r(z) \mathrm d z < \infty$, it follows that $\lim_{z \rightarrow \infty} r(z)=0$ (see discussion following \cite[Corollary 1]{density_limit}). This implies for every $M>0$ there exists a $\eta(M, d)>0$, such that $r(z) \leq M^{-\frac{1}{d}}$, for $z > \eta(M, d)$. Define $$M:=\frac{2 \gamma}{\rho^{\frac{1}{d}} \Delta_d  L  \int_{B(0, 1)} p(y) \mathrm dy} \quad \text{and} \quad \varepsilon(\gamma):=\eta(M, d)+1.$$ Take a point $\theta_j'$ such that $||\theta_j'-\theta_a|| \geq \varepsilon(\gamma)$, for all $1 \leq a \leq K$. Then, for all $1 \leq a \leq K$, if $y \in B(\theta_j', 1)$, 
$$\eta(M, d)+1 \leq ||\theta_j'-\theta_a|| \leq ||\theta_j'-y|| + ||y-\theta_a|| \leq 1 + ||y-\theta_a||,$$
implies $||y-\theta_a|| \geq \eta(M, d)$. Therefore, for all $1 \leq a \leq K$, if $y \in B(\theta_j', 1)$,  $p(y|\theta_a) = p(y-\theta_a)= r(||y-\theta_a||) \leq M^{-\frac{1}{d}} $ and $\sum_{a=1}^K   w_ap(y|\theta_a) \leq M^{-\frac{1}{d}}$. Then, from \eqref{eq:alt_bound}, 
\begin{align*}
\varphi(f_0, g, \rho)  \geq \rho^{\frac{1}{d}}\Delta_d \bar \lambda_j  M \int_{B(\theta_j', 1)} p(y|\theta_j') \mathrm d y & = \rho^{\frac{1}{d}} \Delta_d \bar \lambda_j  M \int_{B(\theta_j', 1)} p(y-\theta_j') \mathrm d y \nonumber \\ 
& \geq \rho^{\frac{1}{d}} \Delta_d  L M  \int_{B(0, 1)} p(y) \mathrm dy  \nonumber \\ 
& = 2 \gamma. 
\end{align*}
This implies $\lim_{m, n \rightarrow \infty}\P_{f_0, g}(T_{m, n} > \gamma) =1$, since $T_{m, n}  \pto \varphi(f_0, g, \rho) > 2\gamma$, for $g$ as above. This completes the proof of (5.9). \hfill $\Box$ \\ 

Note that the separation $\varepsilon(\gamma)$ depends on $\eta(M, d)$, the rate of decay of the tail of the base density $p$. For instance, when $p$ is the standard multivariate normal distribution $N(0, \mathrm I_d)$, then it suffices to choose $\eta(M, d)=K(d) \sqrt{\log M}$, where $K(d)$ is a constant depending on $d$.

\section{Additional Numerical Experiments}
\label{sec:additionalsims}
\subsection{Sensitivity of the Numerical Experiments in Section 3 to the choice of $\tau_{fc}$}
\label{sec:sim_tau}
\red{We consider the setting of Experiment 2 in section 3.2 and report the sensitivity of our inference using \texttt{TRUH} to changes in the fold change constant $\tau_{fc}$. Recall that in Experiment 2, $$F_0=0.5~{\mathrm{Gam}}_d(\text{shape}=5\bm 1_d,\text{rate}=\bm 1_d,\bm \Sigma_1)+0.5~{\mathrm{Exp}}_d(\text{rate}=\bm 1_d,\bm \Sigma_2),$$ 
	where ${\mathrm{Gam}}_d$ and ${\mathrm{Exp}}_d$ are $d$ dimensional Gamma and Exponential distributions. For generating correlated Gamma and Exponential variables, we use the Gaussian copula approach based function from the R-package \texttt{lcmix} \citep{lcmix,xue2000multivariate}. 
	We consider tapering matrices with positive and negative autocorrelations: $(\bm \Sigma_1)_{ij}=0.7^{|i-j|}$ and $(\bm \Sigma_2)_{ij}=-0.9^{|i-j|}$ for $1\le i,j\le d$.
	For simulating $\bm V_n$ from $G$, we consider the following two scenarios:}
% Table generated by Excel2LaTeX from sheet 'Sheet1'
\begin{itemize}
	
	\item \red{Scenario I: Here, $G={\mathrm{Exp}}_d(\text{rate}=\bm 1_d,\bm \Sigma_2)$. In this case, $G$ arises from only one of the components of $F_0$, that is, $G\in\mc{F}(F_0)$.}
	
	\item \red{Scenario II: Here, $G=0.1~{\mathrm{Gam}}_d(\text{shape}=10\bm 1_d,\text{rate}=0.5\bm 1_d,\bm \Sigma_1)+0.9~{\mathrm{Exp}}_d(\text{rate}=\bm 1_d,\bm \Sigma_2)$. In this setting, $G\notin\mc{F}(F_0)$ and the composite null $H_0$ is not true. When the ratio $n/m$ is small, this scenario presents a difficult setting for detecting departures from $H_0$ as majority of the case samples from $\bm V_n$ will arise from ${\mathrm{Exp}}_d(\text{rate}=\bm 1_d,\bm \Sigma_2)$ and the tests will rely on only a small fraction of samples from ${\mathrm{Gam}}_d(\text{shape}=10\bm 1_d,\text{rate}=0.5\bm 1_d,\bm \Sigma_1)$ to reject the null hypothesis.} 
\end{itemize}
\begin{table}[htbp]
	\centering
	\caption{Rejection rates of \texttt{TRUH} at $5\%$ level of significance: Experiment 2 and Scenario I wherein $H_0: G \in \mathcal{F}(F_0)$ is true.}
	\begin{tabular}{lcccccc}
		\hline
		& \multicolumn{3}{c}{$m = 500,~n = 50$} & \multicolumn{3}{c}{$m = 2000,~n = 200$} \\
		\hline
		\multicolumn{1}{c}{$\tau_{fc}$} & $d=5$   & $d=15$  & $d=30$  & $d=5$   & $d=15$  & $d=30$ \\
		\hline
		$1.0$ & 0.000 & 0.000 & 0.000 & 0.000 & 0.000 & 0.000 \\
		$1.2$ & 0.000 & 0.000 & 0.000  & 0.000 & 0.000 & 0.000 \\
		$1.4$ & 0.000 & 0.000 & 0.000  & 0.000 & 0.000 & 0.000 \\
		\hline
	\end{tabular}%
	\label{tab:exp2set1_1}%
\end{table}%
% Table generated by Excel2LaTeX from sheet 'Sheet1'
\begin{table}[htbp]
	\centering
	\caption{Rejection rates of $\texttt{TRUH}$ at $5\%$ level of significance: Experiment 2 and Scenario II wherein $H_0: G \in \mathcal{F}(F_0)$ is false.}
	\begin{tabular}{lcccccc}
		\hline
		& \multicolumn{3}{c}{$m = 500, n = 10$} & \multicolumn{3}{c}{$m = 2000, n = 40$} \\
		\hline
		\multicolumn{1}{c}{$\tau_{fc}$} & $d=5$   & $d=15$  & $d=30$  & $d=5$   & $d=15$  & $d=30$\\
		\hline
		$1.0$ & 0.580 & 0.580 & 0.580 & 0.880 & 0.940 & 0.960 \\
		$1.2$ & 0.500 & 0.560 & 0.580  & 0.820 & 0.860 & 0.900 \\
		$1.4$ & 0.460 & 0.480 & 0.500  & 0.780 & 0.760 & 0.700 \\
		\hline
	\end{tabular}%
	\label{tab:exp2set2_1}%
\end{table}%
\red{Tables \ref{tab:exp2set1_1} and \ref{tab:exp2set2_1} report the average rejection rates of \texttt{TRUH} across $100$ repetitions of the test as $\tau_{fc}$ varies over $\{1,1.2,1.4\}$. 
	%	We note that across both the scenarios, the rejection rates decrease as $\tau_{fc}$ increases from $0.6$ to $1.4$. As discussed in section 2.3, this decline in the rejection rates reflects the conservatism of the \texttt{TRUH} testing procedure in rejecting the null hypothesis with increasing $\tau_{fc}$. Moreover, whenever $\tau_{fc}\ge 1$ 
	We note that the rejection rates under Scenario II are bigger than those of Scenario I,  which indicates that our proposed procedure is powerful against departures from the null hypothesis while the rejection rates under Scenario I are below the prespecified 
	$0.05$ level establishing that it is a conservative test across all the regimes considered in the table. These results also indicate that an appropriate choice of $\tau_{fc}$ must be bigger or equal to $1$ for a value less than $1$ may lead to incorrect rejections of the null hypothesis.}% as seen in table \ref{tab:exp2set1_1}.}
\begin{table}[!t]
	\centering
	\caption{Rejection rates of \texttt{TRUH} at $5\%$ level of significance under Dirichlet sampling of mixing proportions: \newline Experiment 2 and Scenario I wherein $H_0: G \in \mathcal{F}(F_0)$ is true.}
	\begin{tabular}{lcccccc}
		\hline
		& \multicolumn{3}{c}{$m = 500,~n = 50$} & \multicolumn{3}{c}{$m = 2000,~n = 200$} \\
		\hline
		\multicolumn{1}{c}{$\tau_{fc}$} & $d=5$   & $d=15$  & $d=30$  & $d=5$   & $d=15$  & $d=30$ \\
		\hline
		%		$0.6$& 1.000 & 1.000 & 1.000 & 1.000 & 1.000 & 1.000 \\
		%		$0.8$ & 0.660 & 1.000 & 1.000 & 1.000 & 1.000 & 1.000 \\
		$1.0$ & 0.000 & 0.000 & 0.000 & 0.000 & 0.000 & 0.000 \\
		$1.2$ & 0.000 & 0.000 & 0.000  & 0.000 & 0.000 & 0.000 \\
		$1.4$ & 0.000 & 0.000 & 0.000  & 0.000 & 0.000 & 0.000 \\
		\hline
	\end{tabular}%
	\label{tab:exp2set1_2}%
\end{table}%
% Table generated by Excel2LaTeX from sheet 'Sheet1'
\begin{table}[htbp]
	\centering
	\caption{Rejection rates of $\texttt{TRUH}$ at $5\%$ level of significance under Dirichlet sampling of mixing proportions:\newline Experiment 2 and Scenario II wherein $H_0: G \in \mathcal{F}(F_0)$ is false.}
	\begin{tabular}{lcccccc}
		\hline
		& \multicolumn{3}{c}{$m = 500, n = 10$} & \multicolumn{3}{c}{$m = 2000, n = 40$} \\
		\hline
		\multicolumn{1}{c}{$\tau_{fc}$} & $d=5$   & $d=15$  & $d=30$  & $d=5$   & $d=15$  & $d=30$\\
		\hline
		%		$0.6$& 0.940 & 1.000 & 1.000 & 1.000 & 1.000 & 1.000 \\
		%		$0.8$ & 0.620 & 0.760 & 0.940 & 0.960 & 1.000 & 1.000 \\
		$1.0$ & 0.580 & 0.580 & 0.580 & 0.880 & 0.940 & 0.960 \\
		$1.2$ & 0.500 & 0.560 & 0.580  & 0.840 & 0.860 & 0.900 \\
		$1.4$ & 0.440 & 0.480 & 0.500  & 0.780 & 0.760 & 0.700 \\
		\hline
	\end{tabular}%
	\label{tab:exp2set2_2}%
\end{table}%

\red{The preceding analyses were based on sampling the mixing proportions $\{\lambda_1,\ldots,\lambda_{\hat{K}}\}$ only from the corners of the $\hat{K}$ dimensional simplex $\mc{S}_{\hat{K}}$ (see section 2.3). In what follows, we alter this sampling scheme and sample the $\hat{K}$ mixing proportions from a Dirichlet distribution with parameters $\{\beta_1,\ldots,\beta_{\hat{K}}\}$. We set $\beta_a=0.1$ for $1\le a\le \hat{K}$ and report the rejection rates under this sampling scheme in tables \ref{tab:exp2set1_2} and \ref{tab:exp2set2_2}. With $\beta_a=0.1$, the Dirichlet distribution places a large mass on the corners of the $\hat{K}$ dimensional simplex which explains the similar trend in the rejections rates that is observed across both the scenarios in tables \ref{tab:exp2set1_2} and \ref{tab:exp2set2_2} when compared to tables \ref{tab:exp2set1_1} and \ref{tab:exp2set2_1}, respectively.}  
\subsection{Sensitivity of the Real Data Analysis in Section 4 to the choice of $\tau_{fc}$}
\label{sec:realdata_tau}
\red{In section 4 the fold change constant $\tau_{fc}$ was set to $1.1$. In this section we report the sensitivity of the results reported in section 4 for $\tau_{fc}\in\{1,1.1,1.3,1.7,2\}$. In tables \ref{tab:case1_taufc} to \ref{tab:case4_taufc} we report the p-values of the \texttt{TRUH} test statistic for testing remodeling under HIV infection under the four cases as described in section 4. For Cases 1 and 3, which are known to exhibit remodeling, we note from tables \ref{tab:case1_taufc} and \ref{tab:case3_taufc} that for $\tau_{fc}>1.1$, the TRUH test statistic fails to detect remodeling across all four donors. This is not unexpected since a relatively large value of $\tau_{fc}$ offers higher conservatism in rejecting the null hypothesis of no remodeling. Tables \ref{tab:case2_taufc} and \ref{tab:case4_taufc}, on the other hand, represent cases of no remodeling and $\tau_{fc}\ge 1.1$ allows \texttt{TRUH} to correctly detect no remodeling for Cases 2 and 4. }
\begin{table}[!t]
	\centering
	\caption{p-values of \texttt{TRUH} in CASE 1: Uninfected versus \texttt{Nef-rich} HIV Infected for entire 35 markers.}
	\scalebox{0.8}{
		\begin{tabular}{lcccc}
			\hline
			& \multicolumn{1}{c}{Donor 1} & \multicolumn{1}{c}{Donor 2} & \multicolumn{1}{c}{Donor 3} & \multicolumn{1}{c}{Donor 4} \\
			$\tau_{fc}$ & \multicolumn{1}{l}{$m = 24,984, n = 245$} & \multicolumn{1}{l}{$m = 31,552, n = 521$} & \multicolumn{1}{l}{$m = 17,704, n = 211$} & \multicolumn{1}{l}{$m = 22,830, n = 660$} \\
			\hline
			$1.0$ &  $<0.001$    & $<0.001$      &  $<0.001$     & $<0.001$ \\
			$1.1$ & $<0.001$      &   $<0.001$    &  $<0.001$     & $<0.001$ \\
			$1.3$ &   $0.082$    & $0.05$      &  $0.086$     & $0.292$ \\
			$1.7$ &  $1$    & $1$      &  $1$     & $1$ \\
			$2.0$ &   $1$    & $1$      &  $1$     & $1$  \\
			\hline
	\end{tabular}}%
	\label{tab:case1_taufc}%
	\vspace{10pt}
	\caption{p-values of \texttt{TRUH} in CASE 2: Uninfected versus \texttt{Nef-rich} HIV Infected for 31 invariant markers.}
	\scalebox{0.8}{
		\begin{tabular}{lcccc}
			\hline
			& \multicolumn{1}{c}{Donor 1} & \multicolumn{1}{c}{Donor 2} & \multicolumn{1}{c}{Donor 3} & \multicolumn{1}{c}{Donor 4} \\
			$\tau_{fc}$ & \multicolumn{1}{l}{$m = 24,984, n = 245$} & \multicolumn{1}{l}{$m = 31,552, n = 521$} & \multicolumn{1}{l}{$m = 17,704, n = 211$} & \multicolumn{1}{l}{$m = 22,830, n = 660$} \\
			\hline
			$1.0$ &  $<0.001$    & $<0.001$      &  $<0.001$     & $<0.001$ \\
			$1.1$ & $<0.001$      &   $0.67$    &  $0.274$     & $0.914$ \\
			$1.3$ &   $1$    & $1$      &  $1$     & $1$ \\
			$1.7$ &  $1$    & $1$      &  $1$     & $1$ \\
			$2.0$ &   $1$    & $1$      &  $1$     & $1$  \\
			\hline
	\end{tabular}}%%
	\label{tab:case2_taufc}%
	\vspace{10pt}
	\caption{p-values of \texttt{TRUH} in CASE 3: Uninfected versus \texttt{Nef-deficient} HIV Infected for the entire 35 markers.}
	\scalebox{0.8}{
		\begin{tabular}{lcccc}
			\hline
			& \multicolumn{1}{c}{Donor 1} & \multicolumn{1}{c}{Donor 2} & \multicolumn{1}{c}{Donor 3} & \multicolumn{1}{c}{Donor 4} \\
			$\tau_{fc}$ & \multicolumn{1}{l}{$m = 24,984, n = 129$} & \multicolumn{1}{l}{$m = 31,552, n = 382$} & \multicolumn{1}{l}{$m = 17,704, n = 174$} & \multicolumn{1}{l}{$m = 22,830, n = 440$} \\
			\hline
			$1.0$ &  $<0.001$    & $<0.001$      &  $<0.001$     & $<0.001$ \\
			$1.1$ & $<0.001$      &   $<0.001$    &  $<0.001$     & $<0.001$ \\
			$1.3$ &   $1$    &   $1$    &  $1$     & $1$ \\
			$1.7$ &  $1$    &   $1$    &  $1$     & $1$ \\
			$2.0$ &   $1$    &  $1$     &  $1$     & $1$  \\
			\hline
	\end{tabular}}%%
	\label{tab:case3_taufc}%
	\vspace{10pt}
	\caption{p-values of \texttt{TRUH} in CASE 4: Uninfected versus \texttt{Nef-deficient} HIV Infected for the entire 31 invariant markers.}
	\scalebox{0.8}{
		\begin{tabular}{lcccc}
			\hline
			& \multicolumn{1}{c}{Donor 1} & \multicolumn{1}{c}{Donor 2} & \multicolumn{1}{c}{Donor 3} & \multicolumn{1}{c}{Donor 4} \\
			$\tau_{fc}$ & \multicolumn{1}{l}{$m = 24,984, n = 129$} & \multicolumn{1}{l}{$m = 31,552, n = 382$} & \multicolumn{1}{l}{$m = 17,704, n = 174$} & \multicolumn{1}{l}{$m = 22,830, n = 440$} \\
			\hline
			$1.0$ &  $<0.001$    & $<0.001$      &  $<0.001$     & $<0.001$ \\
			$1.1$ & $0.58$      &   $0.94$    &  $0.464$     & $0.524$ \\
			$1.3$ &   $1$    & $1$      &  $1$     & $1$ \\
			$1.7$ &  $1$    & $1$      &  $1$     & $1$ \\
			$2.0$ &   $1$    & $1$      &  $1$     & $1$  \\
			\hline
	\end{tabular}}%%
	\label{tab:case4_taufc}%
\end{table}% 
\subsection{Computation Time Comparisons}
\label{sec:timing} 
\red{In this section we present a comparison of the computing time for each of the seven competing testing procedures under the settings of Scenarios I and II of Experiment 2 (see section 3.2) with $n=15000,~m=150$ and $d\in\{5,15,30,50\}$. Tables \ref{tab:timing} and \ref{tab:timing_2} report the average computing time in minutes across 10 repetitions of the testing problem as $d$ varies. Here the computing time represents the time each test takes to generate a p-value. We note that the \texttt{Energy} test is extremely efficient with an average computation time just over a minute for these scenarios. Our proposed testing procedure \texttt{TRUH} is the next best and is closely followed by the four variants of the \texttt{Edge Count} tests. The \texttt{R} package \texttt{gtests} that implements these variants of the \texttt{Edge Count} tests, spits the results for all these variants simultaneously and thus the different \texttt{Edge Count} tests exhibit the same performance in tables \ref{tab:timing} and \ref{tab:timing_2}. We note that the computation time of \texttt{TRUH} increases with $d$, which is not surprising because the computational cost for running the 1-nearest neighbor algorithm is $O(nmd)$. The \texttt{Edge Count} tests, on the other hand, rely on a minimum spanning tree construction which has $O(n^2)$ complexity and dominates the overall running time. In our experiments, we find that the \texttt{Crossmatch} test is the slowest primarily because this test requires a number of computationally expensive steps such as ranking each of the $d$ dimensions, computing and inverting the $d$ dimensional covariance matrix of the ranks and finally calculating the $(n+m)\times (n+m)$ matrix of Mahalanobis distance between the rank pairs.}
\begin{table}[!t]
	\centering
	\caption{\small Mean computing time (in minutes) for the seven competing testing procedures under the setting of Scenario I Experiment 2. Here $n=15000,~m=150$.}
	\scalebox{0.9}{
		\begin{tabular}{lccccccc}
			\hline
			\multicolumn{1}{c}{$d$} & \texttt{Energy}  & \texttt{Crossmatch}  & \texttt{E Count}  & \texttt{GE Count}   & \texttt{WE Count}  & \texttt{MTE Count} & \texttt{TRUH}\\
			\hline
			$5$& 1.128 & 41.221 & 8.242 & 8.242 & 8.242 & 8.242 &1.047 \\
			$15$ & 1.142 & 35.225 & 9.394 & 9.394 & 9.394 & 9.394 & 2.161 \\
			$30$ & 1.162 & 35.141 &9.481 & 9.481 & 9.481 & 9.481 &3.970 \\
			$50$ & 1.235 & 42.062 & 9.784  & 9.784 & 9.784 & 9.784 & 6.380\\
			\hline
	\end{tabular}}%
	\label{tab:timing}%
\end{table}%}
\begin{table}[!t]
	\centering
	\caption{\small Mean computing time (in minutes) for the seven competing testing procedures under the setting of Scenario II Experiment 2. Here $n=15000,~m=150$.}
	\scalebox{0.9}{
		\begin{tabular}{lccccccc}
			\hline
			\multicolumn{1}{c}{$d$} & \texttt{Energy}  & \texttt{Crossmatch}  & \texttt{E Count}  & \texttt{GE Count}   & \texttt{WE Count}  & \texttt{MTE Count} & \texttt{TRUH}\\
			\hline
			$5$& 1.107 & 45.604 & 7.277 & 7.277 & 7.277 & 7.277 &1.003 \\
			$15$ & 1.147 & 37.688 & 7.502 & 7.502 & 7.502 & 7.502 & 2.099 \\
			$30$ & 1.117 & 32.858 & 7.581 & 7.581 & 7.581 & 7.581 & 3.634 \\
			$50$ & 1.150 & 46.575 & 6.833  & 6.833 & 6.833 & 6.833 & 6.044\\
			\hline
	\end{tabular}}%
	\label{tab:timing_2}%
\end{table}%}
\subsection{Influence of mixing proportions on the null distribution}
\label{sec:influence} 
\red{To understand how the specific mixing proportions influence the null distribution of the \texttt{TRUH} statistic, we consider the setting of Experiment 2 (section 3.2 of the manuscript) where the uninfected cells arise from a 2-component mixture distribution given by $F_0=0.5~{\mathrm{Gam}}_d(\text{shape}=5\bm 1_d,\text{rate}=\bm 1_d,\bm \Sigma_1)+0.5~{\mathrm{Exp}}_d(\text{rate}=\bm 1_d,\bm \Sigma_2)$. The infected cells arise from $G={\mathrm{Exp}}_d(\text{rate}=\bm 1_d,\bm \Sigma_2)$ for scenario I and from $G=0.1\mathrm{Gam}_d(\text{shape}=10\bm 1_d,\text{rate}=0.5\bm 1_d,\bm \Sigma_1)+0.9\mathrm{Exp}_d(\text{rate}=\bm 1_d,\bm \Sigma_2)$ for scenario II. Thus scenario I represents preferential infection and scenario II represents remodeling. Here ${\mathrm{Gam}}_d$ and ${\mathrm{Exp}}_d$ are $d$ dimensional Gamma and Exponential distributions.} 

\red{Figure \ref{fig:exp3_1} plots the $\hat{K}=2$ null distributions of the \texttt{TRUH} statistic under Scenario I. Here the bootstrap algorithm generates the mixing proportions $\{\lambda_1, \ldots, \lambda_{\hat K}\}$ in step (i) only from the corners of the $\hat{K}=2$ dimensional simplex $\mc{S}_{\hat{K}}$ and so there are $B_1=\hat{K}$ such null distributions for each of the $50$ repetitions of the test. The null distributions represented by the red box plots arise when the $n$ pseudo infected cells are randomly sampled from the first component of $F_0$, which corresponds to a configuration of mixing proportions given by $\{\lambda_1=1, \lambda_2=0\}$. The box plots in green, on the other hand, arise when the $n$ pseudo infected cells are randomly sampled from the second component of $F_0$, which corresponds to $\{\lambda_1=0, \lambda_2=1\}$. The blue dots represent the \texttt{TRUH} test statistic across the $50$ repetitions of the test. 
	Note that in Scenario I, the infected cells arise from the second component of $F_0$ and this explains why the \texttt{TRUH} test statistic, given by the blue dots, are closer to the box plots in green. However, the null distribution represented by the red box plots, which correspond to the mixing proportions $\{\lambda_1=1, \lambda_2=0\}$, offer more conservatism in rejecting the null hypothesis of no remodeling as far as Scenario I is concerned. }

\begin{figure}[!t]
	\centering
	\includegraphics[width=1\linewidth]{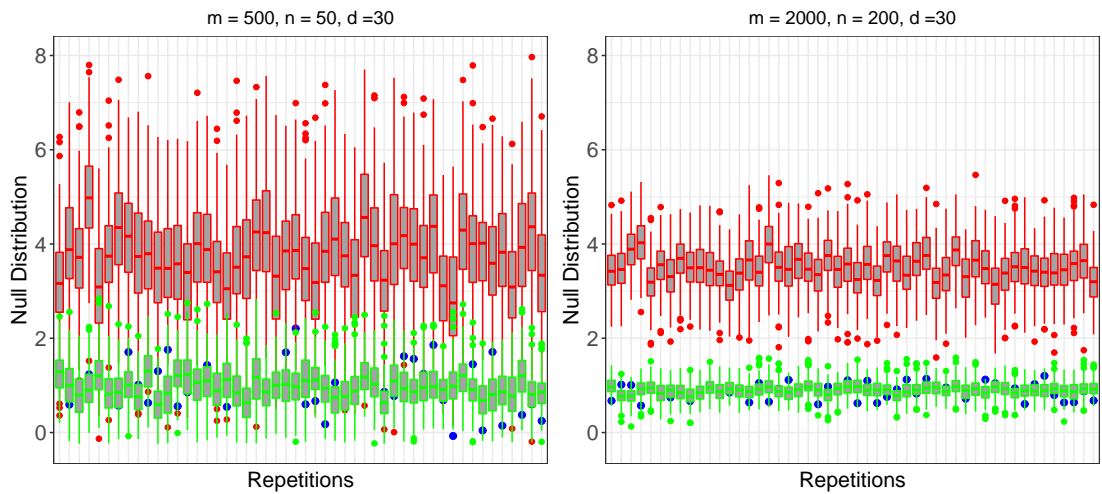}
	\caption{\small The red box plots  and the green box plots represent $\hat{K}=2$ null distributions for each of the $50$ repetitions of Scenario I ($H_0$ is true) corresponding to mixing proportions $\{\lambda_1=1,\lambda_2=0\}$ and $\{\lambda_1=0,\lambda_2=1\}$ respectively. The blue dots are the \texttt{TRUH} test statistic in each repetition of the test. Left: Uninfected and Infected sample sizes are $m=500, n=50$ and dimensionality of each sample is $d=30$. Right: $m=2000, n=200, d=30$.} 
	\label{fig:exp3_1}
\end{figure}
\begin{figure}[!h]
	\centering
	\includegraphics[width=1\linewidth]{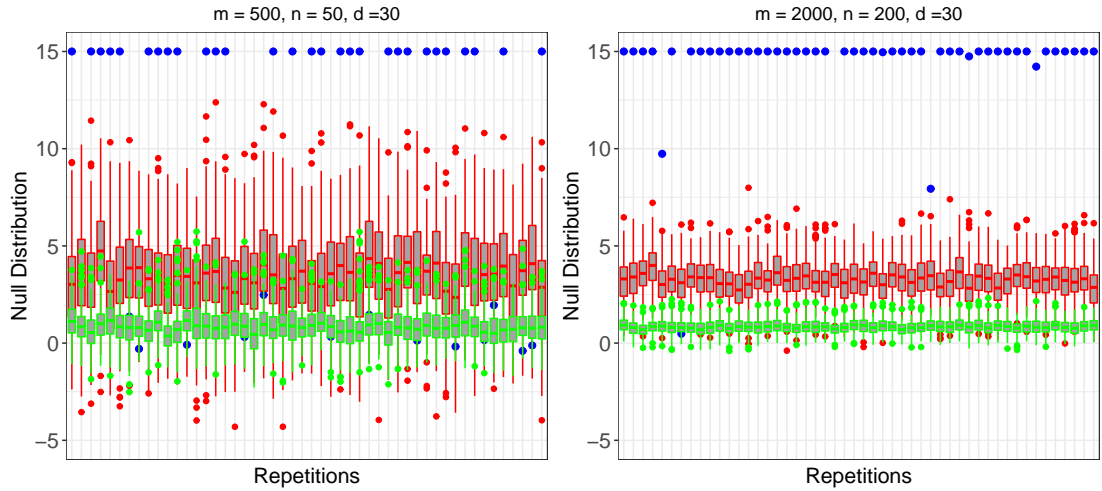}
	\caption{\small The red box plots  and the green box plots represent $\hat{K}=2$ null distributions for each of the $50$ repetitions of Scenario II ($H_0$ is false) corresponding to mixing proportions $\{\lambda_1=1,\lambda_2=0\}$ and $\{\lambda_1=0,\lambda_2=1\}$ respectively. The blue dots are the \texttt{TRUH} test statistic in each repetition of the test and are capped at $15$ for ease of visual representation. Left: Uninfected and Infected sample sizes are $m=500, n=50$ and dimensionality of each sample is $d=30$. Right: $m=2000, n=200, d=30$.}
	\label{fig:exp3_2}
\end{figure}
\red{Figure \ref{fig:exp3_2}, on the other hand, plots the $\hat{K}=2$ null distributions of the \texttt{TRUH} statistic under Scenario II which is a case of remodeling. Here the \texttt{TRUH} test statistic is capped at $15$ for ease of visual representation. Unlike Scenario I, the infected cells do not arise from the convex hull $\mc{F}(F_0)$ of $F_0$, which explains why the \texttt{TRUH} test statistic are far away from the two null distributions. However, even in this scenario when the $n$ pseudo infected cells are randomly sampled from the first component of $F_0$, the corresponding null distribution (box plots in red) offers more conservatism in rejecting the null hypothesis of no remodeling than the null distribution that arises under the configuration $\{\lambda_1=0, \lambda_2=1\}$.}

\bibliographystyle{chicago}
%\def\spacingset#1{\renewcommand{\baselinestretch}%
%	{#1}\small\normalsize} \spacingset{1}
%\spacingset{0.5}
%\footnotesize
\bibliography{ref,single-cell,paper-ref}

\begin{thebibliography}{}

\bibitem[\protect\citeauthoryear{Amir, Davis, Tadmor, Simonds, Levine, Bendall,
  Shenfeld, Krishnaswamy, Nolan, and Pe'er}{Amir et~al.}{2013}]{amir2013visne}
Amir, E.-a.~D., K.~L. Davis, M.~D. Tadmor, E.~F. Simonds, J.~H. Levine, S.~C.
  Bendall, D.~K. Shenfeld, S.~Krishnaswamy, G.~P. Nolan, and D.~Pe'er (2013).
\newblock visne enables visualization of high dimensional single-cell data and
  reveals phenotypic heterogeneity of leukemia.
\newblock {\em Nature biotechnology\/}~{\em 31\/}(6), 545--552.

\bibitem[\protect\citeauthoryear{Aslan and Zech}{Aslan and
  Zech}{2005}]{aslan2005new}
Aslan, B. and G.~Zech (2005).
\newblock New test for the multivariate two-sample problem based on the concept
  of minimum energy.
\newblock {\em Journal of Statistical Computation and Simulation\/}~{\em
  75\/}(2), 109--119.

\bibitem[\protect\citeauthoryear{Banerjee, Bhattacharya, and
  Mukherjee}{Banerjee et~al.}{2020}]{banerjee2020truh}
Banerjee, T., B.~B. Bhattacharya, and G.~Mukherjee (2020).
\newblock Supplement to a nearest-neighbor based nonparametric test for viral
  remodeling in heterogeneous single-cell proteomic data.

\bibitem[\protect\citeauthoryear{Baringhaus and Franz}{Baringhaus and
  Franz}{2004}]{baringhaus2004new}
Baringhaus, L. and C.~Franz (2004).
\newblock On a new multivariate two-sample test.
\newblock {\em Journal of multivariate analysis\/}~{\em 88\/}(1), 190--206.

\bibitem[\protect\citeauthoryear{Basmaciogullari and Pizzato}{Basmaciogullari
  and Pizzato}{2014}]{basmaciogullari2014activity}
Basmaciogullari, S. and M.~Pizzato (2014).
\newblock The activity of nef on hiv-1 infectivity.
\newblock {\em Frontiers in microbiology\/}~{\em 5}, 232.

\bibitem[\protect\citeauthoryear{Bendall, Davis, Amir, Tadmor, Simonds, Chen,
  Shenfeld, Nolan, and Pe'er}{Bendall et~al.}{2014}]{bendall2014single}
Bendall, S.~C., K.~L. Davis, E.-a.~D. Amir, M.~D. Tadmor, E.~F. Simonds, T.~J.
  Chen, D.~K. Shenfeld, G.~P. Nolan, and D.~Pe'er (2014).
\newblock Single-cell trajectory detection uncovers progression and regulatory
  coordination in human b cell development.
\newblock {\em Cell\/}~{\em 157\/}(3), 714--725.

\bibitem[\protect\citeauthoryear{Bendall, Simonds, Qiu, Amir, Krutzik, Finck,
  Bruggner, Melamed, Trejo, Ornatsky, Balderas, Plevritis, Sachs, Pe'er,
  Tanner, and Nolan}{Bendall et~al.}{2011}]{Bendall11}
Bendall, S.~C., E.~F. Simonds, P.~Qiu, E.-a. l. . A.~D. Amir, P.~O. Krutzik,
  R.~Finck, R.~V. Bruggner, R.~Melamed, A.~Trejo, O.~I. Ornatsky, R.~S.
  Balderas, S.~K. Plevritis, K.~Sachs, D.~Pe'er, S.~D. Tanner, and G.~P. Nolan
  (2011, May).
\newblock {Single-cell mass cytometry of differential immune and drug responses
  across a human hematopoietic continuum.}
\newblock {\em Science (New York, N.Y.)\/}~{\em 332\/}(6030), 687--696.

\bibitem[\protect\citeauthoryear{Berrett and Samworth}{Berrett and
  Samworth}{2019a}]{berrett2019efficient_II}
Berrett, T.~B. and R.~J. Samworth (2019a).
\newblock Efficient two-sample functional estimation and the super-oracle
  phenomenon.
\newblock {\em arXiv preprint arXiv:1904.09347\/}.

\bibitem[\protect\citeauthoryear{Berrett and Samworth}{Berrett and
  Samworth}{2019b}]{berrett2019nonparametric}
Berrett, T.~B. and R.~J. Samworth (2019b).
\newblock Nonparametric independence testing via mutual information.
\newblock {\em Biometrika\/}~{\em 106\/}(3), 547--566.

\bibitem[\protect\citeauthoryear{Berrett, Samworth, and Yuan}{Berrett
  et~al.}{2019}]{berrett2019efficient}
Berrett, T.~B., R.~J. Samworth, and M.~Yuan (2019).
\newblock Efficient multivariate entropy estimation via $k$-nearest neighbour
  distances.
\newblock {\em The Annals of Statistics\/}~{\em 47\/}(1), 288--318.

\bibitem[\protect\citeauthoryear{Bhattacharya}{Bhattacharya}{2019}]{bhattacharya2019}
Bhattacharya, B.~B. (2019).
\newblock A general asymptotic framework for distribution-free graph-based
  two-sample tests.
\newblock {\em Journal of the Royal Statistical Society: Series B (Statistical
  Methodology)\/}~{\em 81\/}(3), 575--602.

\bibitem[\protect\citeauthoryear{Bickel}{Bickel}{1969}]{bickel1969distribution}
Bickel, P.~J. (1969).
\newblock A distribution free version of the smirnov two sample test in the
  p-variate case.
\newblock {\em The Annals of Mathematical Statistics\/}~{\em 40\/}(1), 1--23.

\bibitem[\protect\citeauthoryear{Bruggner, Bodenmiller, Dill, Tibshirani, and
  Nolan}{Bruggner et~al.}{2014}]{bruggner2014automated}
Bruggner, R.~V., B.~Bodenmiller, D.~L. Dill, R.~J. Tibshirani, and G.~P. Nolan
  (2014).
\newblock Automated identification of stratifying signatures in cellular
  subpopulations.
\newblock {\em Proceedings of the National Academy of Sciences\/}~{\em
  111\/}(26), E2770--E2777.

\bibitem[\protect\citeauthoryear{Cannings, Berrett, and Samworth}{Cannings
  et~al.}{2019}]{cannings2017local}
Cannings, T.~I., T.~B. Berrett, and R.~J. Samworth (2019).
\newblock Local nearest neighbour classification with applications to
  semi-supervised learning.
\newblock {\em arXiv preprint arXiv:1704.00642\/}.

\bibitem[\protect\citeauthoryear{Cavrois, Banerjee, Mukherjee, Raman, Hussien,
  Rodriguez, Vasquez, Spitzer, Lazarus, Jones, et~al.}{Cavrois
  et~al.}{2017}]{cavrois2017mass}
Cavrois, M., T.~Banerjee, G.~Mukherjee, N.~Raman, R.~Hussien, B.~A. Rodriguez,
  J.~Vasquez, M.~H. Spitzer, N.~H. Lazarus, J.~J. Jones, et~al. (2017).
\newblock Mass cytometric analysis of hiv entry, replication, and remodeling in
  tissue cd4+ t cells.
\newblock {\em Cell reports\/}~{\em 20\/}(4), 984--998.

\bibitem[\protect\citeauthoryear{Chaudhuri, Lindwasser, Smith, Hurley, and
  Bonifacino}{Chaudhuri et~al.}{2007}]{chaudhuri2007downregulation}
Chaudhuri, R., O.~W. Lindwasser, W.~J. Smith, J.~H. Hurley, and J.~S.
  Bonifacino (2007).
\newblock Downregulation of cd4 by human immunodeficiency virus type 1 nef is
  dependent on clathrin and involves direct interaction of nef with the ap2
  clathrin adaptor.
\newblock {\em Journal of virology\/}~{\em 81\/}(8), 3877--3890.

\bibitem[\protect\citeauthoryear{Chen, Chen, and Su}{Chen
  et~al.}{2018}]{chen2018weighted}
Chen, H., X.~Chen, and Y.~Su (2018).
\newblock A weighted edge-count two-sample test for multivariate and object
  data.
\newblock {\em Journal of the American Statistical Association\/}~{\em
  113\/}(523), 1146--1155.

\bibitem[\protect\citeauthoryear{Chen and Friedman}{Chen and
  Friedman}{2017}]{chen2017new}
Chen, H. and J.~H. Friedman (2017).
\newblock A new graph-based two-sample test for multivariate and object data.
\newblock {\em Journal of the American statistical association\/}~{\em
  112\/}(517), 397--409.

\bibitem[\protect\citeauthoryear{Chen, Dou, and Qiao}{Chen
  et~al.}{2013}]{chen2013ensemble}
Chen, L., W.~W. Dou, and Z.~Qiao (2013).
\newblock Ensemble subsampling for imbalanced multivariate two-sample tests.
\newblock {\em Journal of the American Statistical Association\/}~{\em
  108\/}(504), 1308--1323.

\bibitem[\protect\citeauthoryear{Chung and Fraser}{Chung and
  Fraser}{1958}]{chung1958randomization}
Chung, J.~H. and D.~A. Fraser (1958).
\newblock Randomization tests for a multivariate two-sample problem.
\newblock {\em Journal of the American Statistical Association\/}~{\em
  53\/}(283), 729--735.

\bibitem[\protect\citeauthoryear{Cover and Hart}{Cover and
  Hart}{1967}]{cover1967nearest}
Cover, T. and P.~Hart (1967).
\newblock Nearest neighbor pattern classification.
\newblock {\em IEEE transactions on information theory\/}~{\em 13\/}(1),
  21--27.

\bibitem[\protect\citeauthoryear{Cressie}{Cressie}{1976}]{cressie1976logarithms}
Cressie, N. (1976).
\newblock On the logarithms of high-order spacings.
\newblock {\em Biometrika\/}~{\em 63\/}(2), 343--355.

\bibitem[\protect\citeauthoryear{Deb and Sen}{Deb and
  Sen}{2019}]{deb_twosample_2019}
Deb, N. and B.~Sen (2019).
\newblock Multivariate rank-based distribution-free nonparametric testing using
  measure transportation.
\newblock {\em arXiv preprint arXiv:1909.08733\/}.

\bibitem[\protect\citeauthoryear{Devroye, Gy{\"o}rfi, and Lugosi}{Devroye
  et~al.}{2013}]{devroye2013probabilistic}
Devroye, L., L.~Gy{\"o}rfi, and G.~Lugosi (2013).
\newblock {\em A probabilistic theory of pattern recognition}, Volume~31.
\newblock Springer Science \& Business Media.

\bibitem[\protect\citeauthoryear{Dvorkin}{Dvorkin}{2012}]{lcmix}
Dvorkin, D. (2012).
\newblock {\em lcmix: Layered and chained mixture models}.
\newblock R package version 0.3/r5.

\bibitem[\protect\citeauthoryear{Friedman and Rafsky}{Friedman and
  Rafsky}{1979}]{friedman1979multivariate}
Friedman, J.~H. and L.~C. Rafsky (1979).
\newblock Multivariate generalizations of the wald-wolfowitz and smirnov
  two-sample tests.
\newblock {\em The Annals of Statistics\/}, 697--717.

\bibitem[\protect\citeauthoryear{Gadat, Klein, and Marteau}{Gadat
  et~al.}{2016}]{gadat2016classification}
Gadat, S., T.~Klein, and C.~Marteau (2016).
\newblock Classification in general finite dimensional spaces with the
  $k$-nearest neighbor rule.
\newblock {\em The Annals of Statistics\/}~{\em 44\/}(3), 982--1009.

\bibitem[\protect\citeauthoryear{Garcia and Miller}{Garcia and
  Miller}{1991}]{garcia1991serine}
Garcia, J.~V. and A.~D. Miller (1991).
\newblock Serine phosphorylation-independent downregulation of cell-surface cd4
  by nef.
\newblock {\em Nature\/}~{\em 350\/}(6318), 508.

\bibitem[\protect\citeauthoryear{Ghosal and Sen}{Ghosal and
  Sen}{2019}]{ghosal2019multivariate}
Ghosal, P. and B.~Sen (2019).
\newblock Multivariate ranks and quantiles using optimal transportation and
  applications to goodness-of-fit testing.
\newblock {\em arXiv preprint arXiv:1905.05340\/}.

\bibitem[\protect\citeauthoryear{Giesen, Wang, Schapiro, Zivanovic, Jacobs,
  Hattendorf, Sch{\"u}ffler, Grolimund, Buhmann, Brandt, et~al.}{Giesen
  et~al.}{2014}]{giesen2014highly}
Giesen, C., H.~A. Wang, D.~Schapiro, N.~Zivanovic, A.~Jacobs, B.~Hattendorf,
  P.~J. Sch{\"u}ffler, D.~Grolimund, J.~M. Buhmann, S.~Brandt, et~al. (2014).
\newblock Highly multiplexed imaging of tumor tissues with subcellular
  resolution by mass cytometry.
\newblock {\em Nature methods\/}~{\em 11\/}(4), 417.

\bibitem[\protect\citeauthoryear{Goria, Leonenko, Mergel, and
  Novi~Inverardi}{Goria et~al.}{2005}]{goria2005new}
Goria, M.~N., N.~N. Leonenko, V.~V. Mergel, and P.~L. Novi~Inverardi (2005).
\newblock A new class of random vector entropy estimators and its applications
  in testing statistical hypotheses.
\newblock {\em Journal of Nonparametric Statistics\/}~{\em 17\/}(3), 277--297.

\bibitem[\protect\citeauthoryear{Gretton, Borgwardt, Rasch, Sch{\"o}lkopf, and
  Smola}{Gretton et~al.}{2007}]{gretton2007kernel}
Gretton, A., K.~M. Borgwardt, M.~Rasch, B.~Sch{\"o}lkopf, and A.~J. Smola
  (2007).
\newblock A kernel method for the two-sample-problem.
\newblock In {\em Advances in neural information processing systems}, pp.\
  513--520.

\bibitem[\protect\citeauthoryear{Hall and Tajvidi}{Hall and
  Tajvidi}{2002}]{hall2002permutation}
Hall, P. and N.~Tajvidi (2002).
\newblock Permutation tests for equality of distributions in high-dimensional
  settings.
\newblock {\em Biometrika\/}~{\em 89\/}(2), 359--374.

\bibitem[\protect\citeauthoryear{Heckel and B{\"o}lcskei}{Heckel and
  B{\"o}lcskei}{2015}]{heckel2015robust}
Heckel, R. and H.~B{\"o}lcskei (2015).
\newblock Robust subspace clustering via thresholding.
\newblock {\em IEEE Transactions on Information Theory\/}~{\em 61\/}(11),
  6320--6342.

\bibitem[\protect\citeauthoryear{Henze}{Henze}{1984}]{henze1984number}
Henze, N. (1984).
\newblock On the number of random points with nearest neighbour of the same
  type and a multivariate two-sample test.
\newblock {\em Metrika\/}~{\em 31}, 259--273.

\bibitem[\protect\citeauthoryear{Henze and Penrose}{Henze and
  Penrose}{1999}]{henze1999multivariate}
Henze, N. and M.~Penrose (1999).
\newblock On the multivariate runs test.
\newblock {\em The Annals of Statistics\/}~{\em 27\/}(1), 290--298.

\bibitem[\protect\citeauthoryear{Holmes and Huber}{Holmes and
  Huber}{2018}]{holmes2018modern}
Holmes, S. and W.~Huber (2018).
\newblock {\em Modern statistics for modern biology}.
\newblock Cambridge University Press.

\bibitem[\protect\citeauthoryear{Huang, Wang, Torre, Dueck, Shaffer, Bonasio,
  Murray, Raj, Li, and Zhang}{Huang et~al.}{2018}]{huang2018saver}
Huang, M., J.~Wang, E.~Torre, H.~Dueck, S.~Shaffer, R.~Bonasio, J.~I. Murray,
  A.~Raj, M.~Li, and N.~R. Zhang (2018).
\newblock Saver: gene expression recovery for single-cell rna sequencing.
\newblock {\em Nature methods\/}~{\em 15\/}(7), 539.

\bibitem[\protect\citeauthoryear{Hwang, Lee, and Bang}{Hwang
  et~al.}{2018}]{hwang2018single}
Hwang, B., J.~H. Lee, and D.~Bang (2018).
\newblock Single-cell rna sequencing technologies and bioinformatics pipelines.
\newblock {\em Experimental \& molecular medicine\/}~{\em 50\/}(8), 1--14.

\bibitem[\protect\citeauthoryear{Jaitin, Kenigsberg, Keren-Shaul, Elefant,
  Paul, Zaretsky, Mildner, Cohen, Jung, Tanay, et~al.}{Jaitin
  et~al.}{2014}]{jaitin2014massively}
Jaitin, D.~A., E.~Kenigsberg, H.~Keren-Shaul, N.~Elefant, F.~Paul, I.~Zaretsky,
  A.~Mildner, N.~Cohen, S.~Jung, A.~Tanay, et~al. (2014).
\newblock Massively parallel single-cell rna-seq for marker-free decomposition
  of tissues into cell types.
\newblock {\em Science\/}~{\em 343\/}(6172), 776--779.

\bibitem[\protect\citeauthoryear{Janson}{Janson}{1984}]{janson_fkg}
Janson, S. (1984).
\newblock Bounds on the distributions of extremal values of a scanning process.
\newblock {\em Stochastic processes and their applications\/}~{\em 18\/}(2),
  313--328.

\bibitem[\protect\citeauthoryear{Jia, Hu, Kelly, Kim, Li, and Zhang}{Jia
  et~al.}{2017}]{jia2017accounting}
Jia, C., Y.~Hu, D.~Kelly, J.~Kim, M.~Li, and N.~R. Zhang (2017).
\newblock Accounting for technical noise in differential expression analysis of
  single-cell rna sequencing data.
\newblock {\em Nucleic acids research\/}~{\em 45\/}(19), 10978--10988.

\bibitem[\protect\citeauthoryear{Jiang, Sohn, Huang, and Chen}{Jiang
  et~al.}{2018}]{jiang2018single}
Jiang, H., L.~L. Sohn, H.~Huang, and L.~Chen (2018).
\newblock Single cell clustering based on cell-pair differentiability
  correlation and variance analysis.
\newblock {\em Bioinformatics\/}~{\em 34\/}(21), 3684--3694.

\bibitem[\protect\citeauthoryear{Kozachenko and Leonenko}{Kozachenko and
  Leonenko}{1987}]{kozachenko1987sample}
Kozachenko, L.~F. and N.~N. Leonenko (1987).
\newblock Sample estimate of the entropy of a random vector.
\newblock {\em Problemy Peredachi Informatsii\/}~{\em 23\/}(2), 9--16.

\bibitem[\protect\citeauthoryear{Last and Penrose}{Last and
  Penrose}{2017}]{poisson_process_book}
Last, G. and M.~Penrose (2017).
\newblock {\em Lectures on the Poisson process}, Volume~7.
\newblock Cambridge University Press.

\bibitem[\protect\citeauthoryear{Linderman, Bjornson, Simonds, Qiu, Bruggner,
  Sheode, Meng, Plevritis, and Nolan}{Linderman et~al.}{2012}]{Linderman12}
Linderman, M.~D., Z.~Bjornson, E.~F. Simonds, P.~Qiu, R.~V. Bruggner,
  K.~Sheode, T.~H. Meng, S.~K. Plevritis, and G.~P. Nolan (2012, September).
\newblock Cytospade: high-performance analysis and visualization of
  high-dimensional cytometry data.
\newblock {\em Bioinformatics\/}~{\em 28\/}(18), 2400--2401.

\bibitem[\protect\citeauthoryear{Maaten and Hinton}{Maaten and
  Hinton}{2008}]{maaten2008visualizing}
Maaten, L. v.~d. and G.~Hinton (2008).
\newblock Visualizing data using t-sne.
\newblock {\em Journal of machine learning research\/}~{\em 9\/}(Nov),
  2579--2605.

\bibitem[\protect\citeauthoryear{Mack}{Mack}{1983}]{mack1983rate}
Mack, Y.~P. (1983).
\newblock Rate of strong uniform convergence of $k$-nn density estimates.
\newblock {\em Journal of statistical planning and inference\/}~{\em 8\/}(2),
  185--192.

\bibitem[\protect\citeauthoryear{Mack and Rosenblatt}{Mack and
  Rosenblatt}{1979}]{mack1979multivariate}
Mack, Y.~P. and M.~Rosenblatt (1979).
\newblock Multivariate $k$-nearest neighbor density estimates.
\newblock {\em Journal of Multivariate Analysis\/}~{\em 9\/}(1), 1--15.

\bibitem[\protect\citeauthoryear{Matheson, Sumner, Wals, Rapiteanu, Weekes,
  Vigan, Weinelt, Schindler, Antrobus, Costa, et~al.}{Matheson
  et~al.}{2015}]{matheson2015cell}
Matheson, N.~J., J.~Sumner, K.~Wals, R.~Rapiteanu, M.~P. Weekes, R.~Vigan,
  J.~Weinelt, M.~Schindler, R.~Antrobus, A.~S. Costa, et~al. (2015).
\newblock Cell surface proteomic map of hiv infection reveals antagonism of
  amino acid metabolism by vpu and nef.
\newblock {\em Cell host \& microbe\/}~{\em 18\/}(4), 409--423.

\bibitem[\protect\citeauthoryear{Michel, Allespach, Venzke, Fackler, and
  Keppler}{Michel et~al.}{2005}]{michel2005nef}
Michel, N., I.~Allespach, S.~Venzke, O.~T. Fackler, and O.~T. Keppler (2005).
\newblock The nef protein of human immunodeficiency virus establishes
  superinfection immunity by a dual strategy to downregulate cell-surface ccr5
  and cd4.
\newblock {\em Current Biology\/}~{\em 15\/}(8), 714--723.

\bibitem[\protect\citeauthoryear{Niculescu and Popovici}{Niculescu and
  Popovici}{2011}]{density_limit}
Niculescu, C.~P. and F.~Popovici (2011).
\newblock A note on the behavior of integrable functions at infinity.
\newblock {\em Journal of Mathematical Analysis and Applications\/}~{\em
  381\/}(2), 742--747.

\bibitem[\protect\citeauthoryear{Penrose and Yukich}{Penrose and
  Yukich}{2003}]{py}
Penrose, M.~D. and J.~E. Yukich (2003).
\newblock Weak laws of large numbers in geometric probability.
\newblock {\em The Annals of Applied Probability\/}~{\em 13\/}(1), 277--303.

\bibitem[\protect\citeauthoryear{Qiu}{Qiu}{2012}]{Qiu12}
Qiu, P. (2012, 05).
\newblock Inferring phenotypic properties from single-cell characteristics.
\newblock {\em PLoS ONE\/}~{\em 7\/}(5), e37038.

\bibitem[\protect\citeauthoryear{Rosenbaum}{Rosenbaum}{2005}]{rosenbaum2005exact}
Rosenbaum, P.~R. (2005).
\newblock An exact distribution-free test comparing two multivariate
  distributions based on adjacency.
\newblock {\em Journal of the Royal Statistical Society: Series B (Statistical
  Methodology)\/}~{\em 67\/}(4), 515--530.

\bibitem[\protect\citeauthoryear{Ross, Oran, and Cullen}{Ross
  et~al.}{1999}]{ross1999inhibition}
Ross, T.~M., A.~E. Oran, and B.~R. Cullen (1999).
\newblock Inhibition of hiv-1 progeny virion release by cell-surface cd4 is
  relieved by expression of the viral nef protein.
\newblock {\em Current biology\/}~{\em 9\/}(12), 613--621.

\bibitem[\protect\citeauthoryear{Samworth}{Samworth}{2012}]{samworth2012optimal}
Samworth, R.~J. (2012).
\newblock Optimal weighted nearest neighbour classifiers.
\newblock {\em The Annals of Statistics\/}~{\em 40\/}(5), 2733--2763.

\bibitem[\protect\citeauthoryear{Schiffman, Lin, Shi, Chen, Sohn, and
  Huang}{Schiffman et~al.}{2017}]{schiffman2017sideseq}
Schiffman, C., C.~Lin, F.~Shi, L.~Chen, L.~Sohn, and H.~Huang (2017).
\newblock Sideseq: a cell similarity measure defined by shared identified
  differentially expressed genes for single-cell rna sequencing data.
\newblock {\em Statistics in biosciences\/}~{\em 9\/}(1), 200--216.

\bibitem[\protect\citeauthoryear{Schilling}{Schilling}{1986}]{schilling1986multivariate}
Schilling, M.~F. (1986).
\newblock Multivariate two-sample tests based on nearest neighbors.
\newblock {\em Journal of the American Statistical Association\/}~{\em
  81\/}(395), 799--806.

\bibitem[\protect\citeauthoryear{Sen, Rothenberg, Mukherjee, Feng, Kalisky,
  Nair, Johnstone, Clarke, and Greenberg}{Sen et~al.}{2012}]{sen2012innate}
Sen, A., M.~E. Rothenberg, G.~Mukherjee, N.~Feng, T.~Kalisky, N.~Nair, I.~M.
  Johnstone, M.~F. Clarke, and H.~B. Greenberg (2012).
\newblock Innate immune response to homologous rotavirus infection in the small
  intestinal villous epithelium at single-cell resolution.
\newblock {\em Proceedings of the National Academy of Sciences\/}~{\em
  109\/}(50), 20667--20672.

\bibitem[\protect\citeauthoryear{Sen, Mukherjee, and Arvin}{Sen
  et~al.}{2015}]{sen2015single}
Sen, N., G.~Mukherjee, and A.~M. Arvin (2015).
\newblock Single cell mass cytometry reveals remodeling of human t cell
  phenotypes by varicella zoster virus.
\newblock {\em Methods\/}~{\em 90}, 85--94.

\bibitem[\protect\citeauthoryear{Sen, Mukherjee, Sen, Bendall, Sung, Nolan, and
  Arvin}{Sen et~al.}{2014}]{sen2014single}
Sen, N., G.~Mukherjee, A.~Sen, S.~C. Bendall, P.~Sung, G.~P. Nolan, and A.~M.
  Arvin (2014).
\newblock Single-cell mass cytometry analysis of human tonsil t cell remodeling
  by varicella zoster virus.
\newblock {\em Cell reports\/}~{\em 8\/}(2), 633--645.

\bibitem[\protect\citeauthoryear{Shi and Huang}{Shi and
  Huang}{2017}]{shi2017identifying}
Shi, F. and H.~Huang (2017).
\newblock Identifying cell subpopulations and their genetic drivers from
  single-cell rna-seq data using a biclustering approach.
\newblock {\em Journal of Computational Biology\/}~{\em 24\/}(7), 663--674.

\bibitem[\protect\citeauthoryear{Swigut, Shohdy, and Skowronski}{Swigut
  et~al.}{2001}]{swigut2001mechanism}
Swigut, T., N.~Shohdy, and J.~Skowronski (2001).
\newblock Mechanism for down-regulation of cd28 by nef.
\newblock {\em The EMBO journal\/}~{\em 20\/}(7), 1593--1604.

\bibitem[\protect\citeauthoryear{Tibshirani and Walther}{Tibshirani and
  Walther}{2005}]{tibshirani2005cluster}
Tibshirani, R. and G.~Walther (2005).
\newblock Cluster validation by prediction strength.
\newblock {\em Journal of Computational and Graphical Statistics\/}~{\em
  14\/}(3), 511--528.

\bibitem[\protect\citeauthoryear{Vasicek}{Vasicek}{1976}]{vasicek1976test}
Vasicek, O. (1976).
\newblock A test for normality based on sample entropy.
\newblock {\em Journal of the Royal Statistical Society: Series B
  (Methodological)\/}~{\em 38\/}(1), 54--59.

\bibitem[\protect\citeauthoryear{Vassena, Giuliani, Koppensteiner, Bolduan,
  Schindler, and Doria}{Vassena et~al.}{2015}]{vassena2015hiv}
Vassena, L., E.~Giuliani, H.~Koppensteiner, S.~Bolduan, M.~Schindler, and
  M.~Doria (2015).
\newblock Hiv-1 nef and vpu interfere with l-selectin (cd62l) cell surface
  expression to inhibit adhesion and signaling in infected cd4+ t lymphocytes.
\newblock {\em Journal of virology\/}, JVI--00611.

\bibitem[\protect\citeauthoryear{Wang, Huang, Torre, Dueck, Shaffer, Murray,
  Raj, Li, and Zhang}{Wang et~al.}{2018}]{wang2018gene}
Wang, J., M.~Huang, E.~Torre, H.~Dueck, S.~Shaffer, J.~Murray, A.~Raj, M.~Li,
  and N.~R. Zhang (2018).
\newblock Gene expression distribution deconvolution in single-cell rna
  sequencing.
\newblock {\em Proceedings of the National Academy of Sciences\/}~{\em
  115\/}(28), E6437--E6446.

\bibitem[\protect\citeauthoryear{Weiss}{Weiss}{1960}]{weiss1960two}
Weiss, L. (1960).
\newblock Two-sample tests for multivariate distributions.
\newblock {\em The Annals of Mathematical Statistics\/}, 159--164.

\bibitem[\protect\citeauthoryear{Xue-Kun~Song}{Xue-Kun~Song}{2000}]{xue2000multivariate}
Xue-Kun~Song, P. (2000).
\newblock Multivariate dispersion models generated from gaussian copula.
\newblock {\em Scandinavian Journal of Statistics\/}~{\em 27\/}(2), 305--320.

\bibitem[\protect\citeauthoryear{Yukich}{Yukich}{2013}]{yukichclt}
Yukich, J. (2013).
\newblock Limit theorems in discrete stochastic geometry.
\newblock In {\em Stochastic geometry, spatial statistics and random fields},
  pp.\  239--275. Springer.

\bibitem[\protect\citeauthoryear{Yukich}{Yukich}{2006}]{monograph_yukich}
Yukich, J.~E. (2006).
\newblock {\em Probability theory of classical Euclidean optimization
  problems}.
\newblock Springer.

\bibitem[\protect\citeauthoryear{Zhang and Chen}{Zhang and
  Chen}{2017}]{zhangmaxtypec}
Zhang, J. and H.~Chen (2017).
\newblock Graph-based two-sample tests for discrete data.
\newblock {\em arXiv preprint arXiv:1711.04349, 2017\/}.

\end{thebibliography}

%\clearpage
%\newpage
%\setcounter{page}{1}
%
%\section*{Supplementary Materials}
%\hspace{1cm}\\[2ex]
%
%

\end{document}